\documentclass[smallcondensed, envcountsect, envcountsame, runningheads, nospthms]{svjour3}     
\usepackage{ifthen}
\sloppy
\newboolean{spthms}
\makeatletter
\@ifclasswith{svjour3}{nospthms}{\setboolean{spthms}{false}}{\setboolean{spthms}{true}}
\makeatother
\ifspthms\usepackage{iflang}
\usepackage{fancyhdr}
\usepackage{tikz}
\usetikzlibra
    \smartqed  
    \spnewtheorem{assumption}[theorem]{Assumption}{\bfseries}{}
    \spnewtheorem{notation}[theorem]{Notation}{\bfseries}{}
    \spnewtheorem{observation}[theorem]{Observation}{\bfseries}{\it}
    \usepackage{aliascnt}
    \newcommand{\qedhere}{\qed}
\else
    \usepackage{amsthm}
    \usepackage{thmtools}
    \newcommand{\ownthmSpaceAbove}{5pt}
    \newcommand{\ownthmSpaceBelow}{5pt}
    \declaretheoremstyle[
        spaceabove=\ownthmSpaceAbove,
        spacebelow=\ownthmSpaceBelow,
        headpunct=,
        postheadspace=.5em,
    ]{definition}
    \declaretheoremstyle[
        spaceabove=\ownthmSpaceAbove,
        spacebelow=\ownthmSpaceBelow,
        headpunct=,
        postheadspace=.5em,
        bodyfont=\itshape,
    ]{theorem}
    \makeatletter
    \declaretheoremstyle[
        spaceabove=\ownthmSpaceAbove,
        spacebelow=\ownthmSpaceBelow,
        preheadhook=\renewcommand\@upn{},
        headpunct=,
        headfont=\it,
        postheadspace=.5em,
    ]{remark}
    \makeatother
    \declaretheorem[numberwithin=section,style=definition]{definition}
    \declaretheorem[style=definition,sibling=definition]{example}
    \declaretheorem[style=definition,sibling=definition]{notation}
    \declaretheorem[style=definition,sibling=definition]{assumption}
    \declaretheorem[style=theorem,sibling=definition]{theorem}
    \declaretheorem[style=theorem,sibling=definition]{corollary}
    
    \declaretheorem[style=theorem,sibling=definition]{lemma}
    \declaretheorem[style=theorem,sibling=definition]{proposition}
    
    \declaretheorem[style=remark,sibling=definition]{remark}
    
    \makeatletter
    \renewenvironment{proof}[1][\proofname]{\par
    \pushQED{\qed}%
    \normalfont \topsep6\p@\@plus6\p@\relax
    \trivlist
    \item\relax
    {\itshape
    #1}\hspace\labelsep\ignorespaces
    }{%
    \popQED\endtrivlist\@endpefalse
    }
    \makeatother
\fi

\usepackage[english]{babel}
\usepackage[utf8]{inputenc}
\usepackage{amssymb,amsmath}
\usepackage{iflang}
\usepackage{fancyhdr}
\usepackage{tikz}
\usetikzlibrary{fit,arrows,shapes,patterns,automata,calc}
\usetikzlibrary{decorations.pathreplacing}
\usepackage{tikz-cd}
\usepackage{xspace}
\usepackage{mathrsfs} 
\usepackage{mathtools} 
\usepackage{paralist}
\usepackage{rotating}
\usepackage{dsfont}

\usepackage[footnote,nomargin,marginclue,final]{fixme}
\FXRegisterAuthor{ls}{als}{LS}
\FXRegisterAuthor{sm}{asm}{SM}
\FXRegisterAuthor{tw}{atw}{TW}


\usepackage{substr}
\newdimen\tightpaperwidth
\newdimen\tightpaperheight
\tightpaperwidth=\textwidth
\tightpaperheight=\textheight
\addtolength{\tightpaperwidth}{3.8cm}
\addtolength{\tightpaperheight}{4.3cm}
\IfSubStringInString{\detokenize{trimmed}}{\jobname}{
\usepackage[marginparwidth=2cm,nohead,nofoot,
            headsep = 8mm, 
            footskip = 1cm,
            text={\textwidth,\textheight},
            paperwidth=\tightpaperwidth,
            paperheight=\tightpaperheight,
            marginparsep=1mm,
            marginparwidth=1.4cm,
            footskip=1cm,
            centering
            ]{geometry}
}{}%

\usepackage{enumerate}


\newcommand{\C}{\ensuremath{\mathcal{C}}\xspace}
\newcommand{\D}{\ensuremath{\mathcal{D}}\xspace}
\newcommand{\Set}{\ensuremath{\mathsf{Set}}\xspace}

\newcommand{\Nom}{\ensuremath{\mathsf{Nom}}\xspace}
\newcommand{\GSet}{\ensuremath{\perms(\V)\text{-}\text{set}}\xspace}

\newcommand{\colim}{\ensuremath{\operatorname{colim}}\xspace}
\newcommand{\Coalg}{\ensuremath{\mathsf{Coalg}}\xspace}
\newcommand{\Coalgf}{\ensuremath{\mathsf{Coalg}_\mathsf{f}}\xspace}

\newcommand{\argument}{\_\!\_}

\newcommand{\Id}{\ensuremath{\textnormal{Id}}\xspace}
\newcommand{\Pot}{\ensuremath{{\mathcal{P}}}\xspace}
\newcommand{\Potf}{\ensuremath{{\mathcal{P}_\textnormal{\sffamily f}}}\xspace}
\newcommand{\Potfs}{\ensuremath{{\mathcal{P}_\textnormal{\sffamily fs}}}\xspace}
\newcommand{\Bag}{\ensuremath{{\mathcal{B}\xspace}}}

\newcommand{\Cl}{\ensuremath{{\mathscr{C}\hspace{-2pt}\ell}}}
\newcommand{\V}{\ensuremath{\mathcal{V}}}
\newcommand{\N}{\ensuremath{\mathbb{N}}}
\newcommand{\Z}{\ensuremath{\mathbb{Z}}}
\renewcommand{\Im}{\op{Im}}
\newcommand{\id}{\ensuremath{\textnormal{id}}\xspace}

\newcommand{\inj}{\ensuremath{\mathsf{in}}\xspace}
\newcommand{\outl}{\ensuremath{\operatorname{\sf outl}}}
\newcommand{\outr}{\ensuremath{\operatorname{\sf outr}}}

\newcommand{\op}[1]{\ensuremath{\operatorname{\sf #1}}\xspace}
\newcommand{\supp}{\op{supp}}

\newcommand{\sym}{\ensuremath{\mathfrak{S}}\xspace}
\newcommand{\perms}{\ensuremath{\sym_\mathsf{f}}\xspace}

\usepackage{pifont}
%
%

%
%
\newcommand{\fr}{\mathbin{\#}}
\newcommand{\fix}{\mathop{\mathsf{fix}}\xspace}
\newcommand{\Fix}{\mathop{\mathsf{Fix}}\xspace}
\renewcommand{\l}{\ensuremath{\lambda}\xspace}
\def\hra{\hookrightarrow}
\newcommand{\epito}{\twoheadrightarrow}
\newcommand{\abstr}[1]{{\ensuremath{\left<#1\right>}}}
\newcommand{\Real}{\mathds{R}}

\usepackage{newunicodechar}
\newunicodechar{×}{\ensuremath{\times}}
\newunicodechar{≤}{\ensuremath{\le}}
\newunicodechar{→}{\ensuremath{\to}}
\newunicodechar{⊗}{\ensuremath{\otimes}}
\newunicodechar{〈}{\ensuremath{\langle}}
\newunicodechar{〉}{\ensuremath{\rangle}}
\newunicodechar{∀}{\ensuremath{\forall}}
\newunicodechar{∃}{\ensuremath{\exists}}

\pgfdeclarelayer{bg}    
\pgfdeclarelayer{middle}
\pgfsetlayers{bg,middle,main}  

\tikzset{commutative diagrams/diagrams={
    rounded corners,
}}

\tikzset{shiftarr/.style={
        rounded corners,%
        to path={--([#1]\tikztostart.center)
                     -- ([#1]\tikztotarget.center) \tikztonodes
                     -- (\tikztotarget)},
}}

\tikzset{shiftarrDR/.style={
        rounded corners,%
        to path={--([#1]\tikztostart.center)
                     |- ([#1]\tikztotarget.center) \tikztonodes
                     -- (\tikztotarget)},
}}
\tikzset{
    compactcd/.style={
        row sep = 1cm,
        column sep = 1cm,
    }
}

\newcommand{\descto}[3][]{
    \arrow[draw=none]{#2}[description,fill=none,#1]{#3}
}

%
%
\newcommand{\takeout}[1]{\empty}

\numberwithin{equation}{section}

\begin{document}

\title{Regular Behaviours with Names\thanks{This work forms part of the DFG-funded project COAX (MI~717/5-1 and SCHR~1118/12-1)}} 
\subtitle{On Rational Fixpoints of Endofunctors on Nominal Sets}


\author{Stefan Milius\and
        Lutz Schröder\and
        \mbox{Thorsten Wißmann}
}

\authorrunning{S.~Milius\and L.~Schröder\and
  T.~Wißmann} 
\dedication{In fond memory of our colleague and mentor Horst Herrlich}

\institute{Lehrstuhl für Informatik 8 (Theoretische Informatik) \\
  FAU Erlangen-Nürnberg \\
  \email{mail@stefan-milius.eu, lutz.schroeder@fau.de, thorsten.wissmann@fau.de}
%
}

\date{
Preliminary Version
}

\maketitle

\begin{abstract}
Nominal sets provide a framework to study key notions of syntax and
semantics such as fresh names, variable binding and
$\alpha$-equivalence on a conveniently abstract categorical
level. Coalgebras for endofunctors on nominal sets model, e.g.,
various forms of automata with names as well as infinite terms with
variable binding operators (such as $\lambda$-abstraction). Here, we
first study the behaviour of orbit-finite coalgebras for functors
$\bar F$ on nominal sets that lift some finitary set functor $F$. We
provide sufficient conditions under which the rational fixpoint
of~$\bar F$, i.e.~the collection of all behaviours of orbit-finite
$\bar F$-coalgebras, is the lifting of the rational fixpoint
of~$F$. Second, we describe the rational fixpoint of the quotient
functors: we introduce the notion of a sub-strength of an endofunctor
on nominal sets, and we prove that for a functor $G$ with a
sub-strength the rational fixpoint of each quotient of $G$ is a
canonical quotient of the rational fixpoint of $G$. As applications,
we obtain a concrete description of the rational fixpoint for functors
arising from so-called binding signatures with exponentiation, such as
those arising in coalgebraic models of infinitary $\lambda$-terms and
various flavours of automata.
\keywords{Nominal sets \and final coalgebras \and rational fixpoints \and lifted functors}
\end{abstract}


\section{Introduction}
\label{intro}

Nominal sets (or sets with atoms) were introduced by Mostowski and
Fraenkel in the 1920s and 1930s as a permutation model for set
theory. They are sets equipped with an action of the group of finite
permutations on a given fixed set $\V$ of atoms (playing the roles of
names or variables in applications). Gabbay and Pitts~\cite{gp99}
coined the term \emph{nominal sets} for such sets, and use them as a
convenient framework for dealing with binding operators, name
abstraction and structural induction. The notion of support of a
nominal set allows one to define the notions of ``free'' and ``bound''
names abstractly (we recall this in Section~\ref{sec:nom}). For
example, in order to deal with variable binding in the \l-calculus one
considers the functor
\[
L_\alpha X= \V + [\V]X + X \times X
\]
on $\Nom$, the category of nominal sets, expressing the type of term constructors (note that the abstraction functor $[\V]X$ is a quotient of $\V \times X$ modulo renaming ``bound'' variables). Gabbay and Pitts proved that the initial algebra for $L_\alpha$ is formed by all \l-terms modulo $\alpha$-equivalence. This implies that in lieu of having to deal syntactically with the subtle issues arising in the presence of free and bound variables in inductive definitions on terms, one can simply use initiality as a definition principle.  

Recently, Kurz et al.~\cite{nomcoalgdata} have characterized the final coalgebra for $L_\alpha$ (and more generally, for functors arising from so-called binding signatures): it is carried by the set of all infinitary \l-terms (i.e.~finite or infinite \l-trees) with finitely many free variables modulo $\alpha$-equivalence. This then allows defining operations on infinitary \l-terms by coinduction, for example substitution and operations that assign to an infinitary \l-term its normal form computations (e.g.~the B\"ohm, Levy-Longo, and Berarducci trees of a given infinitary \l-term). 
 
But while the final coalgebra of a functor $F$ collects the behaviour
of \emph{all} coalgebras, one is often interested only in behaviours
of coalgebras whose carrier admits a \emph{finite} representation; in
the case of nominal sets this means that the carrier is
\emph{orbit-finite}. In general, for a finitary endofunctor $F$ on a
locally finitely presentable category, the behaviour of $F$-coalgebras
with a finitely presentable carrier is captured by the notion of
\emph{rational fixpoint} for $F$
(see~\cite{iterativealgebras,m_linexp}). This fixpoint lies between
the initial algebra and the final coalgebra for $F$; as a coalgebra,
it is characterized as the final \emph{locally finitely presentable}
coalgebra\footnote{A coalgebra is locally finitely presentable (lfp)
  if every state in it generates a finitely presentable
  subcoalgebra. We are aware of the terminological clash with locally
  finitely presentable \emph{categories} but it seems contrived to
  call coalgebras with the mentioned property by any other
  name.}. Examples of rational fixpoints 
include the sets of regular languages, of eventually periodic and
rational streams, respectively, and of rational formal
power-series. For a polynomial endofunctor $F_\Sigma$ on sets
associated to the signature $\Sigma$, the rational fixpoint consists
of Elgot's \emph{regular} $\Sigma$-trees~\cite{elgot}, i.e.~those
(finite and infinite) $\Sigma$-trees that have only finitely many
different subtrees (up to isomorphism). Recently, Milius and
Wißmann~\cite{mw15} gave a description of the rational fixpoint of
$L_\alpha$ on $\Nom$; it is formed by all rational \l-trees modulo
$\alpha$-equivalence.

In this paper we extend the latter result to a description of the
rational fixpoint for an axiomatically defined class of functors. This
class (properly) includes all \emph{binding functors}, i.e.~functors
arising from binding signatures, but also the finite power-set functor
and exponentiation by orbit-finite strong nominal sets, and our class
is closed under coproducts, finite products, composition and quotients
of functors. Unsurprisingly, in the special case of a functor for a
binding signature, the rational fixpoint is formed by the rational
trees over the given binding signature modulo
$\alpha$-equivalence. However, the proof of our more general result is
surprisingly non-trivial, and not related to the one given
in~\cite{mw15} for the special case $L_\alpha$. Instead we take a
fresh approach and first consider endofunctors $\bar F$ on $\Nom$ that
are a lifting of some finitary endofunctor $F$ on sets. In general, it
is easy to see that for such functors $\bar F$ the initial algebra is
a lifting of the inital $F$-algebra. However, the final coalgebra does
not lift in general; for example, for the $\Nom$-functor
$LX = \V + \V \times X + X \times X$, the final coalgebra of the
underlying $\Set$-functor consists of all \l-trees, and the final
$L$-coalgebra in $\Nom$ consists of all \l-trees with finitely many
variables. The rational fixpoint of $L$, on the other hand, does, by
our results, lift from $\Set$ to $\Nom$; however, this does not hold
for arbitrary liftings of finitary functors. We introduce the notion
of a \emph{localizable} lifting (Definition~\ref{def:localizable}) and we
prove that the rational fixpoint of a localizable lifting $\bar F$ on
$\Nom$ is a lifting of the rational fixpoint of $F$ on sets
(Theorem~\ref{thm:lift}).

In order to characterize the rational fixpoint of functors that make
use of the nominal structure, like $L_\alpha$, we then turn our
attention to quotients of a functor $G$ on $\Nom$. In fact, we introduce the
notion of a \emph{sub-strength} (Definition~\ref{def:sstr}) of an endofunctor
on $\Nom$, and we prove that whenever $G$ is equipped with a
sub-strength then the rational fixpoint of any quotient of $G$ is a
canonical quotient coalgebra of the rational fixpoint of $G$
(Corollary~\ref{cor:rho-quotient}).

We will then see that the combination of Theorem~\ref{thm:lift} and
Corollary~\ref{cor:rho-quotient} allows us to obtain the desired description
of the rational fixpoint of a functor arising from a binding signature
in combination with exponentiation by an orbit-finite strong nominal
set. As a special case we obtain that $\varrho L_\alpha$ is formed by
those $\alpha$-equivalence classes of \l-trees which contain at least
one rational \l-tree.

The fact that our results cover exponentiation, which occurs
prominently in functors for various automata models, is based on a
construction that identifies exponentiation by an orbit-finite strong
exponent as a quotient of a polynomial functor.

\section{Preliminaries}
\label{prelim}

We summarize the requisite background on permutations, nominal sets,
and rational fixpoints of functors. We assume that readers are
familiar with basic notions of category theory and with algebras and
coalgebras for an endofunctor, but start with a terse review of the
latter.

Recall that a \emph{coalgebra} for an endofunctor $F: \C \to \C$ is a
pair $(C, c)$ consisting of an object $C$ of $\C$ and a morphism
$c: X \to FX$ called the \emph{structure} of the coalgebra. A
coalgebra \emph{homomorphism} from $(C, c)$ to $(D, d)$ is a
$\C$-morphism $f: C \to D$ such that $d \cdot f = Ff \cdot c$. A very
important concept is that of a \emph{final} coalgebra, i.\,e. an
$F$-coalgebra $t: \nu F \to F(\nu F)$ such that for every
$F$-coalgebra $(C,c)$ there exists a unique homomorphism
$c^\dagger: (C,c) \to (\nu F, t)$. Final coalgebras exist under mild
assumptions on $\C$ and $F$, e.g.~whenever $\C$ is locally presentable
and $F$ is accessible~\cite{mp89}.

Intuitively, an $F$-coalgebra $(C,c)$ can be thought of as a dynamic
system with an object $C$ of states and with observations about the
states (e.g.~output, next states etc.) given by $c$. The type of
observations that can be made about a dynamic systems is described by
the functor $F$. We denote by $\Coalg\, F$ the category of
$F$-coalgebras and their homomorphisms. For more intuition and
concrete examples we refer the reader to introductory texts on
coalgebras~\cite{rutten00,coalgebratutorial,adamek05}.

\begin{assumption}
  Throughout the paper, all \Set-functors $F$ are w.l.o.g.~assumed to
  preserve monos~\cite{Barr93}; for convenience of notation, we will
  in fact sometimes assume that $F$ preserves subset inclusions.
\end{assumption}

\subsection{Permutations}\label{sec:perms}

We first need a few basic observations about permutations, in
particular that every permutation on an infinite set $X$ can be
restricted to each finite subset of $X$.

\begin{definition}
  For a (not necessarily finite) permutation $f: X\to X$ and a finite
  subset $W\subseteq X$, we define the \emph{restriction} of $f$ to
  $W$ as
    \begin{equation}
      \label{permutationRestriction}
      f\vert_W(v) = \begin{cases}
        f(v) & v\in W\\
        f^{-n}(v) & n\ge 0\text{ minimal s.t. }f^{-n}(v) \notin f[W].
      \end{cases}
    \end{equation}
\end{definition}
\noindent Intuitively, the second case of $f\vert_W$ searches
backwards along $f$ for some value that is not used by the first case;
this is visualized in Figure~\ref{figPermRestriction}.
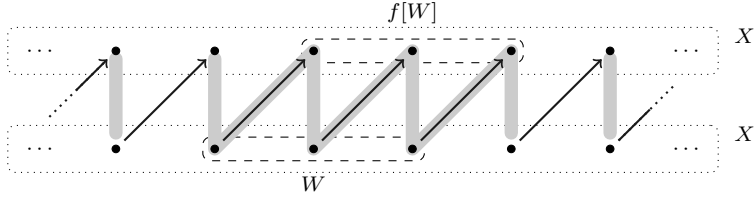
\begin{figure}
\centering
\begin{tikzpicture}[
    x=1.3cm,y=1.3cm,
    element/.style = {
        draw,
        fill,
        shape=circle,
        inner sep = 0cm,
        minimum width=1mm,
    },
    f/.style = {
        draw=black!90,
        thick,
        ->,
        shorten <= 1mm,
        shorten >= 1mm,
    },
    restriction/.style = {
        draw = black!20,
        cap=round,
        line join=round,
        thick,
        line width = 5pt,
        shorten <= 6pt,
        shorten >= 3pt,
    },
    set/.style = {
        fill = none,
        draw=black,
        dashed,
        rounded corners= 1mm,
    },
    ]
    \coordinate (leftend) at (0, 0);
    \foreach \x [count = \offset ]in {b,...,g} {
        \node[element] (\x top) at (\offset,1) {};
        \node[element] (\x bot) at (\offset,0) {};
    }
    \coordinate (rightend) at (1+\offset, 1);
    \node (topLeftDots) [left of=btop] {\(\cdots\)};
    \node (topRightDots) [right of=gtop] {\(\cdots\)};
    \node (botLeftDots) [left of=bbot] {\(\cdots\)};
    \node (botRightDots) [right of=gbot] {\(\cdots\)};
    \begin{scope}
        \foreach \from/\to in {b/c,c/d,d/e,e/f,f/g} {
            \path[f] (\from bot) -- (\to top);
        }
        \path[f,shorten <= 11mm] (leftend) -- (btop);
        \path[f,-,shorten >= 11mm] (gbot) -- (rightend) ;
        \path[f,-,shorten <= 6mm,dotted] (leftend) -- (btop);
        \path[f,-,shorten >= 6mm,dotted] (gbot) -- (rightend) ;
    \end{scope}
    \begin{pgfonlayer}{bg}
        \begin{scope}[
                every label/.style={
                },
                every node/.style={
                    label distance=1mm,
                },
            ]
        \node[fit=(dtop) (ftop),set,label=above:{\footnotesize\(f[W]\)}] {};
        \node[fit=(cbot) (ebot),set,label=below:{\footnotesize\(W\)}] {};
        \node[fit=(topLeftDots) (topRightDots),
              set,dotted,inner sep=4pt,
              label={[yshift=2mm]right:{\footnotesize\(X\)}}] {};
        \node[fit=(botLeftDots) (botRightDots),
              set,dotted,inner sep=4pt,
              label={[yshift=2mm]right:{\footnotesize\(X\)}}] {};
        \end{scope}
    \end{pgfonlayer}
    \begin{scope}
    \begin{pgfonlayer}{middle}
        \draw[restriction]
                    (fbot.center) -- (ftop.center)
                    -- (ebot.center) -- (etop.center)
                    -- (dbot.center) -- (dtop.center)
                    -- (cbot.center) -- (ctop.center);
        \draw[restriction] (bbot.center) -- (btop.center);
        \draw[restriction] (gbot.center) -- (gtop.center);
    \end{pgfonlayer}
    \end{scope}
\end{tikzpicture}
    \caption{Diagrammatic illustration of \eqref{permutationRestriction}. The
    thin black arrows describe $f$ and the thick grey arrows the second case
    in the definition of $f\vert_W$.}
    \label{figPermRestriction}
\end{figure}
\begin{lemma}
    For any permutation $f: X\to X$ and finite $W\subseteq X$, $f\vert_W$ is
    a finite permutation.
\end{lemma}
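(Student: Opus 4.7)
The plan is to verify the two defining properties of a finite permutation: bijectivity of $f\vert_W\colon X\to X$ and agreement with the identity outside a finite set. Since $W\cup f[W]$ is finite, both will follow once I know that $f\vert_W$ equals $\id$ outside $W\cup f[W]$ and permutes $W\cup f[W]$.

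The first observation is immediate from the definition: if $v\notin W\cup f[W]$, then the second clause of~\eqref{permutationRestriction} applies and the minimum is attained already at $n=0$ because $v\notin f[W]$, so $f\vert_W(v)=v$. A similarly direct check shows that $f\vert_W$ maps $W\cup f[W]$ into itself: elements of $W$ go to $f[W]$ via the first clause, while for $v\in f[W]\setminus W$ the minimal $n$ in~\eqref{permutationRestriction} is at least $1$ and, writing $f^{-(n-1)}(v)=f(w)$ with $w\in W$ (possible because $f^{-(n-1)}(v)\in f[W]$ by minimality), we obtain $f^{-n}(v)=w\in W$.

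The remaining task, and the main point of the proof, is injectivity on the finite set $W\cup f[W]$. My plan is to describe $f\vert_W$ there explicitly via a chain decomposition. Consider the directed graph $G$ on $W\cup f[W]$ with an edge $w\to f(w)$ for every $w\in W$. Since $f$ is a bijection of $X$, every vertex of $G$ has at most one in-neighbour and at most one out-neighbour, so $G$ decomposes into disjoint simple cycles (contained entirely in $W\cap f[W]$) and simple paths (each starting at a source in $W\setminus f[W]$ and ending at a sink in $f[W]\setminus W$). I then verify that $f\vert_W$ cyclically permutes the vertices of each connected component of $G$: for a cycle, and for all but the last vertex of a path, this follows immediately from the first clause of~\eqref{permutationRestriction}, since the relevant vertices are in $W$ and map by $f$ to the next vertex in the component.

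The only subtle step, and the place where~\eqref{permutationRestriction} does real work, is closing the paths. On a path $v_1\to v_2\to\cdots\to v_k$ with $v_1\in W\setminus f[W]$ and $v_k\in f[W]\setminus W$, stepping backwards along $f$ from $v_k$ traces the sequence $v_{k-1},v_{k-2},\ldots,v_1$ in $X$: all intermediate vertices $v_2,\ldots,v_{k-1}$ lie in $f[W]$ (they are targets of edges of $G$), whereas $v_1\in W\setminus f[W]$ is precisely the first backwards step to leave $f[W]$. Hence the minimum in~\eqref{permutationRestriction} is attained at $n=k-1$ and $f\vert_W(v_k)=v_1$, closing the cycle. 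This exhibits $f\vert_W$ as a product of disjoint finite cycles together with the identity on $X\setminus(W\cup f[W])$, hence as a finite permutation.
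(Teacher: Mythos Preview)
Your proof is correct but takes a different route from the paper's. The paper proceeds by direct case analysis: it first verifies well-definedness of the second clause of~\eqref{permutationRestriction} (arguing that if $f^{-n}(v)\in f[W]$ for all $n\ge 0$ then by finiteness of $f[W]$ one has $f^{-m}(v)=v$ for some $m\ge 1$, forcing $v\in W$), then checks injectivity and surjectivity separately via case splits on membership in $W$ and $f[W]$, and finally notes finiteness. Your approach is more structural: you decompose the functional graph on $W\cup f[W]$ into cycles and source-to-sink paths, and read off that $f\vert_W$ cyclically permutes each component. This is arguably more illuminating, since it exhibits the explicit cycle structure of $f\vert_W$ (essentially the content of Remark~\ref{permFactor} and Figure~\ref{figPermRestriction}) and handles well-definedness, injectivity, and surjectivity in one stroke. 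One small expositional wrinkle: in your second paragraph you invoke ``the minimal $n$'' for $v\in f[W]\setminus W$ before you have established that such an $n$ exists; this is harmless because your later path analysis supplies exactly that (the minimum is $k-1$), but it renders the earlier paragraph largely redundant.
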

\begin{proof}
  We first show that $f|_W(v)$ is indeed defined for all $v$: assume
  that the second case in~\eqref{permutationRestriction} does not
  apply, i.e.~$v \notin W$ with $f^{-n}(v) \in f[W]$ for all $n\ge 0$.
  By finiteness of $f[W]$, we then have $f^{-m}(v) = v$ for some
  $m\ge 1$ and therefore $f(v) = f(f^{-m}(v)) \in f[W]$, which implies
  that $v \in W$, so the first case in~\eqref{permutationRestriction}
  applies.

  For injectivity, let $f\vert_W(u) = f\vert_W(v)$ for $u,v\in \V$ and
  distinguish the following cases:
    \begin{itemize}
    \item For $u,v \in W$, $f(u) =f(v)$ and so $u=v$ as required.
    \item For $u\in W$, $v\notin W$, we have
      $f(u) = f^{-n}(v) \notin f[W]$, contradiction.
    \item For $u,v \notin W$, we have $f^{-n}(u) = f^{-m}(v)$, with $n,m$
    minimal.
    \begin{itemize}
    \item If $n = m$, then $u=v$ as required.
    \item If $n \neq m$, w.l.o.g.~$n > m$, then
      $f(v) = f^{-(n-m-1)}(u) \in f[W]$ by minimality of $n$, since
      $n-m-1 \ge 0$. This implies $v \in W$, contradiction.
    \end{itemize}
  \end{itemize}
  For surjectivity, let $v \in X$.
    \begin{itemize}
    \item If $v\in f[W]$, then $f^{-1}(v) \in W$ and thus $f\vert_W(f^{-1}(v)) =
    v$.
  \item If $v\notin f[W]$, then let $k\ge 0$ be minimal such
    that~$f^{k}(v) \notin W$. Such a $k$ exists because $W$ is finite
    and $v\not \in f[W]$. So we have
    $f\vert_W(f^k(v)) = f^{-k}(f^k(v)) = v$ because firstly
    $f^k(v) \not \in W$, and secondly for all $n < k$,
    \[
        f^{-n}(f^k(v)) = f^{k-n}(v) \in f[W].
        \tag*{\qedhere}
    \]

   \end{itemize}
   This shows that $f|_W$ is a permutation. To see that $f|_W$ is
   finite, note that $f\vert_W(v) = v$ for $v \notin W\cup f[W]$,
   which is a finite set.
 \end{proof}
\begin{remark} \label{permFactor}
    In summary, we have: \begin{inparaenum}
        \item $f\vert_W[W] = f[W]$,
        \item $f\vert_W[f[W]\setminus W] = W \setminus f[W]$, and
        \item $f\vert_W$ fixes every element that is not contained in
          $W$ or $f[W]$.  \end{inparaenum}
    Moreover, the permutation
    \begin{equation*}
      g := f\vert_W^{-1} \cdot f
    \end{equation*}
    maps any $v\in W$ to $g(v) = f\vert_W^{-1}(f(v)) = v$ by
    \eqref{permutationRestriction}. Since $f\vert_W \cdot g = f$, this
    means that we can factor any permutation $f$ into a finite
    permutation $f\vert_W$ and a permutation $g$ that fixes $W$.
\end{remark}
\begin{remark} \label{permFactorComposition} Restriction is compatible
  with composition in the following sense: for permutations $f,h$ and
  finite $W\subseteq X$, we have
    \[
        (f\cdot h)\vert_W(v)
            = f\cdot h(v)
            = f\vert_{h[W]} \cdot h(v)
            = f\vert_{h[W]} \cdot h\vert_{W}(v)
            \quad\text{for all }v\in W
    \]
    using \eqref{permutationRestriction} multiple times. 
\end{remark}

\subsection{Nominal Sets}
\label{sec:nom}

We now briefly recall the key definitions in the theory of nominal sets; see~\cite{Pitts13} for a detailed introduction.

Recall that given a monoid (or more specifically a group) $M$, an
\emph{$M$-set} is a set $X$ equipped with a left action of $M$, which
we denote by mere juxtaposition or by the infix operator
$\cdot$\,. The $M$-sets are the Eilenberg-Moore algebras of the monad
$M\times(-)$\,, which has the unit $\eta(x)=(e,x)$ and
multiplication $\mu(n,(m,x))=(n m,x)$ where $e$ is the unit of
$M$. Given $M$-sets $(X, \cdot)$ and $(Y,\ast)$, a map $f:X\to Y$ is
\emph{equivariant} if $\pi\ast f(x)=f(\pi\cdot x)$ for all
$\pi\in M$, $x\in X$. $M$-sets and equivariant maps form a category,
$M$-set.

We fix a set $\V$ of (variable) \emph{names} (or \emph{atoms}). As
usual, the symmetric group $\sym(\V)$ is the group of all permutations
of $\V$; we denote by $\perms(\V)$ the subgroup of finite permutations
of $\V$, i.e.~the subgroup of $\sym(\V)$ generated by the
transpositions. We have an obvious left action of $\perms(\V)$ on $\V$
given by $\pi\cdot v = \pi (v)$ for any $\pi \in \perms(\V)$ and
$v\in \V$.  Given a $\perms(\V)$-set $X$, we define
\begin{equation*}
  \fix(x)=\{\pi\in \perms(\V)\mid \pi\cdot x=x\}\text{ and }
  \Fix(A)=\{\pi\in \perms(\V)\mid \pi\cdot x=x\text{ for all $x\in A$}\}
\end{equation*}
for $x\in X$ and $A\subseteq X$. We say that a set $A\subseteq\V$ is a
\emph{support} of $x\in X$ or that \emph{$A$ supports $x$} if
\begin{equation*}
  \Fix(A)\subseteq\fix(x),
\end{equation*}
i.e.\ if any permutation that fixes all names in $A$ also fixes
$x$. Moreover, $x\in X$ is \emph{finitely supported} if there exists a
finite set of names that supports $x$. In this case, it can be shown (see e.g.~\cite{Pitts13})
that $x$ has a least support, denoted $\supp(x)$ and called the
\emph{support of $x$}. We say that $v\in\V$ is \emph{fresh} for $x$, and
write $v\fr x$, if $v\in \V \setminus\supp(x)$.

A \emph{nominal set} is a $\perms(\V)$-set $(X,\cdot)$ (or just $X$)
such that all elements of $X$ are finitely supported. We denote by
$\Nom$ the full subcategory of $\GSet$ spanned by the nominal sets. We
have forgetful functors $V:\GSet\to\Set$ and $U:\Nom\to\Set$. Note
that for each nominal set $X$, the function $\supp: X \to \Potf(\V)$
mapping each element to its (finite) support is an equivariant map.

\begin{example}
  \begin{enumerate}[(1)]
  \item The set $\V$ of names with the group action
    $\pi\cdot v = \pi (v)$ is a nominal set; for each $v\in \V$ the
    singleton $\{v\}$ supports $v$.
  \item Every ordinary set $X$ can be made into a nominal set $DX$
    ($D$ for \emph{discrete}) by equipping it with the \emph{trivial
      group action} (also called the \emph{trivial} or \emph{discrete
      nominal structure}) $\pi\cdot x = x$ for all $x\in X$ and
    $\pi \in \perms(\V)$. So each $x\in DX$ has empty support.
  \item The finite \l-terms form a nominal set with the group action
    given by renaming of (free as well as bound!)
    variables~\cite{GabbayPitts99}. The support of a \l-term is the
    set of all variables that occur in it. In contrast, the set of all
    (potentially infinite) \l-trees is not nominal since \l-trees with
    infinitely many variables do not have finite support. However, the
    set of all \l-trees with finitely many variables is nominal.
  \item Given a nominal set $X$, the set $\Potf(X)$ of finite subsets
    of $X$ equipped with the point-wise action of $\perms(\V)$ is a
    nominal set. The support $\supp(Y)$ of $Y\in\Potf(X)$ is the union
    $\bigcup_{x\in Y}\supp(x)$. 
    In particular, the support of each finite $W \in\Potf( \V)$ is $W$
    itself.  Note that $\Pot(\V)$ with the point-wise action is not a
    nominal set because any subset of $\V$ that is neither finite nor
    cofinite fails to be finitely supported. However, the set
    $\Potfs(X)\subseteq \Pot(X)$ of finitely supported subsets of $X$ is a
    nominal set.
  \end{enumerate}
\end{example}

\begin{remark} \label{yetAnotherRemark}
  \begin{enumerate}[(1)]
  \item For an equivariant map $f: X \to Y$ between nominal sets, we
    have $\supp(f(x)) \subseteq \supp(x)$ for any $x\in X$.  To see
    this, let $\pi \in \Fix(\supp(x))$. Then
    \( \pi\cdot f(x) = f(\pi \cdot x) = f(x) \),
    so $\supp(x)$ also supports $f(x)$ and thus
    $\supp(f(x)) \subseteq \supp(x)$.
  \item For $\pi \in \perms(\V)$ and a $\perms(\V)$-set $X$ we denote
    by $\pi_X$ the bijection $X→ X$ defined by
    $x \mapsto \pi \cdot x$. Note that $\pi_X$ fails to be equivariant
    unless $X$ is discrete. However, $\pi$ commutes with all
    equivariant maps $f:X \to Y$ in the sense that
    $f \pi_X = \pi_Y f$; in other words: $\pi: U\to U$ is a natural
    isomorphism.
  \item\label{item:inf-permute} Every nominal set $X$ can be uniquely
    extended to a $\sym(\V)$-set~\cite{gmm06}. By the discussion in
    Section~\ref{sec:perms}, the $\sym(\V)$-action can be defined as
    $\pi\cdot x=\pi|_{\supp(x)}\cdot x$ for $\pi\in\sym(\V)$.  By the
    first item, maps $f$ that are equivariant w.r.t.\ the action of
    $\perms(\V)$ are equivariant also w.r.t.\ the extended action: For
    $\pi\in\sym(\V)$, we have
    $f(\pi\cdot x)=f(\pi|_{\supp(x)}\cdot x)=\pi|_{\supp(x)}\cdot
    f(x)=\pi|_{\supp(f(x))}\cdot f(x)=\pi\cdot f(x)$,
    using in the second-to-last step that $\pi|_{\supp(x)}$ and
    $\pi|_{\supp(f(x))}$ agree on $\supp(f(x))$.
  \end{enumerate}
\end{remark}
\noindent
The category of nominal sets is (equivalent to) a
Grothendieck topos (the so-called Schanuel topos), and so it has rich
categorical structure~\cite{gmm06}. In the following we recall the
structural properties needed in the current paper.

Monomorphisms and epimorphisms in $\Nom$ are precisely the injective and surjective equivariant maps, respectively. It is not difficult to see that every epimorphism in $\Nom$ is strong, i.e., it has the unique diagonalization property w.r.t.~any monomorphism: given an epimorphism $e: A \epito B$, a monomorphism $m: C \hra D$ and $f: A \to C$, $g: B \to D$ such that $g \cdot e = m \cdot f$, there exists a unique diagonal $d: B \to C$ with $d \circ e = f$ and $m \circ d = g$. 

Furthermore, $\Nom$ has image-factorizations; this means that every equivariant map $f: A \to C$ factorizes into an epimorphism $e$ followed by a monomorphism $m$:
    \[
        \begin{tikzcd}[compactcd]
            A \arrow[->>]{r}{e}
        \arrow[shiftarr={yshift=3.5ex} ]{rr}{f}
            &
            B \arrow[hook]{r}{m} &
            C
        \end{tikzcd}
    \]
    Note that the intermediate object $B$ is (isomorphic to) the image
    $f[A]$ in $B$ with the induced action. For an endofunctor $F$ on
    $\Nom$ preserving monos, this factorization system lifts to
    $\Coalg\, F$: every $F$-coalgebra homomorphism $f$ has a
    factorization $f = m \cdot e$ where $e$ and $m$ are $F$-coalgebra
    homomorphisms that are epimorphic and monomorphic in $\Nom$,
    respectively.

Being a Grothendieck topos, $\Nom$ is complete and
cocomplete. Moreover, colimits and finite limits are formed as in
$\Set$, and in fact the forgetful functor $U: \Nom \to\Set$ creates
all colimits and all finite limits~\cite{Pitts03}. Furthermore, $\Nom$
is a locally finitely presentable category~\cite{gu71,ar94}. Recall
that a locally finitely presentable category is a cocomplete category
$\C$ having a set $\mathcal{A}$ of finitely presentable objects such
that every object of $\C$ is a filtered colimit of objects from
$\mathcal{A}$. Petrişan~\cite[Proposition~2.3.7]{petrisanphd} shows
that the finitely presentable objects of $\Nom$ are precisely the
orbit-finite nominal sets:
\begin{definition}
  Given a nominal set $X$ and $x \in X$, the set
  $\{ \pi \cdot x \mid\pi\in \perms(\V)\}$ is called the \emph{orbit}
  of $x$.  A nominal set $(X,\cdot)$ is said to be \emph{orbit-finite} if
  it has only finitely many orbits.
\end{definition}
\noindent The notion of orbit-finiteness plays a central role in our
paper since the rational fixpoint of an endofunctor $F$ on $\Nom$ can
be constructed as the filtered colimit of all $F$-coalgebras with
orbit-finite carrier.

We now collect a few easy properties of orbit-finite sets that we are going to
need. First of all, orbit-finite sets are closed under finite products and
subobjects; hence, under all finite limits (see \cite[Chapter~5]{Pitts13}). And
they are clearly closed under finite coproducts (a well known property of finitely presentable objects) and quotient objects (since the codomain of a surjective equivariant map clearly has fewer orbits); hence under all
finite colimits. 

Generally, for the value of $\pi\cdot x$, it matters only what $\pi$
does on the atoms in $\supp(x)$:
\begin{lemma}\label{onlySuppMatters}
    For $x\in (X,\cdot)$ and any $\pi,\sigma \in\perms(\V)$ with $\pi(v) =
    \sigma(v)$ for all $v\in \supp(x)$, we have $\pi\cdot x = \sigma\cdot x$.
\end{lemma}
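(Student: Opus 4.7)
The plan is to reduce the statement directly to the defining property of $\supp(x)$. The hypothesis that $\pi$ and $\sigma$ agree on $\supp(x)$ is equivalent to saying that the permutation $\sigma^{-1} \cdot \pi$ fixes every element of $\supp(x)$, i.e.\ lies in $\Fix(\supp(x))$. Since $\perms(\V)$ is a group, we have $\sigma^{-1}\cdot \pi \in \perms(\V)$, so this membership is literally the one used in the definition of $\Fix$.

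Next I would invoke the definition of support: because $\supp(x)$ supports $x$, we have $\Fix(\supp(x)) \subseteq \fix(x)$, hence $(\sigma^{-1}\cdot \pi) \cdot x = x$. Applying $\sigma$ from the left (using that the action of $\perms(\V)$ on $X$ makes $X$ an Eilenberg--Moore algebra, so $\sigma \cdot (\tau \cdot x) = (\sigma \cdot \tau)\cdot x$) yields $\pi \cdot x = \sigma \cdot x$, which is the desired conclusion.

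I do not anticipate any real obstacle: the lemma is essentially an unpacking of the definition of support, combined with the group structure on $\perms(\V)$. The only point that might warrant a short comment is that $\sigma^{-1}\cdot \pi$ is indeed a finite permutation, which is immediate from closure of $\perms(\V)$ under composition and inverses. Note that an entirely analogous argument (using Remark~\ref{yetAnotherRemark}\,(\ref{item:inf-permute})) would also work if $\pi,\sigma$ were allowed to range over $\sym(\V)$ rather than $\perms(\V)$, but the stated version suffices.
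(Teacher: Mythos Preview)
Your proof is correct and essentially identical to the paper's: the paper simply observes that $\pi^{-1}\sigma\in\Fix(\supp(x))\subseteq\fix(x)$, which is the same reduction to the defining property of support that you carry out (with the roles of $\pi$ and $\sigma$ swapped).
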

\begin{proof}
  Under the given assumptions,
  $\pi^{-1}\sigma\in\Fix(\supp(x))\subseteq\fix(x)$.
\end{proof}

\begin{lemma}\label{orbitsamesupp}
    For any $x_1,x_2 \in X$ in the same orbit, we have $|\supp(x_1)| = |\supp(x_2)|$.
\end{lemma}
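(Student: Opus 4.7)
The plan is to exploit the equivariance of the support map $\supp : X \to \Potf(\V)$, which was already observed in the excerpt just after the definition of nominal sets, together with the fact that the action on $\Potf(\V)$ is pointwise and the action of any $\pi\in\perms(\V)$ is a bijection on $\V$.

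First I would unfold the hypothesis: since $x_1$ and $x_2$ lie in the same orbit, there is some $\pi \in \perms(\V)$ with $x_2 = \pi \cdot x_1$. Next, equivariance of $\supp$ together with the pointwise nature of the action on $\Potf(\V)$ yields
\[
  \supp(x_2) = \supp(\pi \cdot x_1) = \pi \cdot \supp(x_1) = \pi[\supp(x_1)].
\]
Finally, because $\pi$ is a permutation of $\V$, in particular injective on $\supp(x_1)$, the image $\pi[\supp(x_1)]$ has the same cardinality as $\supp(x_1)$, giving $|\supp(x_2)| = |\supp(x_1)|$.

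There is no real obstacle: the only substantive content is the identity $\supp(\pi\cdot x) = \pi[\supp(x)]$, which either follows immediately from the noted equivariance of $\supp$, or can be proved directly via a conjugation argument ($\sigma \in \Fix(\pi[\supp(x)])$ iff $\pi^{-1}\sigma\pi \in \Fix(\supp(x))$, so both sets support $\pi\cdot x$ and conversely the analogous argument applied to $\pi^{-1}$ gives the reverse inclusion). In the write-up I would simply cite the equivariance of $\supp$ already recorded in Section~\ref{sec:nom} and conclude in one line.
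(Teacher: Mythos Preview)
Your proposal is correct and essentially identical to the paper's proof: both use equivariance of $\supp$ to obtain $\supp(\pi\cdot x)=\pi\cdot\supp(x)=\pi[\supp(x)]$, and then conclude equal cardinality since $\pi$ is a bijection on $\V$.
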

\begin{proof}
  By equivariance of $\supp$, $\pi$ induces a bijection between
  $\supp(x)$ and ${\supp(\pi\cdot x)}=\pi\cdot\supp(x)=\pi[\supp(x)]$.
\end{proof}
\begin{lemma}\label{sameorbitfaculty}
    For an element $x$ of a nominal set $X$, there are at most $|\supp(x)|!$
    many elements with support $\supp(x)$ in the orbit of $x$.
\end{lemma}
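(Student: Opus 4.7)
The plan is to show that any element in the orbit of $x$ with support $\supp(x)$ is determined by a permutation of the finite set $\supp(x)$, of which there are exactly $|\supp(x)|!$.

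Concretely, I would first observe that for any $\pi\in\perms(\V)$, the element $y = \pi\cdot x$ satisfies $\supp(y) = \pi[\supp(x)]$, as already used in the proof of Lemma~\ref{orbitsamesupp}. Hence, if $y$ is in the orbit of $x$ and $\supp(y) = \supp(x)$, then $\pi[\supp(x)] = \supp(x)$, so $\pi$ restricts to a permutation of the finite set $\supp(x)$.

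Next, by Lemma~\ref{onlySuppMatters}, whenever two permutations $\pi,\sigma \in \perms(\V)$ agree on $\supp(x)$, we have $\pi\cdot x = \sigma\cdot x$. Consequently, the element $\pi\cdot x$ depends only on the restriction $\pi|_{\supp(x)}$, viewed as a permutation of $\supp(x)$. Since $\supp(x)$ has $|\supp(x)|!$ permutations, the set of elements of the orbit of $x$ having support exactly $\supp(x)$ has cardinality at most $|\supp(x)|!$.

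No real obstacle is expected; the argument is a direct combination of Lemma~\ref{onlySuppMatters} with the equivariance property $\supp(\pi\cdot x) = \pi[\supp(x)]$. The only minor point to keep in mind is that every permutation of the finite subset $\supp(x)\subseteq\V$ does extend to a finite permutation in $\perms(\V)$ (by fixing all atoms outside $\supp(x)$), so each of the $|\supp(x)|!$ permutations of $\supp(x)$ actually arises from some $\pi\in\perms(\V)$; this shows the bound is tight in general but is not required for the stated inequality.
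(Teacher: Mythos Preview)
Your proposal is correct and follows essentially the same approach as the paper: both arguments use equivariance of $\supp$ to see that any $\pi$ with $\supp(\pi\cdot x)=\supp(x)$ restricts to a permutation of $\supp(x)$, and then invoke Lemma~\ref{onlySuppMatters} to conclude that $\pi\cdot x$ depends only on this restriction, yielding the bound $|\supp(x)|!$. Your closing remark about extendability of permutations is a nice observation but, as you note, not needed for the stated inequality.
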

\begin{proof}
  Let $\pi\in\perms(\V)$ such that $\supp(x) = \supp(\pi \cdot
  x)$. Then
  \[
    \pi[\supp(x)] = \pi \cdot \supp(x) = \supp(\pi\cdot x) = \supp(x), 
  \]
  which shows that $\pi$ restricts to a permutation of $\supp(x)$. 

  If $\sigma$ is also such that $\supp(x) = \supp(\sigma \cdot x)$ and restricts to the same permutation on $\supp(x)$ as $\pi$, then $\pi\cdot x = \sigma \cdot x$ by 
  Lemma~\ref{onlySuppMatters}. Therefore the number of elements in question is at most the
  number of permutations of $\supp(x)$, i.e.~at most $|\supp(x)|!$.
\end{proof}
\noindent One of the properties that make nominal sets interesting for
applications in computer science is that one can think of an element
$x$ of a nominal set as an abstract term and of $\supp(x)$ as the set
of free variables of $x$. It is then possible to speak about
$\alpha$-equivalence on a nominal set, and this leads to Gabbay and
Pitts' \emph{abstraction functor}~\cite[Lemma~5.1]{gp99}:
\begin{definition}
  \label{def:abstr}
  Let $X$ be a nominal set. We define $\alpha$-equivalence
  $\sim_\alpha$ as the relation on $\V×X$ defined by
    \[
    (v_1, x_1) \sim_\alpha (v_2, x_2) \ 
    \text{ if } (v_1\, z) x_1 = (v_2\, z) x_2 \text{ for }z\fr
    \{v_1,v_2, x_1, x_2\},
    \]
    where the definition of $z \fr M$ spelled out for the case of a finite set
    $M$ means that $z$ is fresh for every element of $M$.
    The $\sim_\alpha$-equivalence class of $(v,x)$ is denoted by
    $\abstr{v}x$.  The abstraction $[\V]X$ of $X$ is the quotient
    $(\V × X)/\mathord{\sim_\alpha}$ with the group action defined by
    \[
        \pi\cdot \abstr{v}x = \abstr{\pi(v)}(\pi\cdot x).
    \]
    For an equivariant map $f: X\to Y$, $[\V]f: [\V]X \to [\V]Y$ is defined by
    $\abstr{v}x \mapsto \abstr{v}(f(x))$.
\end{definition}

\subsection{The Rational Fixpoint}

\noindent Recall that by Lambek's Lemma~\cite{lambek}, the structure
maps of the initial algebra and the final coalgebra for a functor $F$
are isomorphisms, so both yield \emph{fixpoints} of $F$. Here we shall
be interested in a third fixpoint that lies between the initial
algebra and the final coalgebra, the \emph{rational fixpoint} of
$F$. The rational fixpoint can be characterized either as the initial
iterative algebra for $F$~\cite{iterativealgebras} or as the final
locally finitely presentable coalgebra for $F$~\cite{m_linexp}. We
will need only the latter description here.

The rational fixpoint can be defined for any \emph{finitary}
endofunctor $F$ on a locally finitely presentable category $\C$,
i.e.~$F$ is an endofunctor on $\C$ that preserves filtered
colimits. Examples of locally finitely presentable categories are
$\Set$, the categories of posets and of graphs, every finitary variety
of algebras (such as groups, rings, and vector spaces) and every
Grothendieck topos (such as $\Nom$). The finitely presentable objects
in these categories are: all finite sets, posets or graphs, algebras
presented by finitely many generators and relations, and, as we
mentioned before, the orbit-finite nominal sets.

Now let $F: \C \to \C$ be finitary on the locally finitely presentable
category $\C$ and consider the full subcategory $\Coalgf\, F$ of
$\Coalg\, F$ given by all $F$-coalgebras with finitely presentable
carrier.  The \emph{locally finitely presentable} $F$-coalgebras are
characterized as precisely those coalgebras that arise as a colimit of
a filtered diagram of coalgebras from $\Coalgf\, F$~\cite{m_linexp}.
It follows that the \emph{final} locally finitely presentable
coalgebra can be constructed as the colimit of \emph{all} coalgebras
from $\Coalgf\, F$. More precisely, one defines a coalgebra
$r: \varrho F \to F(\varrho F)$ as the colimit of the inclusion
functor of $\Coalgf\, F$:
$ (\varrho F, r) := \colim (\Coalgf\, F \hra \Coalg\, F).  $ Note that
since the forgetful functor $\Coalg\, F \to \C$ creates all colimits,
this colimit is actually formed on the level of $\C$.  The colimit
$\varrho F$ then carries a uniquely determined coalgebra structure $r$
making it the colimit above.

As shown in~\cite{iterativealgebras}, $\varrho F$ is a fixpoint for $F$, i.e.~its coalgebra structure $r$ is an isomorphism. From~\cite{m_linexp} we obtain that local finite presentability of a coalgebra $(C,c)$ has the following concrete characterizations: (1)~for $\C = \Set$ local finiteness, i.e.~every element of $C$ is contained in a finite subcoalgebra of $C$; (2)~for $\C = \Nom$, local orbit-finiteness, i.e.~every element of $C$ is contained in an orbit-finite subcoalgebra of $C$; (3)~for $\C$ the category of vector spaces over a field $K$, local finite dimensionality, i.e., every element of $C$ is contained in a subcoalgebra of $C$ carried by a finite dimensional subspace of $C$.
\begin{example}
  \label{ex:coalgs}
  We list a few examples of rational fixpoints; for more see~\cite{iterativealgebras,m_linexp,bms13}. 
  \begin{enumerate}[(1)]
  \item Consider the functor $FX = 2 \times X^A$ on $\Set$ where $A$ is an input alphabet and $2 = \{0,1\}$. The $F$-coalgebras are precisely the deterministic automata over~$A$ (without initial states). The final coalgebra is carried by the set $\Pot(A^*)$ of all formal languages, and the rational fixpoint is its subcoalgebra of regular languages over $A$.
  \item For $FX = \Real \times X$ on $\Set$, the final coalgebra is carried by the set $\Real^\omega$ of all real streams, and the rational fixpoint is its subcoalgebra of all eventually periodic streams, i.e.~streams $uvvv\cdots$ with $u, v \in \Real^*$. Taking the same functor on the category of real vector spaces, we obtain the same final coalgebra $\Real^\omega$ with the componentwise vector space structure, but this time the rational fixpoint is formed by all rational streams (see~\cite{rutten_rat,m_linexp}).
  \item Recall that in general algebra a finitary signature $\Sigma$ of operation symbols with prescribed arity is a sequence $(\Sigma_n)_{n < \omega}$ of sets. This give rise to an associated polynomial endofunctor $F_\Sigma$ on $\Set$ given by $F_\Sigma X = \coprod_{n < \omega} \Sigma_n \times X^n$. Its initial algebra is formed by all $\Sigma$-terms and its final coalgebra by all (finite and infinite) $\Sigma$-trees, i.e.~rooted and ordered trees such that every node with $n$ children is labelled by an $n$-ary operation symbol. And the rational fixpoint consists precisely of all \emph{rational} $\Sigma$-trees~\cite{elgot,courcelle}, i.e.~those $\Sigma$-trees that have only finitely many different subtrees up to isomorphism~\cite{ginali}.
  \item For the finite powerset functor $\Potf$, the initial algebra is the $\omega$-th step of the cumulative hierarchy of sets, i.e.~$\bigcup_{n < \omega} \Potf^n(\emptyset)$. An isomorphic description is as the set of all finite extensional trees, where a tree is called \emph{extensional} if distinct children of any vertex define non-isomorphic subtrees. A final $\Potf$-coalgebra is carried by the set of all strongly-extensional finitely branching trees, where a tree $t$ is called \emph{strongly-extensional} if for any node $x$ of $t$ no two subtrees of $t$ rooted at $x$ are \emph{tree-bisimilar}; for further explanation and details see~\cite{w05} or~\cite[Corollary~3.19]{almms14}. And the rational fixpoint of $\Potf$ is given by all rational strongly-extensional trees.   
  \item The bag functor $\Bag: \Set \to \Set$ assigns to every set $X$
    the set of all finite multisets on $X$, i.e.~the free commutative
    monoid over $X$. Here we consider trees where children of a vertex are not ordered (in constrast to $\Sigma$-trees in item~(3)), i.e.~the usual graph theoretic notion of tree. Then the initial algebra for $\Bag$ is given by all finite trees, the final coalgebra by all finitely branching ones and the rational fixpoint by all rational ones. This follows from the results in~\cite{am06}.
  \end{enumerate}
\end{example}
\noindent Note that in all the above examples, the rational fixpoint
$\varrho F$ is a subcoalgebra of the final coalgebra
$\nu F$. This need not be the case in general
(see~\cite[Example~3.15]{bms13} for a counterexample). However, we
do have the following result:
\begin{proposition} {\upshape \cite[Proposition 3.12]{bms13}}
  \label{prop:image}
  Suppose that in $\C$, finitely presentable objects are closed under strong
  quotients and that $F$ is finitary and preserves monomorphisms. Then the
  rational fixpoint $\varrho F$ is the subcoalgebra of $\nu F$ given by the
  union of the images of all coalgebra homomorphisms $c^\dagger: (C,c) \to (\nu
  F, t)$ where $(C,c)$ ranges over $\Coalgf\, F$.\footnote{In a general locally
  finitely presentable category the \emph{image} of $c^\dagger$ is obtained by
  taking a (strong epi,mono)-factorization of $c^\dagger$, and the \emph{union} is then  obtained as a directed colimit of the resulting subobjects of $(\nu F, t)$.}
\end{proposition}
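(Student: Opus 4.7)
The plan is to use the factorization system in $\Coalg\, F$ to identify each finitely presentable coalgebra with a finitely presentable subcoalgebra of $\nu F$, and then to identify the rational fixpoint with the directed colimit of these subcoalgebras.

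First I would observe that since $F$ preserves monomorphisms, the $(\text{strong epi}, \text{mono})$-factorization of $\C$ lifts to $\Coalg\, F$. Hence, for each $(C,c) \in \Coalgf\, F$, the unique morphism $c^\dagger\colon (C,c) \to (\nu F, t)$ factors as
\[
   (C,c) \xtwoheadrightarrow{e_C} (I_C, i_C) \xhookrightarrow{m_C} (\nu F, t),
\]
and the hypothesis that finitely presentable objects are closed under strong quotients ensures $I_C$ is finitely presentable, so $(I_C, i_C) \in \Coalgf\, F$. Call a coalgebra in $\Coalgf\, F$ \emph{embedded} if its structure map into $\nu F$ is a monomorphism. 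The embedded coalgebras are closed under taking joins of any pair inside $\nu F$ (since such a join is a strong quotient of the coproduct of two embedded coalgebras, hence finitely presentable), so they form a directed family of subobjects of $(\nu F, t)$. Define $U \hookrightarrow (\nu F, t)$ as their union (directed colimit), which is a subcoalgebra of $(\nu F, t)$.

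Next I would construct mutually inverse comparison morphisms between $\varrho F$ and $U$. In one direction, every $c^\dagger\colon (C,c) \to (\nu F, t)$ with $(C,c) \in \Coalgf\, F$ factors through $U$ via $m_C \cdot e_C$, and this family forms a cocone over the diagram defining $\varrho F$, yielding a canonical homomorphism $\varphi\colon \varrho F \to U$. In the other direction, each embedded coalgebra $(I_C, i_C)$ is itself in $\Coalgf\, F$, hence comes with an injection into $\varrho F$ from the colimit cocone; naturality of this cocone with respect to the inclusions among the subobjects $I_C \hookrightarrow I_{C'}$ shows these morphisms form a cocone over the directed family computing $U$, yielding a homomorphism $\psi\colon U \to \varrho F$.

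Finally I would check $\varphi$ and $\psi$ are mutually inverse by composing with the respective colimit injections; both composites agree with the identity cocone by uniqueness of the factorizations and by the fact that $c^\dagger = m_C \cdot e_C$. The main obstacle I anticipate is keeping track of the universal properties in the two different colimit descriptions — one indexed over all of $\Coalgf\, F$ and the other over the directed poset of finitely presentable subcoalgebras of $\nu F$ — and confirming that the restriction to embedded coalgebras is cofinal in $\Coalgf\, F$ in the appropriate sense so that $\varrho F$ can be recovered from this restricted family. Closure of embedded coalgebras under joins and the factorization $c^\dagger = m_C \cdot e_C$ together provide exactly this cofinality.
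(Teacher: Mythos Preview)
The paper does not prove this proposition; it is quoted from~\cite[Proposition~3.12]{bms13} and used without argument. Your reconstruction is correct and is essentially the standard proof one finds in the cited source: lift the (strong epi, mono)-factorization to $\Coalg\,F$, observe that the images $I_C$ remain finitely presentable by the closure hypothesis, and then identify $\varrho F$ with the directed union of the embedded subcoalgebras via a cofinality argument. The verification that $\psi\circ\varphi=\id$ uses exactly that $e_C:(C,c)\to(I_C,i_C)$ is a morphism in $\Coalgf\,F$, so $\iota_{I_C}\cdot e_C=\iota_C$ by the cocone property; the other composite uses that for an already embedded $(I,i)$ the factorization of $i^\dagger$ is trivial. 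There is nothing further to compare against in the present paper.
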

\noindent In particular, for a finitary functor $F$ on $\Set$ or
$\Nom$, respectively, that preserves monomorphisms, the rational
fixpoint is the union of the images in $\nu F$ of all finite (or
orbit-finite resp.)  coalgebras; in symbols:
\[
\varrho F = \hspace{-10pt}\bigcup\limits_{\text{$(C,c)$ in $\Coalgf\, F$}} \hspace{-10pt} c^\dagger[C] \ \subseteq\ \nu F.
\]
Note that it is sufficient to let $(C,c)$ range over those coalgebras
in $\Coalgf\, F$ where $c^\dagger$ is injective, because for an
arbitrary (orbit-)finite $(C,c)$ in $\Coalgf F$, its image
$c^\dagger[C]$ is again an (orbit-)finite $F$-coalgebra, and has an
injective structure map.

\section{Liftings of Finitary Functors}
\label{sec:lifting}

We now direct our attention to liftings of finitary \Set-functors, with
a view to investigating their rational fixpoints. We fix some
terminology:
\begin{definition}
  A \emph{lifting} (or, for distinction, a \emph{$\Nom$-lifting}) of a
  functor $F: \Set\to \Set$ is a functor $\bar F: \Nom\to\Nom$ such
  that $U\bar F = F U$. Further, an \emph{$\perms(\V)$-set lifting}
  of~$F$ is a functor $\hat F:\GSet\to\GSet$ such that $V\hat F=FV$.
  We say that a functor $G:\Nom\to\Nom$ is a \emph{lifting} if it is a
  $\Nom$-lifting of some $\Set$-functor $F$.
\end{definition}
\begin{notation}
  Throughout this work, we will use the bar notation in the above
  definition to denote functors on $\Nom$ that are liftings of
  \Set-endofunctors. 
\end{notation}
\begin{lemma}
  A functor $G:\Nom\to\Nom$ is a lifting iff $G$ is a lifting of
  $UGD$, and in fact if $G$ is a lifting of $F$ then $F=UGD$.
\end{lemma}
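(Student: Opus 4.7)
The plan is to reduce the statement to the simple observation that $D:\Set\to\Nom$ is a section of the forgetful functor $U:\Nom\to\Set$, i.e.\ $UD=\op{Id}_{\Set}$. Indeed, on objects $D$ equips a set with the trivial group action and $U$ just forgets the action, so $UDX=X$; and on morphisms, every function between discrete nominal sets is equivariant, so $UDf=f$.

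With this at hand I would first prove the ``in fact'' clause: assume $G$ is a lifting of $F$, meaning $UG=FU$. Then by functoriality and $UD=\op{Id}_\Set$,
\[
  UGD \;=\; (UG)D \;=\; (FU)D \;=\; F(UD) \;=\; F,
\]
which establishes $F=UGD$.

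The biconditional is then immediate. If $G$ is a lifting of $UGD$, then by definition $G$ is a lifting. Conversely, if $G$ is a lifting of some $\Set$-functor $F$, the ``in fact'' part just shown gives $F=UGD$, so $G$ is actually a lifting of $UGD$. There is no real obstacle here beyond recording the identity $UD=\op{Id}_\Set$; the entire argument is a one-line diagram chase, so I would keep the write-up correspondingly terse.
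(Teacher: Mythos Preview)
Your proof is correct and follows essentially the same approach as the paper: both arguments reduce to the observation that $UD=\Id_\Set$, from which $UGD=FUD=F$ is immediate. You are merely more explicit in spelling out $UD=\Id_\Set$ and the structure of the biconditional, whereas the paper compresses this into a single line.
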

\begin{proof}
  In the first claim, `if' is trivial; we prove `only if' in
  conjunction with the second claim. So let $UG=FU$ for some \Set-functor $F$;
  then $UGD=FUD=F$.
\end{proof}
\begin{definition}
  Let $\hat F$ be a $\perms(\V)$-set lifting of the functor
  $F: \Set \to \Set$. We say that $\hat F$ is
  \emph{$\Nom$-restricting} if it preserves nominal
  sets, i.e.~$\hat F$ restricts to a functor $\bar F: \Nom\to\Nom$
  (which is, then, a $\Nom$-lifting of $F$). 
\end{definition}
\noindent Recall that a \emph{monad-over-functor distributive law}
between a monad $T$ and a functor $F$ on a category $\C$ is a natural
transformation $\lambda:TF\to FT$ such that the diagrams
\begin{equation}\label{diag:distrib-unit}
    \begin{tikzcd}
      & F\arrow{ld}[above]{\eta F\quad} \arrow{dr}{F\eta}
      \\
      TF \arrow{rr}[below]{\lambda} && FT 
    \end{tikzcd}
  \end{equation}
and
\begin{equation}\label{diag:distrib-mult}
  \begin{tikzcd}
    T^2F \arrow{r}{T\lambda} \arrow{d}[left]{\mu F} & TFT
    \arrow{r}{\lambda T} &
    FT^2 \arrow{d}{F\mu}  \\
    TF \arrow{rr}[below]{\lambda} & & FT
  \end{tikzcd}
  \end{equation}
  commute. Such distributive laws are in bijective correspondence with
  liftings of $F$ to the Eilenberg-Moore category of
  $T$~\cite{johnstone75}. In one direction of this correspondence, we
  obtain from a distribute law $\lambda$ the lifting $\bar F$ that
  maps an Eilenberg-Moore algebra
  \begin{math}
    \begin{tikzcd}
      TA \arrow{r}{a} & A
    \end{tikzcd}
  \end{math}
  to the algebra
  \begin{equation*}
    \begin{tikzcd}
      TFA \arrow{r}{\lambda_A} & FTA \arrow{r}{Fa} & A.
    \end{tikzcd}
  \end{equation*}
  In particular, $\GSet$-liftings of a \Set-functor $F$ are in
  bijection with distributive laws
  \begin{equation}\label{eq:perms-distrib}
    \lambda:\perms(\V)\times F\to F(\perms(\V)\times(-)\,)
  \end{equation}
  of the monad $\perms(\V)\times(-)$\, over $F$.  We will be
  interested exclusively in $\Nom$-liftings that arise by restricting
  $\GSet$-liftings, i.e.~come from a distributive law
  (\ref{eq:perms-distrib}); explicitly:
  \begin{definition}
    A distributive law $\lambda$ of $\perms(\V)\times(-)$\, over
    $F$ is \emph{$\Nom$-restricting} if the corresponding
    $\perms(\V)$-set lifting of $F$ is $\Nom$-restricting.
  \end{definition}
  
  One class of $\Nom$-restricting liftings are \emph{canonical}
  liftings, introduced next. After that, we introduce the bigger class
  of \emph{localizable} liftings, and in the next section we study
  their rational fixpoints.  

  Recall that every \Set-functor
  $F$ comes with a \emph{(tensorial) strength}, i.e.~a transformation
  \begin{equation*}
    s_{X,Y}:X \times FY \to F(X \times Y)
  \end{equation*}
  natural in $X$ and $Y$, making the diagrams
  \begin{equation}
    \label{diag:strength1}
    \begin{tikzcd}
      1\times FY \arrow{dr}[below left]{\iota_{FY}}
                 \arrow{r}{s_{1,Y}}
      & F(1\times Y)
                 \arrow{d}{F\iota_Y}
     \\
      &FY
    \end{tikzcd}
  \end{equation}
  and
  \begin{equation}
    \label{diag:strength2}
    \begin{tikzcd}
      (X\times Z)\times FY \arrow{r}{s_{X\times
          Z,Y}}\arrow{d}[left]{\alpha_{X,Z,FY}} & F((X\times Z)\times Y)
        \arrow{dr}{F\alpha_{X,Z,Y}}\\
        X\times (Z\times FY) \arrow{r}{X\times s_{Z, Y}} & X\times
        F(Z\times Y) \arrow{r}{s_{X,Z\times Y}} & F(X\times(Z\times
        Y))
    \end{tikzcd}
  \end{equation}
  commute, where $\iota:1\times\Id\to\Id$ and
  $\alpha:(\Id\times\Id)\times\Id\to\Id\times(\Id\times\Id)$ are the
  left unitor and the associator, respectively, of the Cartesian
  monoidal structure~\cite{Kock72}.  We remark that using $\iota$ and
  $\alpha$, we can rephrase the definition of the monad $M×(-)$
  for a monoid $(M,m,e)$: the unit is
  \[
    \eta_X \equiv \big(
    \begin{tikzcd}
        X
        \arrow{r}{\iota_X^{-1}}
        & 1×X
        \arrow{r}{e×X}
        & M×X
    \end{tikzcd}
    \big)
  \]
  considering the unit element as a morphism $e: 1\to M$, and the
  multiplication is
  \[
    \mu_X \equiv \big(
    \begin{tikzcd}
        M×(M×X)
        \arrow{r}{\alpha_{M,M,X}^{-1}}
        & (M×M)×X
        \arrow{r}{m×X}
        & M×X
    \end{tikzcd}
    \big).
  \]
  \begin{lemma}\label{lem:can-distrib}
    Given a monoid $(M,m,e)$ and a \Set-functor $F$ with strength $s$, the
    natural transformation
    \begin{equation*}
      s_{M,X}:M\times FX \to F(M\times X)
    \end{equation*}
    is a distributive law of the monad $M\times(-)$ over the
    functor $F$.
  \end{lemma}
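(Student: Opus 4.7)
The plan is to verify the two commutative diagrams \eqref{diag:distrib-unit} and \eqref{diag:distrib-mult} for $\lambda := s_{M,-}$ by direct diagram chase, relying on (i) naturality of the strength~$s$ in its first argument and (ii) the two coherence axioms \eqref{diag:strength1} and \eqref{diag:strength2}. The key preparatory observation is the explicit formula for the monad $M\times(-)$ recalled just before the lemma: its unit is $\eta_X=(e\times X)\cdot\iota_X^{-1}$ and its multiplication is $\mu_X=(m\times X)\cdot\alpha_{M,M,X}^{-1}$. Every remaining step reduces to replacing $e$ or $m$ by applications of the strength coherence laws.

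For the unit diagram, I would start from $s_{M,X}\cdot\eta_{FX}=s_{M,X}\cdot(e\times FX)\cdot\iota_{FX}^{-1}$, use naturality of $s$ in the monoid argument to rewrite this as $F(e\times X)\cdot s_{1,X}\cdot\iota_{FX}^{-1}$, and then invoke~\eqref{diag:strength1} (which gives $F\iota_X\cdot s_{1,X}=\iota_{FX}$, equivalently $s_{1,X}\cdot\iota_{FX}^{-1}=F\iota_X^{-1}$) to obtain $F(e\times X)\cdot F\iota_X^{-1}=F\eta_X$.

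For the multiplication diagram, I would again substitute the formula for $\mu_{FX}$ and use naturality of $s$ in the first argument to pull $m\times FX$ across $s$: this yields $s_{M,X}\cdot\mu_{FX}=F(m\times X)\cdot s_{M\times M,X}\cdot\alpha_{M,M,FX}^{-1}$. The associativity axiom \eqref{diag:strength2} for the strength expresses $s_{M\times M,X}$ in terms of $s_{M,M\times X}\cdot(M\times s_{M,X})$ up to the associators, and plugging this in makes the $\alpha_{M,M,FX}$ and $\alpha_{M,M,FX}^{-1}$ cancel and produces the composite $F\mu_X\cdot s_{M,M\times X}\cdot(M\times s_{M,X})$, which is exactly the other side of \eqref{diag:distrib-mult} with $T=M\times(-)$ and $\lambda=s_{M,-}$.

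There is no genuine obstacle here: the statement is essentially that the canonical pairing of strength with the carrier of a monoid-monad is automatically a distributive law. The only care needed is bookkeeping of the unitor~$\iota$ and associator~$\alpha$, which all cancel once \eqref{diag:strength1} and \eqref{diag:strength2} are applied; naturality of $s$ in the monoid slot does the rest.
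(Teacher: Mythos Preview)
Your proposal is correct and follows essentially the same argument as the paper: both verify \eqref{diag:distrib-unit} and \eqref{diag:distrib-mult} by expanding $\eta$ and $\mu$ via the unitor and associator, applying naturality of $s$ in the monoid slot, and then invoking the strength coherence axioms \eqref{diag:strength1} and \eqref{diag:strength2}. The paper presents this as two pasted commutative diagrams while you write it equationally, but the underlying steps are identical.
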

  \begin{proof}
    Using $(F\iota_X)^{-1} = F\iota_X^{-1}$, the commutatitivity of the
    following verifies \eqref{diag:distrib-unit}:
    \[
    \begin{tikzcd}
        &FX
        \arrow{dl}[above left]{\iota_{FX}^{-1}}
        \arrow{dr}{F\iota_{X}^{-1}}
        \descto[pos=0.4]{d}{\eqref{diag:strength1}}
        \arrow[to path={
                -| ([xshift=-4mm]\tikztotarget.west) \tikztonodes
                -- (\tikztotarget)
              }]{ddl}[pos=0.8,left]{\eta_{FX}}
        \arrow[to path={
                -| ([xshift=4mm]\tikztotarget.east) \tikztonodes
                -- (\tikztotarget)
              }]{ddr}[pos=0.8,right]{F\eta_{X}}
        \\
        1×FX
        \arrow{d}[left]{m×FX}
        \arrow{rr}{s_{1,X}}
        \descto{drr}{\text{Naturality}}
        & {}
        & F(1×X)
        \arrow{d}{F(m×X)}
        \\
        M×FX
        \arrow[below]{rr}{s_{M,X}}
        &
        & F(M×X)
    \end{tikzcd}
    \]
    Also using $F(\alpha^{-1}_{M,M,X})=(F\alpha_{M,M,X})^{-1}$, note
    that
    \[
    \begin{tikzcd}[column sep = 8mm]
        M×(M×FX)
        \arrow{r}[pos=.9]{\begin{turn}{45}$\scriptsize M×s_{M,FX}$\end{turn}}
        \arrow{d}[left]{\alpha_{M,M,FX}^{-1}}
        \descto{drr}{\eqref{diag:strength2}}
        \arrow[shiftarr={xshift=-16mm}]{dd}[left]{\mu_{FX}}
        &
        M×F(M×X)
        \arrow{r}[pos=.9]{\begin{turn}{45}$\scriptsize s_{M,F(M×X)}$\end{turn}}
        & F(M×(M×X))
        \arrow{d}{F\alpha_{M,M,X}^{-1}}
        \arrow[shiftarr={xshift=16mm}]{dd}{F\mu_{X}}
        \\
        (M×M)×FX
        \arrow{d}[left]{m×FX}
        \arrow{rr}{s_{M×M,X}}
        \descto{drr}{\text{Naturality}}
        &&
        F((M×M)×X)
        \arrow{d}{F(m×X)}
        \\
        M×FX
        \arrow{rr}[below]{s_{M,X}}
        &&
        F(M×X)
    \end{tikzcd}
    \]
    commutes, so $s_{M,\argument}$ satisfies \eqref{diag:distrib-mult}, and
    hence is a distributive law.
  \end{proof}
  \begin{definition}
    For $M=\perms(\V)$, we refer to the distributive law described in
    Lemma~\ref{lem:can-distrib} as the \emph{canonical distributive law}
    of $\perms(\V)\times(-)$\, over $F$, and to the arising
    $\perms(\V)$-set lifting of $F$ as the \emph{canonical
      $\perms(\V)$-set lifting of $F$}.
  \end{definition}
  \begin{lemma}\label{lem:can-lifting-nom}
    The canonical $\perms(\V)$-set lifting of a finitary \Set-functor
    is $\Nom$-restricting.
  \end{lemma}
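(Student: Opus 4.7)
The plan is to make the canonical action on $FX$ fully explicit and then to exhibit a finite support for each $y\in FX$ using that $F$ is finitary. Unpacking the distributive law $s_{\perms(\V),X}$ composed with the Eilenberg-Moore structure $a\colon\perms(\V)\times X\to X$ of a given nominal set $X$, the induced action on $FX$ takes the form $\pi\cdot y = F(\pi_X)(y)$, where $\pi_X\colon X\to X$ is the bijection $x\mapsto\pi\cdot x$ from Remark~\ref{yetAnotherRemark}(2). This uses the Cartesian formula $s_{A,X}(\alpha,y) = F(i_\alpha)(y)$ with $i_\alpha\colon X\to A\times X$ the map $x\mapsto(\alpha,x)$.

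Next, I would fix $y\in FX$ and produce a finite support for it. The nominal set $X$ is the filtered colimit of the inclusions of its finite subsets, so finitariness of $F$ yields a finite subset $m\colon Y\hra X$ (with $F$ preserving monos) together with an element $y_0\in FY$ such that $y = Fm(y_0)$. Setting $S := \bigcup_{x\in Y}\supp(x)\subseteq\V$ gives a finite subset of $\V$, being a finite union of finite sets, and the claim is that $S$ supports $y$.

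To verify this, let $\pi\in\Fix(S)$. Then for each $x\in Y$ we have $\pi\in\Fix(\supp(x))\subseteq\fix(x)$, i.e.\ $\pi_X\circ m = m$ as maps $Y\to X$. Consequently
\[
  \pi\cdot y = F(\pi_X)(Fm(y_0)) = F(\pi_X\circ m)(y_0) = Fm(y_0) = y,
\]
so $S$ supports $y$. Since the lifted functor acts on morphisms by $F$ itself, equivariance of $\bar F f$ for equivariant $f$ is automatic (indeed built into $\hat F$ being a $\perms(\V)$-set lifting), so this finishes $\Nom$-restriction.

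The only non-routine ingredient is the factorization $y = Fm(y_0)$, which is the standard consequence of $F$ preserving filtered colimits. Once it is in hand, the finite support drops out cleanly because the canonical action on $FX$ merely transports the action on $X$ through $F$: any set of atoms that pointwise fixes the finite piece $Y\subseteq X$ witnessing $y$ necessarily fixes $y$ as well.
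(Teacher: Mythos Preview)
Your proof is correct and follows essentially the same approach as the paper: both use finitariness of $F$ to factor $y$ through $Fm$ for a finite subset $m\colon Y\hookrightarrow X$, take the finite support to be $\bigcup_{x\in Y}\supp(x)$, and verify that any permutation fixing this set fixes $y$. The only difference is presentational: you first simplify the canonical action to the concrete formula $\pi\cdot y = F(\pi_X)(y)$ and then argue directly, whereas the paper keeps the strength $s$ explicit throughout and checks the same equality via a commutative diagram built from the strength laws~(\ref{diag:strength1}) and naturality of $s$.
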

  \begin{proof}
    Let $F:\Set\to\Set$ be finitary, let $\bar F$ denote the canonical
    $\perms(\V)$-set lifting of $F$, let $X$ be a nominal set with
    nominal structure $\alpha: \perms(\V)×X \to X$, and let $x\in
    FX$.
    We have to show that $x$ has finite support in $\bar FX$. Since
    $F$ is finitary, $x: 1\to FX$ factors through some $Fi$ with $i$ a
    subset inclusion $S\hookrightarrow X$ of a finite subset $S$.

    Then $W=\supp(S)$ supports $x$: Put $G=\perms(\V\setminus W)$, let
    $m: G\to \perms(\V)$ be the evident subgroup inclusion, and let
    $\pi \in G$. Since $W$ supports $S$ and $S$ is finite, the
    elements of $G$ fix $S$ pointwise, i.e.\
    \begin{equation}\label{diag:G-fix-S}
    \begin{tikzcd}
      G× S \arrow{r}{m×i} \arrow[shiftarr={yshift=5mm}]{rr}{i\circ
        \outr} & \perms(\V)×X \arrow{r}{\alpha} & X
    \end{tikzcd}
    \end{equation}
    commutes, where $\outr$ denotes the right-hand product projection.  With
    $\beta$ denoting the nominal structure on $FX$ and $s$ the
    strength (so $s_{\perms(\V),\argument}$ is the canonical
    distributive law), we have that
    \begin{equation*}
    \begin{tikzcd}
      1\cong 1×1 \arrow{r}{1× x} \arrow[bend
      right=40]{dddrr}[sloped,below]{\pi \cdot x} \arrow[to path={--
        ([yshift=2mm]\tikztostart.north) -|
        ([xshift=14mm,yshift=-1mm]\tikztotarget.east) \tikztonodes --
        ([yshift=-1mm]\tikztotarget.east) } ]{rrddd}[above
      right,pos=0.8]{x} & 1× FS \arrow{r}{s_{1,S}} \arrow{d}[left]{\pi
        × FS} \descto{dr}{\text{Naturality of }s} & F(1×S)
      \arrow{d}{F(\pi×S)} \\ & G×FS \arrow{r}{s_{G,S}}
      \arrow{d}[left]{m×Fi} \descto{dr}{\text{Naturality of }s} &
      F(G×S) \arrow{d}{F(m×i)}
      \arrow[shiftarr={xshift=12mm}]{dd}[sloped,above]{F(i\circ
        \outr)} \\ & \perms(\V)×FX \arrow{r}{s_{\perms(V),X}}
      \arrow{dr}[below left]{\beta}
      \descto[xshift=2mm,yshift=2mm]{dr}{\text{Def.}}  &
      F(\perms(\V)×X) \arrow{d}{F\alpha} \\ & & FX
    \end{tikzcd}
  \end{equation*}
  commutes, where the unlabelled triangle commutes by
  Diagram~(\ref{diag:G-fix-S}) and the decomposition of $x$ in the
  upper right hand part is by the strength
  law~(\ref{diag:strength1}). This shows that $\pi\cdot x = x$, as
  required.
\end{proof}

\begin{definition}
  We refer to the lifting of a \Set-functor $F$ to $\Nom$ arising from
  Lemma~\ref{lem:can-lifting-nom} as the \emph{canonical lifting of
    $F$}. Moreover, a lifting $G:\Nom\to\Nom$ is \emph{canonical} if
  it is a canonical lifting of some functor (i.e.~a canonical lifting
  of $UGD$).
\end{definition}
\noindent The canonical lifting is the expected lifting for many
\Set-functors:
\begin{example}\label{setFunctorList}
  \begin{enumerate}[(1)]
  \item For a polynomial functor $F_\Sigma$ on $\Set$ (see Example~\ref{ex:coalgs}(3)) the canonical lifting $\bar F_\Sigma$ maps a nominal set
    $(X, \cdot)$ to the expected coproduct of finite products in
    $\Nom$ where each $\Sigma_n$ is equipped with the trivial nominal
    structure.
  \item The canonical lifting of the finite powerset functor $\Potf$
    maps a nominal set $(X, \cdot)$ to $\Potf(X)$ equipped with the
    usual nominal structure, which is given by
    $\pi \cdot Y = \{\pi \cdot y \mid y \in Y\}$ for $Y\in\Potf(X)$.
  \item The canonical lifting of the bag functor $\bar \Bag$ maps a nominal
    set $(X, \cdot)$ to $\Bag(X)$ equipped with the nominal structure
    that acts elementwise as in the previous item.
  \item An interesting more general class of examples are Joyal's \emph{analytic functors}~\cite{Joyal:1981,Joyal:1986}. An endofunctor $F$ on \Set is  \emph{analytic} if it is the left Kan extension of a functor from the category $\mathsf{B}$ of natural numbers and bijections to $\Set$ along the inclusion. These are described explicitly as follows. For a subgroup $G$ of $\sym(n)$, $n < \omega$, the \emph{symmetrized representable functor} maps a set $X$ to the set $X^n/G$ of orbits under the action of $G$ on $X^n$ by coordinate interchange, i.e., $X^n/G$ is the quotient of $X^n$ modulo the equivalence $\sim_G$ with $(x_0,\dots,x_{n-1})\sim_G (y_0,\dots,y_{n-1})$ iff $(x_{\pi(0)},\dots,x_{\pi(n-1)})=(y_0,\dots,y_{n-1})$ for some $\pi\in G$. It is not difficult to prove that an endofunctor on $\Set$ is analytic iff it is a coproduct of symmetrized representables. So every analytic functor $H$ can be written in the form
    \begin{equation}
      \label{functor_from_symmetrized_representables}
      FX=\coprod_{\substack{n< \omega\\ \mathclap{G\leq \perms(n)}}} A_{n,G}\times X^n/G \text{ .}
    \end{equation}
    Clearly every analytic functor is finitary, and Joyal proved in \cite{Joyal:1981,Joyal:1986} that a finitary endofunctor on $\Set$ is analytic iff it weakly preserves wide pullbacks.

    The canonical lifting of an analytic functor $F$ is given by equipping for any nominal set $(X,\cdot)$ the quotients $X^n/G$ with the obvious group action: 
    \[
      \pi \cdot [(x_0, \ldots, x_{n-1}]_{\sim_G} = [\pi \cdot x_0, \ldots, \pi  \cdot x_{n-1}]_{\sim_G}.
    \]
    
    Note that the bag functor from the previous item is the special case where we take $A_{n,G} = 1$ for $G = \perms(n)$ and $0$ else, for every $n$. The finite power-set functor is not analytic. 

  \item Another interesting analytic functor is the \emph{cyclic shift
      functor} $\mathcal Z$ that maps a set $X$ to the set of all
    assignments of elements of $X$ to the corners of any regular
    polygon (modulo rotation of the polygon). In fact, this is the
    analytic functor obtained by putting $A_{n,G} = 1$ for $G$
    generated by the cyclic right shift $\pi(i) = (i + 1) \mod n$ for
    $i = 0, \ldots, n-1$, and $A_{n,G} = 0$ otherwise. The canonical
    lifing of $\mathcal Z$ is as expected: given a nominal set $Y$,
    the nominal structure on $\bar{\mathcal Z} Y$ acts by applying the
    original action on $Y$ to the elements labelling the corners of a
    regular polygon.
  \end{enumerate}
\end{example}

\noindent However, many important functors on $\Nom$ are liftings but
not canonical liftings.
\begin{example}\label{expl:non-canonical}
  \begin{enumerate}[(1)]
  \item The simplest examples are constant functors
    $\bar KX=(Y,\cdot\,)$ for a nontrivial nominal set $(Y,\cdot\,)$. The
functor $\bar K$ is clearly a lifting of the constant functor $KX=Y$
on $\Set$ but the canonical lifting of $K$ is constantly $DY$. 

  \item Similarly, more interesting composite functors such as the functor $\bar LX = \V + X \times X + \V \times X$ mentioned in the introduction with the standard action on $\V$, i.e.~$\pi \cdot v = \pi(v)$ for $v \in \V$, or the functors $\bar{\mathcal B}(-) + \V$ or $\bar{\mathcal Z}(-) + \V$ are non-canonical liftings. 
  \end{enumerate}
\end{example}
\noindent We therefore identify a property of distributive
laws that (in combination with restriction of liftings to $\Nom$)
suffices to enable our main result on rational fixpoints of liftings:
  \begin{definition}\label{def:localizable}
    Let $F: \Set\to \Set$ be a functor. A monad-over-functor
    distributive law $\lambda: \perms(\V) × F(-) \to F(\perms(\V)× -)$
    is \emph{localizable} if for any $X$ and any $W \subseteq \V$,
    $\lambda$ restricts to a natural transformation
    $\lambda^{W}: \perms(W) × FX \to F(\perms(W)×(-)\,)$,
    i.e.~we have
    $\lambda_X \cdot (m_W × \id_{FX}) = F(m_W×\id_X) \cdot
    \lambda_X^{W}$
    where $m_W: \perms(W)\rightarrow \perms(\V)$ is the evident
    subgroup inclusion. A lifting $G:\Nom\to\Nom$ is
    \emph{localizable} if it is induced by a $\Nom$-restricting
    localizable distributive law.
  \end{definition}
  \begin{lemma}
    Canonical liftings are localizable.
  \end{lemma}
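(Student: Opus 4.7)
The plan is to exploit the fact that, by definition, the canonical distributive law is nothing but the strength $s_{\perms(\V),-}$ specialized at the first argument to the monoid $\perms(\V)$. Since the strength $s_{Y,X}$ is natural in \emph{both} arguments, the localizability condition will fall out of naturality in the first argument, applied to the subgroup inclusion $m_W : \perms(W) \to \perms(\V)$.

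Concretely, I would define $\lambda^{W}_X := s_{\perms(W),X} : \perms(W) \times FX \to F(\perms(W) \times X)$. This is natural in $X$ for the same reason as the canonical distributive law itself (naturality of $s$ in its second argument), so it qualifies as a natural transformation of the required type. The required equation
\[
\lambda_X \cdot (m_W \times \id_{FX}) \;=\; F(m_W \times \id_X) \cdot \lambda^{W}_X
\]
is then exactly the naturality square of $s_{-,X}$ applied to the morphism $m_W : \perms(W) \to \perms(\V)$ (viewing both $\perms(W)$ and $\perms(\V)$ merely as sets, and $m_W$ as a function; naturality of $s$ in the first argument is an instance of the general naturality of the strength in its left-hand variable).

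The only minor thing worth spelling out is that the strength's naturality in its first argument is indeed a feature of the tensorial strength of any \Set-functor, not something extra being assumed: it is a straightforward consequence of the strength's axioms together with the fact that in \Set every morphism $f : Y \to Y'$ gives rise to the natural transformation $f \times \id : Y \times (-) \to Y' \times (-)$. I do not expect a genuine obstacle here; the proof is essentially a one-diagram check, and the main point is simply to record that the canonical lifting's distributive law restricts to every subgroup $\perms(W)$ because it comes from a construction (the strength) that is uniformly defined in the monoid parameter.
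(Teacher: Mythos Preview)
Your proposal is correct and matches the paper's own proof essentially verbatim: the paper simply states that the required equation is an instance of naturality of the strength in the left argument, which is exactly your argument with $\lambda^W_X := s_{\perms(W),X}$.
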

  \begin{proof}
    The equation
    $\lambda_X \cdot (m_W × \id_{FX}) = F(m_W×\id_X) \cdot
    \lambda_X^{W}$
    postulated in Definition~\ref{def:localizable} is an instance of
    naturality of the strength in the left argument.
  \end{proof}
  \begin{example}\label{expl:const-localizable}
    By the previous lemma, in particular the identity functor on \Nom
    is a localizable lifting. Moreover, all constant functors on \Nom
    are trivially localizable.

  \end{example}
  \begin{lemma}\label{lem:localizable-closure}
    The class of finitary and mono-preserving localizable liftings is
    closed under finite products, arbitrary coproducts, and functor
    composition.
  \end{lemma}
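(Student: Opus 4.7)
The plan is to treat each of the three closure operations in turn: in every case, the combined functor on $\Nom$ is a lifting of the correspondingly combined $\Set$-functor, and the task reduces to building a $\Nom$-restricting localizable distributive law on the combined functor from those of the components. Preservation of finitariness and mono-preservation under finite products, arbitrary coproducts, and functor composition is standard (arbitrary coproducts of finitary \Set-functors are finitary because colimits commute with colimits; finite products commute with filtered colimits; monos are preserved throughout), so I concentrate on the distributive laws.

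For finite products of liftings $\bar F$ and $\bar G$ with distributive laws $\lambda^F$ and $\lambda^G$, I build
\[
\lambda^{F\times G}_X : \perms(\V) \times (FX \times GX) \to F(\perms(\V) \times X) \times G(\perms(\V) \times X)
\]
by duplicating the permutation via the diagonal $\Delta : \perms(\V) \to \perms(\V)\times \perms(\V)$ and then applying $\lambda^F$ and $\lambda^G$ componentwise. The unit and multiplication laws \eqref{diag:distrib-unit} and \eqref{diag:distrib-mult} follow componentwise from those of $\lambda^F$ and $\lambda^G$, using that $\Delta$ is a monoid homomorphism. $\Nom$-restriction holds because any $(x,y) \in FX \times GX$ is supported by the union of the finite supports it inherits from $\bar F X$ and $\bar G X$. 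Localizability follows by the same pairing applied to $\lambda^{F,W}$ and $\lambda^{G,W}$.

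For an arbitrary coproduct $\coprod_{i\in I} \bar F_i$, I use that $\perms(\V) \times (-)$ preserves coproducts in $\Set$, giving $\perms(\V) \times \coprod_i F_i X \cong \coprod_i \perms(\V)\times F_i X$; then I apply $\lambda^{F_i}$ summand-wise and compose with the coproduct injection. The monad axioms verify summand-wise, $\Nom$-restriction holds because every element lies in a single summand where it inherits finite support, and localizability is inherited summand-wise via the restricted laws $\lambda^{F_i, W}$.

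For functor composition, I form the classical composite
\[
\lambda^{FG}_X \;=\; \bigl(
\perms(\V) \times FG(X) \xrightarrow{\lambda^F_{GX}} F(\perms(\V)\times GX) \xrightarrow{F\lambda^G_X} FG(\perms(\V)\times X)
\bigr).
\]
The unit law follows from the unit laws of $\lambda^F$ and $\lambda^G$ together with functoriality of $F$. $\Nom$-restriction is automatic: $\bar G$ sends $X$ to a nominal set $\bar G X$, which $\bar F$ in turn sends to a nominal set. Localizability follows by composing the restricted component laws, giving $\lambda^{FG,W}_X = F\lambda^{G,W}_X \cdot \lambda^{F,W}_{GX}$. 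The main obstacle is the multiplication axiom in this composition case, which requires combining the multiplication laws of $\lambda^F$ and $\lambda^G$ with naturality of $\lambda^F$ in its object argument in a somewhat intricate diagram chase; the remaining verifications are routine combinations of naturality squares.
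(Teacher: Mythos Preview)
Your proposal is correct and follows essentially the same approach as the paper: build the combined distributive law componentwise for products and coproducts, and by the standard composite $F\lambda^G\cdot\lambda^F_{G(-)}$ for functor composition, then check localizability by restricting to $\perms(W)$ in each case. The paper is terser---it simply writes down the same formulas and observes that if $\pi\in\perms(W)$ then the result lands in the right place---whereas you are more explicit about the distributive-law axioms (unit and multiplication) that need verifying; but the content is the same.
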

  \begin{proof}
    Finitarity and preservation of monos are clear. The lifting
    property is immediate from creation of finite products and
    coproducts by $U: \Nom \to\Set$.  Since moreover finite products
    and coproducts in $\Nom$ are formed as in $\GSet$, it is clear
    that they preserve the property of being induced by a
    $\Nom$-restricting distributive law. It remains to show
    preservation of localizability by the mentioned constructions. Let
    $W\subseteq\V$.

    \emph{Finite products:} Since the terminal functor is constant, it
    suffices to consider binary products $G\times H$ of functors $G,H$
    on $\Nom$ induced by \Nom-restricting localizable distributive
    laws $\lambda_G$, $\lambda_H$. Indeed, $G\times H$ is induced by
    the distributive law
    \begin{equation*}
      (\lambda_{G\times
        H})(\pi,(x,y))=((\lambda_G)_X(\pi,x),(\lambda_H)_X(\pi,y)),
    \end{equation*}
    and if $\pi\in\perms(W)$ then the right-hand side is in
    $(G\times H)(\perms(W)\times X)$.

    \emph{Coproducts:} For $i\in I$, let the functors $G_i$ on $\Nom$
    be induced by \Nom-restricting localizable distributive laws
    $\lambda_{G_i}$. Then $G=\coprod_{i\in I}G_i$ is induced by
    the distributive laws
    \begin{equation*}
      (\lambda_{G})_X(\pi,\inj_i(x))=\inj_i((\lambda_{G_i})_X(\pi,x)),  
    \end{equation*}
    where $\inj_i$ denotes the $i$-th coproduct injection, and if
    $\pi\in\perms(W)$ then the right-hand side is in
    $\coprod G_i(\perms(W)\times X)$.

    \emph{Functor composition:} Let $G,H$ be functors on $\Nom$
    induced by \Nom-restricting localizable distributive laws
    $\lambda_G$, $\lambda_H$. Then $GH$ is induced by the distributive
    law 
    \begin{equation*}
      (\lambda_{GH})_X(\pi,x)=G(\lambda_H)_{X}((\lambda_G)_{HX}(\pi,x)),
    \end{equation*}
    and if $\pi\in\perms(W)$ then the right-hand side is in
    $GH(\perms(W)\times X)$.
  \end{proof}
\begin{definition} \label{def:polynomial}
Recall that the class of \emph{polynomial functors} is the
smallest class of endofunctors on $\Nom$ that contains all constant functors and the identity
functor and is closed under coproducts and finite products. 
\end{definition}
\noindent Note in particular that the functors in
Example~\ref{expl:non-canonical} are polynomial. By
Example~\ref{expl:const-localizable} and
Lemma~\ref{lem:localizable-closure}, we have
\begin{lemma}
  All polynomial functors are finitary localizable liftings and preserve monos.
\end{lemma}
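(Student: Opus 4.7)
The plan is a straightforward structural induction on Definition~\ref{def:polynomial}. I need to verify for every polynomial functor the three conjoined properties: finitariness, being a localizable lifting, and preservation of monomorphisms. This splits into two base cases and two inductive cases.

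For the base cases, I would first handle the identity functor $\Id:\Nom\to\Nom$, which trivially preserves all colimits (hence is finitary) and all monos, and which is a localizable lifting by Example~\ref{expl:const-localizable}. For a constant functor $\bar K$ with value $Y$, localizability is again provided by Example~\ref{expl:const-localizable}; preservation of monos is immediate because every morphism in the image of $\bar K$ is $\id_Y$; and finitariness follows because for any filtered diagram $D$ in $\Nom$ the image under $\bar K$ is the constant diagram on $Y$, whose colimit is $Y$ (filtered categories being nonempty and connected). Here $\bar K$ need not be canonical, but Example~\ref{expl:const-localizable} explicitly covers arbitrary constant functors, so there is nothing more to check.

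For the inductive step, suppose $G$, $H$, and a family $(G_i)_{i\in I}$ of polynomial functors have already been shown to be finitary mono-preserving localizable liftings. Then Lemma~\ref{lem:localizable-closure} directly yields that $G\times H$ and $\coprod_{i\in I} G_i$ are again finitary mono-preserving localizable liftings; note that Lemma~\ref{lem:localizable-closure} is stated for \emph{arbitrary} coproducts, which matches Definition~\ref{def:polynomial}. Since the class of polynomial functors is by definition the smallest class built from the two base cases under these two closure operations, the induction is complete.

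There is no genuine obstacle here: the statement is essentially a packaging of Example~\ref{expl:const-localizable} and Lemma~\ref{lem:localizable-closure} via induction on the grammar of polynomial functors. The only mildly non-trivial sanity check is finitariness of constant functors (as sketched above), and making sure that the closure lemma's coverage of \emph{arbitrary}, not merely finite, coproducts aligns with the definition of polynomial functors.
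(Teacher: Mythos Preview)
Your proposal is correct and follows essentially the same approach as the paper: the paper's proof consists of a single sentence citing Example~\ref{expl:const-localizable} and Lemma~\ref{lem:localizable-closure}, and you have simply spelled out the structural induction that these two results support. Your explicit verification of finitariness and mono-preservation for the base cases (identity and constant functors) fills in details the paper leaves implicit but is exactly what is needed to invoke Lemma~\ref{lem:localizable-closure}, which is stated for the class of \emph{finitary and mono-preserving} localizable liftings.
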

\noindent We see next that there are liftings that fail to be
localizable, and indeed our example does not allow the desired lifting
of rational fixpoints from \Set to~\Nom:
\begin{example}\label{nonLocalExample}
  Consider the functor $FX = \V × X$ with the lifting
  $\tilde F (X,\cdot) = (\V,\cdot) × (X,\star)$, where $\cdot$ is the
  usual action on $\V$ and $\pi\star x$ is defined as
  $g\cdot \pi \cdot g^{-1}\cdot x$ for some fixed permutation
  $g: \V\to \V$ such that there is a name $v_0\in\V$ for which the
  names $g^n\cdot v_0=:v_n$ are pairwise distinct for $n\in\Z$. This
  is well-defined because by Remark~\ref{yetAnotherRemark}, $(X,\cdot)$ uniquely
  extends to a $\sym(\V)$-set. This lifting corresponds to the
  distributive law defined by
    \[
        \lambda_X : \perms(\V) × \V×X \to \V×\perms(\V) × X,
        \quad
        (\pi,v,x) \mapsto (\pi(v), g\cdot\pi\cdot g^{-1}, x).
    \]
    This distributive law does not satisfy locality; to see this,
    consider $W = \{v_0, v_1\}$ and $\pi = (v_0\, v_1)\in\perms(W)$;
    then $g \cdot \pi \cdot g^{-1}=(v_1\,v_2)$ is not in $\perms(W)$
    (qua subgroup of $\perms(\V)$), since $v_2\notin W$.

    We have rational fixpoints $\varrho\tilde F$ in $\Nom$ and
    $\varrho F$ in $\Set$; we show that (1) neither is
    $U\varrho\tilde F$ a subcoalgebra of $\varrho F$, and (2) nor does
    $\varrho F$ lift to an $\tilde F$-coalgebra:
    \begin{enumerate}[(1)]
    \item The rational fixpoint $\varrho \tilde F$ contains behaviours
      that are not lfp in \Set, so that $U(\varrho\tilde F)$ is not a
      subcoalgebra of $\varrho F$: consider the coalgebra
      $c: (\V,\cdot)\to \tilde F(\V,\cdot)= (\V,\cdot)×(\V,\star)$
      defined by $c(v) = (v,g(v))$. This coalgebra structure is
      equivariant, because
    \[
        c(\pi\cdot v)
        = (\pi \cdot v, g\cdot\pi\cdot v))
        = (\pi \cdot v, g\cdot\pi\cdot g^{-1}\cdot g\cdot v))
        = (\pi\cdot v, \pi\star g(v)).
    \]
    Since $\V$ is orbit-finite, $c$ is~lfp. Moreover, $c$ is a
    subcoalgebra of $\varrho \tilde F$, i.e.~the coalgebra
    homomorphism $c^\dagger: (\V,c)\to \varrho (\tilde F, r)$ is
    monic, because $v$ can be recovered from $c^\dagger(v)$ via
    $v = \outl\circ \tilde Fc^\dagger \circ c(v) = \outl\circ r \circ
    c^\dagger (v)$.

    However, $(\V,c)$ considered as an $F$-coalgebra in \Set is not
    lfp, because the smallest subcoalgebra containing $v_0$ is
    $\{g^n\cdot v_0\mid n\ge 0\}$, which is infinite by the choice of
    $g$ and $v_0$.

  \item The coalgebra $d: 1\to F1$ defined by
    $d(*) = (v_0,*) \in \V × 1$ (where $1=\{*\}$) is, trivially, lfp
    and a subcoalgebra of $\varrho F$.  The unique coalgebra
    homomorphism $1 \to \varrho F$ defines an element
    $d^\dagger \in \varrho F$.

    Assuming some $\perms(\V)$-set structure $\cdot$ on $\varrho F$
    such that $r: (\varrho F,\cdot) \to \tilde F (\varrho F,\cdot)$ is
    equivariant, the support of $d^\dagger$ must contain $v_0$ and the
    support of $d^\dagger$ in $(\varrho F,\star)$, which is the
    support of $g^{-1}\cdot d^\dagger$ in $(\varrho F,\cdot)$.
    Iterating this observation we find that the support of $d^\dagger$
    contains $g^{-n}\cdot v_0$ for all $n\in\N$, hence is infinite.
    \end{enumerate}
\end{example}
\begin{remark}
  The functor $\tilde F$ from the previous example is naturally isomorphic to
  the harmless (in fact, polynomial) functor
    \[
        \bar F (X,\cdot) = (\V,\cdot) × (X,\cdot),
    \]
    where the isomorphism $\tau: \bar F \to \tilde F$ is given by
    \[
        \tau_X (v,x) = (v, g\cdot x)
    \]
    using the fact that the action of $\perms(\V)$ on the nominal set
    $X$ extends uniquely to an action of $\sym(\V)$
    (Remark~\ref{yetAnotherRemark}.\ref{item:inf-permute}). In fact,
    $\tau_X$ is clearly bijective. We have to show that $\tau_X$ is
    equivariant:
    \[
        \tau_X(\pi\cdot v, \pi\cdot x)
            = \big(\pi\cdot v, g\cdot \pi \cdot x\big)
            = \big(\pi\cdot v, g\cdot \pi \cdot g^{-1} \cdot g\cdot x\big)
            = \big(\pi\cdot v, \pi\star (g\cdot x)\big).
    \]
    Finally, we compute the naturality square for an equivariant map
    $f: X\to Y$:
    \[
        \tau_Y\cdot (\id_\V × f)(v,x)
        = (v,g\cdot f(x))
        = (v,f(g\cdot x))
         = (\id_\V×f)\cdot \tau_X(v,x).
    \]

    This fact may be slightly surprising, and shows that lifting of
    the rational fixpoint is a representation-dependent property of
    functors on $\Nom$ rather than an intrinsic one; it serves only as
    a technical tool in the computation of rational fixpoints. The
    isomorphism $\tilde F\cong \bar F$ implies that the counterexample
    to localizability is not only somewhat contrived but can also be
    circumvented; that is, instead of calculating the rational
    fixpoint of $\tilde F$ we can calculate that of $\bar F$, which is
    perfectly amenable to our methods. In fact we have no example of a
    lifting that is not isomorphic to a localizable one.
\end{remark}
\noindent In Example~\ref{nonLocalExample}, we have seen that
\Nom-restricting distributive laws need not be localizable. As the
following simple example shows, localizability and $\Nom$-restriction
are in fact independent, i.e.~localizable distributive laws also need
not be $\Nom$-restricting:
\begin{example}\label{exampleNonNominal}
  Let $(Y,\cdot)$ be some non-nominal $\perms(\V)$-set. Consider the
  constant functor $KX = Y$ with the distributive law
  $\lambda_X: \perms(\V)×KX \to K(\perms(\V)×X)$ defined by
    \[
    \lambda_X(\pi,y) = \pi\cdot y\qquad (y\in Y).
    \]
    Since $K$ is constant, $\lambda$ is trivially
    localizable. However, $\lambda$ induces, as its $\perms(\V)$-set
    lifting, the constant functor $\bar K=(Y,\cdot)$, and hence fails
    to be $\Nom$-restricting.
\end{example}

\section{Rational Fixpoints of Localizable Liftings}
\label{sec:lifting-rational}

We proceed to analyse rational fixpoints of liftings $\bar F$ in
relation to rational fixpoints of the underlying functor $F$.  We have
seen in Example~\ref{nonLocalExample} that even when $F$ is finitary,
the rational fixpoint of $\bar F$ in general need not be a lifting of
the rational fixpoint of $F$ (in contrast to the situation with
initial algebras). Our main result (Theorem~\ref{thm:lift}) establishes
that for \emph{localizable} liftings, the rational fixpoint of $F$
does lift to the rational fixpoint of $\bar F$. As a consequence, we
also obtain concrete descriptions of the rational fixpoint for
functors on $\Nom$ that are quotients of lifted functors (but not
themselves liftings of \Set-functors) (Section~\ref{sec:quotients}),
e.g.~functors associated to a binding signature
(Section~\ref{sec:binding}).

\begin{assumption} \label{lambdaassumption} In this section, assume
  that $\bar F: \Nom\to\Nom$ is a localizable lifting of a finitary
  functor $F: \Set\to\Set$.
\end{assumption}

\begin{lemma}\label{setLfp2NomLfp}
    If for a coalgebra $c: C\to \bar FC$ the underlying coalgebra
    $c: C\to FC$ is lfp in \Set, then $c: C\to \bar FC$  is lfp in \Nom.
\end{lemma}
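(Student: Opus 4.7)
The plan is to show, given $x \in C$, that $x$ lies in some orbit-finite subcoalgebra of $(C,c)$. The obvious candidate is the equivariant closure of a finite $\Set$-subcoalgebra containing $x$. So I would first invoke the hypothesis that $c\colon C\to FC$ is lfp in $\Set$ to obtain a finite subset $S\subseteq C$ with $x\in S$ such that $c$ restricts to $c_S\colon S\to FS$ (using that $F$ preserves monos, so $FS\subseteq FC$). Then I would set
\[
  T \;=\; \perms(\V)\cdot S \;=\; \{\pi\cdot s \mid \pi\in\perms(\V),\ s\in S\}\subseteq C.
\]
Since $T$ is a union of at most $|S|$ orbits of $C$, it is orbit-finite, and being a subset of the nominal set $C$ it is a nominal subset (every element has finite support). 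Clearly $T$ is equivariant and contains $x$.

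The main thing to verify is that $T$ carries a subcoalgebra structure, i.e.\ that $c$ maps $T$ into $U\bar F T\subseteq U\bar FC$. Pick $y\in T$, written $y=\pi\cdot s$ with $s\in S$. Equivariance of $c$ gives $c(y)=\pi\cdot c(s)$. Since $S$ is a $\Set$-subcoalgebra we have $c(s)\in FS$, and the inclusion $S\hookrightarrow T$ yields $FS\subseteq FT=U\bar FT$; hence $c(s)\in U\bar FT$. Now I would use that $\bar F$ is a $\Nom$-lifting preserving monos: the inclusion $T\hookrightarrow C$ is an equivariant mono, so $\bar F$ sends it to an equivariant mono $\bar FT\hookrightarrow \bar FC$, whose underlying $\Set$-map is the inclusion $FT\hookrightarrow FC$. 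Equivariance of this inclusion is exactly the statement that $FT\subseteq FC$ is closed under the $\perms(\V)$-action inherited from $\bar FC$; therefore $\pi\cdot c(s)\in U\bar FT$, as required. Thus $(T,c|_T)$ is an orbit-finite subcoalgebra of $(C,c)$ containing~$x$.

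Since $x\in C$ was arbitrary, $(C,c)$ is locally orbit-finite, hence lfp in $\Nom$ by the concrete characterisation of lfp coalgebras over $\Nom$ recalled in Section~2.3. The proof does not require localizability or finitarity of $\bar F$ beyond its being a mono-preserving $\Nom$-lifting; the only step that might look delicate is the closure argument for $c(y)\in U\bar FT$, but this is forced by the equivariance of $\bar F$ applied to the mono $T\hookrightarrow C$ together with $U\bar F=FU$.
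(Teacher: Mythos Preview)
Your proof is correct and follows essentially the same approach as the paper: take a finite $\Set$-subcoalgebra containing $x$, close it under the $\perms(\V)$-action to obtain an orbit-finite equivariant subset, and verify it is a subcoalgebra by using equivariance of $c$ together with the fact that $\bar F$ applied to the equivariant inclusion yields an action-closed subset $FT\subseteq FC$. Your remark that neither localizability nor finitarity is needed here (only that $\bar F$ is a mono-preserving $\Nom$-lifting) is also accurate.
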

\begin{proof}
  Let $x\in C$, and let $O$ be the orbit of $x$; we have to construct
  an orbit-finite subcoalgebra $Q$ of $C$ containing $x$. The lfp
  property of $(C,c)$ in \Set provides us with a finite subcoalgebra
  $(P, p)$ with $x \in P$.  We take $Q\subseteq C$ to be the union of
  the orbits of the elements of $P$, i.e.~the closure of $P$ in $C$
  under the $\perms(\V)$-action. Then $Q$ is a nominal set; applying
  $\bar F$ to the equivariant inclusion $Q\to C$, we obtain that $FQ$
  is closed under the $\perms(\V)$-action in $FC$. Note also that $Q$
  is orbit-finite since $P$ is finite. 
  We are done once show that $Q$ is closed under the coalgebra
  structure~$c$. Let $y\in P$ and $\pi\in\perms(\V)$, so that
  $\pi\cdot y\in Q$. Since $P$ is a subcoalgebra of $(C,c)$ in $\Set$,
  we have $p(y) = c(y)$ and hence
    \[
        c(\pi \cdot y) = \pi \cdot c(y) = \pi\cdot p(y).
    \]
    Since $p(y) \in FP\subseteq FQ$ and $FQ$ is closed under the
    $\perms(\V)$-action in $FC$, it follows that
    $c(\pi \cdot y) \in FQ$.
\end{proof}

\begin{lemma}\label{nomLfp2SetLfp}
    If $c: C\to \bar FC$ is an orbit-finite coalgebra in $\Nom$, then $c: C\to FC$ is
    lfp in \Set.
\end{lemma}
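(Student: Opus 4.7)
The plan is to exhibit, for an arbitrary $x\in C$, a finite subset $P\subseteq C$ that contains $x$ and is closed under $c$. The candidate will be $P:=\{y\in C\mid \supp(y)\subseteq W\}$ for a suitably chosen finite $W\subseteq\V$; this $P$ is automatically finite by orbit-finiteness of $C$, since Lemma~\ref{orbitsamesupp} forces $|\supp(y)|$ to depend only on the orbit of $y$, and Lemma~\ref{sameorbitfaculty} then bounds, for each orbit, the number of elements whose support lies in $W$. Membership of $x$ in $P$ just requires $\supp(x)\subseteq W$, so we have freedom in choosing $W$ large.

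To set up $W$, pick representatives $r_1,\dots,r_k$ of the finitely many orbits of $C$. Since $F$ is finitary and $F$ preserves monos, each $c(r_j)\in FC$ lies in $FS_j$ for some finite $S_j\subseteq C$. Let $W$ be the union of $\supp(x)$, of $W_j:=\supp(r_j)$ for all $j$, and of $\supp(S_j)$ for all $j$; this is a finite subset of $\V$, and $P=C^{W}$ contains $x$ together with all $r_j$ and all elements of every $S_j$.

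It remains to show $c(y)\in FP$ for $y\in P$. Writing $y=\pi\cdot r_j$ and using Remark~\ref{permFactor}, we can replace $\pi$ by the finite permutation $\tilde\pi:=\pi|_{W_j}$, still satisfying $\tilde\pi\cdot r_j=y$; since $W_j\subseteq W$ and $\pi[W_j]=\supp(y)\subseteq W$, the permutation $\tilde\pi$ lies in $\perms(W)$. Equivariance of $c$ gives $c(y)=\tilde\pi\cdot c(r_j)$, and the $\perms(\V)$-action on $FC=\bar FC$ is given by $F\alpha\circ\lambda_C(\tilde\pi,\_)$, where $\lambda$ is the localizable distributive law inducing $\bar F$. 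Writing $c(r_j)=F\iota_{S_j}(u)$ with $u\in FS_j$, naturality of $\lambda$ in its second argument yields
\[
  \lambda_C(\tilde\pi,c(r_j)) = F(\id_{\perms(\V)}\times\iota_{S_j})\bigl(\lambda_{S_j}(\tilde\pi,u)\bigr),
\]
and localizability (applied to $W$) lets us factor $\lambda_{S_j}(\tilde\pi,u)$ through $F(\perms(W)\times S_j)$. Composing with $F\alpha$ shows that $\tilde\pi\cdot c(r_j)$ lies in the image of $F$ applied to the action map $\perms(W)\times S_j\to C$, $(\sigma,s)\mapsto\sigma\cdot s$.

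Finally, this image $\perms(W)\cdot S_j$ is contained in $P$: for $\sigma\in\perms(W)$ and $s\in S_j$, we have $\supp(\sigma\cdot s)=\sigma[\supp(s)]\subseteq\sigma[W]=W$, so $\sigma\cdot s\in P$. Hence $c(y)\in FP$, completing the proof. The main conceptual obstacle is the third step: making precise how the localizability of $\lambda$ — together with the trick of restricting the ambient permutation to $\tilde\pi\in\perms(W)$ via Remark~\ref{permFactor} — is what prevents the orbit of $r_j$ under a general $\pi\in\perms(\V)$ from escaping the finite "$W$-bounded'' portion of $C$; everything else is standard finitariness and bookkeeping.
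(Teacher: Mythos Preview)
Your proof is correct and uses the same essential ingredients as the paper's proof: the closure-by-support idea (the paper's operator $\Cl$), the reduction of an arbitrary permutation to one in $\perms(W)$ via Remark~\ref{permFactor}, and localizability of $\lambda$ to keep the action inside the bounded set. Where you differ is in organization: you show directly that the finite set $P=\{y\in C\mid\supp(y)\subseteq W\}$ is a subcoalgebra in \Set, whereas the paper instead constructs an auxiliary finite coalgebra structure on $G\times P'$ (with $G$ the permutation group of a suitable finite set of names and $P'$ a smaller set than your $P$) together with a coalgebra homomorphism $G\times P'\to C$ hitting $x$. Your route is somewhat more streamlined, avoiding the splitting $d:W\rightarrowtail G\times P'$ and the large commuting diagram; the paper's route produces an explicit finite coalgebra-with-homomorphism presentation, but both reach the conclusion by the same mechanism.
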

\begin{proof}
    First define the following closure operator on subsets $X$ of $C$:
    \[
        \Cl(X) = \{
            y\in C \mid \supp(y) \subseteq\textstyle \bigcup_{x\in X}\supp(x)
        \}.
    \]
    By Lemma~\ref{sameorbitfaculty}, $\Cl$ preserves finite sets. Now
    let $x\in C$. Pick a subset $O\subseteq C$ that contains precisely
    one element from each orbit of $C$, and put
    $P =\Cl(\{x\} \cup O)$, with $\inj_P$ denoting the embedding
    $ P \rightarrowtail C$. Since $O$ is finite, $P$ is finite, and
    since $F$ is finitary, there exists a finite set $Q\subseteq C$
    such that $c\cdot \inj_P$ factorizes through
    $F\inj_Q : FQ \to FC$. The subset $W = \Cl(P\cup Q) \subseteq C$
    is finite as well, and we have:
    \[
    \begin{tikzcd}
        C
            \arrow{r}{c}
        & FC
        \\
        P
            \arrow[>->]{u}{\inj_P}
            \arrow[dashed]{r}{c_P}
            \arrow[dashed]{dr}
        &
        FW
        \arrow[>->]{u}{F\inj_W}
        \\ &
            FQ
        \arrow[>->]{u}
        \arrow[>->,shiftarr={xshift=8mm}]{uu}[right]{F\inj_Q}
    \end{tikzcd}
    \]
    Now let $G$ be the finite subgroup of $\perms(\V)$ given by the
    permutations of $\supp(W)$ and note that
    $\supp(W) = \supp(P \cup Q)$. Then for any $\pi \in G$ and
    $z\in P$, $\pi\cdot z$ is in $W$ because
    \[
      \supp(\pi \cdot z) = \pi \cdot \supp(z) \subseteq \pi \cdot \supp(W) = \supp(W) = \supp(P \cup Q),
    \]
    where the second-to-last equation holds because $\pi \in G$ and the inclusion holds because $z \in P \subseteq W$. This means we have a commutative diagram
    \[
      \begin{tikzcd}
        G \times P \arrow[->>]{d}[left]{\alpha'} \arrow[>->]{r} & \perms(\V) \times C \arrow{d}{\alpha} \\
        W \arrow[>->]{r}[below]{i} & C
      \end{tikzcd}
    \]
    where $\alpha$ denotes the nominal structure on $C$. We will now
    prove that the left-hand map $\alpha'$ is surjective. To see this,
    let $y\in W$. Then there are $z \in O\subseteq P$ and
    $\pi \in \perms(\V)$ such that $\pi\cdot z = y$, because $O$
    contains precisely one element from each orbit. Consider the
    factorization of $\pi$ into $\pi = \pi\vert_{\supp(z)}\cdot g$ as
    in Remark~\ref{permFactor}.  Then $g$ fixes every element of
    $\supp(z)$, so $g\cdot z = z$, and $\pi\vert_{\supp(z)}$ fixes
    every element not contained in $\supp(z) \cup \pi \cdot
    \supp(z)$.
    Since $\pi \cdot z = y$ we have $\pi \cdot \supp(z) = \supp(y)$,
    and because $y \in W$ and $z \in P \subseteq W$ we know that
    $\supp(z) \cup \supp(y) \subseteq \supp(W)$. Thus
    $\pi\vert_{\supp(z)}$ fixes every element not contained in
    $\supp(W)$ and therefore $\pi\vert_{\supp(z)}$ lies in $G$. It
    follows that
    \[
        \alpha'(\pi\vert_{\supp(z)}, z)
        = \pi\vert_{\supp(z)} \cdot z
        = \pi\vert_{\supp(z)} \cdot g \cdot z
        = \pi \cdot z
        = y,
    \]
    showing $\alpha'$ to be surjective as desired. 

    Now fix a splitting \( d: W \rightarrowtail G × P\)
    of $\alpha'$, i.e.~we have $\alpha' \cdot d = \id_W$.  Denote by
    $\alpha'': G \times W \to C$ the restriction of $\alpha$. Let
    $\beta = F\alpha \cdot \lambda_C$ be the nominal structure on $FC$
    and $\beta': G×FW \to FC$ its restriction. Now consider the
    diagram below:
    \[
    \begin{inparaenum}[(1)]
    \begin{tikzcd}
        C
        \arrow{r}{c}
        \descto{ddr}{\text{\item}}
        & FC
        \arrow[equals]{r}
        \descto{ddr}{\text{\item}}
        & FC
        \arrow[equals]{rr}
        &{}
        & FC
        \\
        &&& F(G × W)
        \arrow{ul}[above right]{F\alpha''}
        \\
        G × P
        \arrow{r}[below]{G× c_P}
        \arrow{uu}{i \cdot \alpha'}
        &
        G × FW
        \arrow{r}[below]{\lambda_W^{G}}
        \arrow{uu}{\beta'}
        &
        F(G × W)
        \arrow[>->]{r}[below,pos=0.9]{\begin{turn}{-45}$\scriptsize F(G× d)$\end{turn}}
        \arrow{uu}{F \alpha''}
        \arrow[>->]{ur}[above left]{\id}
        &
        F(G^2 × P)
        \arrow{r}[below,pos=0.9]{\begin{turn}{-45}$\scriptsize F(\mu×P)$\end{turn}}
        \arrow{u}[right]{F(G×\alpha')}[left,xshift=-5pt,pos=0.4]{\text{\item}}
        &
        F(G × P)
        \arrow{uu}[right]{F(i \cdot \alpha')}
        \descto[xshift=5mm]{uul}{\text{\item}}
    \end{tikzcd}
    \end{inparaenum}
    \]
    The middle triangle trivially commutes, and so do the other parts:
    \begin{enumerate}[(1)]
    \item commutes because $c$ is equivariant.
    \item commutes using the definition of $\beta$, naturality of $\lambda$ and~Assumption~\ref{lambdaassumption} (denote by $j: G \to \perms(\V)$ the inclusion of the subgroup $G$):
      \[
        \begin{tikzcd}
          FC \arrow[equals]{r} & FC \\
          F(\perms(\V) \times C) 
          \arrow{u}[left]{F\alpha}
          &
          \\
          \perms(\V) \times FC
          \ar{u}[left]{\lambda_C}
          \ar{r}[below]{\lambda_C}
          &
          F(\perms(\V) \times C) 
          \ar{uu}[right]{F\alpha}
          \\
          \perms(\V) \times FW
          \ar[>->]{u}{\id \times Fi}
          \ar{r}[below]{\lambda_W}
          &
          F(\perms(\V) \times W)
          \ar{u}[right]{F(\id \times i)}
          \\
          G \times FW 
          \ar{u}[left]{j \times \id}
          \ar{r}[below]{\lambda_W^G}
          \ar[shiftarr={xshift=-16mm}]{uuuu}[left]{\beta'}
          &
          F(G \times W)
          \ar{u}[right]{F(j \times \id)}
          \ar[shiftarr={xshift=16mm}]{uuuu}[right]{F\alpha''}
        \end{tikzcd}
      \]
      
    \item commutes since $\alpha' \cdot d = \id_W$.
    \item commutes using the axioms of the group action $\alpha$; here $\mu$ denotes the multiplication of the group $G$ and we also use that $G$ is a subgroup of $\perms(\V)$.
    \end{enumerate}
    Thus, we see that $G \times P$ is a finite coalgebra and $i \cdot \alpha': G \times P \to C$ a coalgebra homomorphism with $\alpha'(\id, x) = x$. Therefore $x \in C$ is contained in a finite subcoalgebra and we conclude that $(C,c)$ is lfp.   
\end{proof}

\noindent Because $U: \Nom\to \Set$ creates, and lfp coalgebras are
closed under, filtered colimits, we can immediately generalize
Lemma~\ref{nomLfp2SetLfp} to lfp coalgebras:
\begin{corollary}
    If $c: C\to \bar FC$ is lfp in $\Nom$, then $c: C\to FC$ is lfp in \Set.
\end{corollary}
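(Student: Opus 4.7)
The plan is to invoke directly the two facts mentioned in the sentence preceding the corollary: that $U \colon \Nom \to \Set$ creates filtered colimits and that lfp coalgebras are closed under filtered colimits. So the argument unfolds in three simple steps.

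First, since $(C,c)$ is lfp in $\Nom$, by the characterization of lfp coalgebras via $\Coalgf\,\bar F$ (see~\cite{m_linexp} as recalled in Section~\ref{prelim}), we may write $(C,c) = \colim_{i\in I}(C_i,c_i)$ as a filtered colimit of orbit-finite $\bar F$-coalgebras, with colimit-injection morphisms $\iota_i \colon (C_i,c_i) \to (C,c)$ in $\Coalg\,\bar F$.

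Second, applying Lemma~\ref{nomLfp2SetLfp} to each $(C_i,c_i)$, its underlying $F$-coalgebra $c_i \colon C_i \to FC_i$ is lfp in $\Set$. Moreover, since $U \colon \Nom\to\Set$ creates filtered colimits and the forgetful functor $\Coalg\,\bar F\to\Nom$ also creates colimits, we obtain that $(C,c) = \colim_{i\in I}(C_i,c_i)$ is still a filtered colimit of $F$-coalgebras in $\Set$.

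Third, one invokes the closure of lfp coalgebras under filtered colimits: given $x \in C$, by filteredness of the colimit $x$ lies in the image of some $\iota_i \colon C_i \to C$, say $x = \iota_i(y)$; since $(C_i,c_i)$ is lfp in $\Set$, there is a finite subcoalgebra $P \subseteq C_i$ containing $y$, and then $\iota_i[P] \subseteq C$ is a finite subcoalgebra containing $x$ (its image under the coalgebra homomorphism $\iota_i$ is again a subcoalgebra, and finite sets are closed under quotients in $\Set$). Hence $(C,c)$ is lfp in $\Set$, as required. There is no real obstacle here; the only thing to check carefully is that the image of a finite subcoalgebra under a coalgebra homomorphism is again a subcoalgebra in $\Set$, which is the standard (strong epi, mono)-factorization in $\Coalg\,F$.
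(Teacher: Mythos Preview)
Your proposal is correct and follows exactly the approach indicated by the paper, which does not give a formal proof but only the one-sentence hint preceding the corollary (creation of filtered colimits by $U$ plus closure of lfp coalgebras under filtered colimits, combined with Lemma~\ref{nomLfp2SetLfp}). You have simply unpacked that hint in detail, including an explicit argument for the closure property in step three; this is entirely in line with the paper's intended proof.
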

\noindent Combining this with Lemma~\ref{setLfp2NomLfp} we obtain:
\begin{corollary}
    A coalgebra $c: C\to \bar FC$ in $\Nom$ is lfp if and only if the underlying
    coalgebra is lfp in \Set.
\end{corollary}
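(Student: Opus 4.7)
The statement is a biconditional whose two directions have already been established in the immediately preceding results, so the plan is simply to assemble them. The forward direction (lfp in $\Nom$ implies lfp in $\Set$) is exactly the previous corollary, which upgrades Lemma~\ref{nomLfp2SetLfp} from orbit-finite carriers to arbitrary lfp carriers by invoking the fact that $U:\Nom\to\Set$ creates filtered colimits and that lfp coalgebras are closed under filtered colimits. The reverse direction (lfp in $\Set$ implies lfp in $\Nom$) is Lemma~\ref{setLfp2NomLfp}, which constructs, for each element $x$ of the carrier, an orbit-finite subcoalgebra containing it by closing a finite subcoalgebra under the $\perms(\V)$-action.

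Thus the proof is a one-line citation: combine the two implications. There is no real obstacle here, since all the work was done in the preceding two lemmas; the only thing to check is that the two implications really do apply to the same notion of lfp coalgebra on both sides, which is the case by the shared assumption (Assumption~\ref{lambdaassumption}) that $\bar F$ is a localizable lifting of the finitary $\Set$-functor $F$.

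\begin{proof}
Immediate from Lemma~\ref{setLfp2NomLfp} and the preceding corollary.
\end{proof}
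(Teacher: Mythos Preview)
Your proposal is correct and matches the paper's approach exactly: the paper simply states the corollary as an immediate combination of Lemma~\ref{setLfp2NomLfp} and the preceding corollary, with no further argument.
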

\noindent In order to lift the rational fixpoint $(\varrho F, r)$ from
\Set to \Nom, we need to equip it with nominal structure:
\begin{lemma}\label{gsetstructure}
    The rational fixpoint $(\varrho F, r)$ carries a canonical group action making $r$ equivariant.
\end{lemma}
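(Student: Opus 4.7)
The plan is to construct, for each $\pi\in\perms(\V)$, a map $\alpha_\pi:\varrho F\to\varrho F$ using the universal property of $\varrho F=\colim(\Coalgf\, F\hookrightarrow\Coalg\, F)$ together with the monad-functor distributive law $\lambda:\perms(\V)\times F(-)\to F(\perms(\V)\times-)$ underlying $\bar F$. To each finite $F$-coalgebra $(C,c)$ I associate its ``free extension'' $(\bar C,\bar c)$ with $\bar C:=\perms(\V)\times C$ and $\bar c(\sigma,x):=\lambda_C(\sigma,c(x))\in F\bar C$, viewed purely as a \Set-coalgebra. The first step is to verify that $(\bar C,\bar c)$ is locally finite: for any $(\sigma,x)\in\bar C$, setting $W:=\supp(\sigma)\subseteq\V$, localizability of $\lambda$ ensures that $\lambda_C(\tau,c(y))$ lands in $F(\perms(W)\times C)$ whenever $\tau\in\perms(W)$, so $\perms(W)\times C$ is a finite subcoalgebra containing $(\sigma,x)$.

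Hence a unique coalgebra homomorphism $\bar c^\dagger:(\bar C,\bar c)\to(\varrho F,r)$ exists, and I define $\alpha_\pi$ by the requirement $\alpha_\pi\circ u_{(C,c)}(x):=\bar c^\dagger(\pi,x)$, where $u_{(C,c)}:C\to\varrho F$ denotes the colimit injection. For this to yield a well-defined map $\varrho F\to\varrho F$ the family $\{\bar c^\dagger(\pi,-)\}_{(C,c)}$ must be a cocone under the forgetful diagram; this follows from naturality of $\lambda$, which makes $\perms(\V)\times f:(\bar C,\bar c)\to(\bar{C'},\bar{c'})$ an $F$-coalgebra homomorphism whenever $f:(C,c)\to(C',c')$ is one.

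To verify the group action laws, I would invoke the coherence axioms of the distributive law. The unit axiom~\eqref{diag:distrib-unit} implies that $\iota_C:C\to\bar C$, $x\mapsto(e,x)$, is itself an $F$-coalgebra homomorphism $(C,c)\to(\bar C,\bar c)$; hence $\bar c^\dagger\circ\iota_C=u_{(C,c)}$ by uniqueness, giving $\alpha_e=\id$. For composition, the multiplication $\mu:\overline{\bar C}\to\bar C$, $(\pi,(\sigma,x))\mapsto(\pi\sigma,x)$, is an $F$-coalgebra homomorphism by axiom~\eqref{diag:distrib-mult}, so $\overline{\bar c}^\dagger=\bar c^\dagger\circ\mu$; evaluating at $(\pi,(\sigma,x))$ then shows $\alpha_\pi\circ\alpha_\sigma=\alpha_{\pi\sigma}$.

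Equivariance of $r$ is then a short diagram chase on an element $u_{(C,c)}(x)$: one side reduces to $F\bar c^\dagger(\lambda_C(\pi,c(x)))$ via $r\circ\bar c^\dagger=F\bar c^\dagger\circ\bar c$, while the other reduces to $F(\alpha\circ(\id\times u_{(C,c)}))(\lambda_C(\pi,c(x)))$ using naturality of $\lambda$ and the definition of the lifted action on $F(\varrho F)$; the two agree because the defining equation $\alpha_\sigma(u_{(C,c)}(y))=\bar c^\dagger(\sigma,y)$ forces $\alpha\circ(\id\times u_{(C,c)})=\bar c^\dagger$. The principal technical obstacle is the very first step: establishing local finiteness of $(\bar C,\bar c)$ in \Set despite $\bar C$ being infinite. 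This is precisely where localizability is indispensable, and without it the construction of $\bar c^\dagger$ would not be available.
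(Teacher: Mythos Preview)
Your approach is correct and shares the paper's core idea: use localizability of $\lambda$ to see that the ``free extension'' coalgebra on $\perms(\V)\times(-)$ is lfp, then invoke finality of $\varrho F$ among lfp $F$-coalgebras. The paper applies this in one shot to $\perms(\V)\times\varrho F$ itself (showing each $(\pi,x)$ lies in the finite subcoalgebra $\perms(\supp(\pi))\times S$ for $S$ a finite subcoalgebra of $\varrho F$ containing $x$), whereas you apply it to each $\bar C=\perms(\V)\times C$ for finite $(C,c)$ and then assemble via the colimit; these amount to the same thing since $\perms(\V)\times\varrho F$ is the filtered colimit of the $\bar C$.

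The genuine difference is in verifying the group-action axioms. The paper avoids a direct check by observing that the action just constructed is the restriction along the subcoalgebra inclusion $\varrho F\rightarrowtail\nu F$ of the canonical action on $\nu F$, which is known to be a group action by results of Bartels and Plotkin--Turi on distributive-law liftings of final coalgebras. Your direct verification from the unit and multiplication axioms of $\lambda$ is more self-contained, but your composition step is too quick: from $\overline{\bar c}^\dagger=\bar c^\dagger\circ\mu$ you get $\overline{\bar c}^\dagger(\pi,(\sigma,x))=\alpha_{\pi\sigma}(u_{(C,c)}(x))$, but to compute the other side $\alpha_\pi(\alpha_\sigma(u_{(C,c)}(x)))=\alpha_\pi(\bar c^\dagger(\sigma,x))$ you must first express $\bar c^\dagger(\sigma,x)$ as a colimit-injection value. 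This works (take the finite subcoalgebra $D=\perms(\supp(\sigma))\times C$ you already constructed; then $\bar c^\dagger(\sigma,x)=u_{(D,d)}(\sigma,x)$, and one more naturality step relates $\bar d^\dagger$ to $\overline{\bar c}^\dagger$), but it is not a mere ``evaluation''. A cleaner route within your framework is to note that your assembled map $\alpha$ is itself an $F$-coalgebra homomorphism from the lfp coalgebra $(\perms(\V)\times\varrho F,\lambda_{\varrho F}\circ(\id\times r))$, so that $\alpha\circ(\id\times\bar c^\dagger)$ is a coalgebra homomorphism $\overline{\bar C}\to\varrho F$ and hence equals $\overline{\bar c}^\dagger$ by uniqueness; this closes the gap in one line.
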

\begin{proof}
  We define the desired group action by coinduction. To this end we consider the $F$-coalgebra 
  \begin{equation*}
    \perms(\V)×\varrho F \xrightarrow{\id × r} \perms(\V) \times F(\varrho F) \xrightarrow{\lambda_{\varrho F}}
    F(\perms(\V) \times \varrho F).
  \end{equation*}
  We first prove that this coalgebra is lfp.  Let
  $(\pi,x) \in \perms(\V)\times \varrho F$. Since $\varrho F$ is an
  lfp coalgebra, we obtain an orbit-finite subcoalgebra $(S,s)$ of
  $\varrho F$ containing $x$. For the finite subgroup
  $G = \perms(\supp(\pi))$ of $\perms(\V)$, we have a restriction
  $\lambda_S^G:G×FS \to F(G× S)$ by localizability. Now consider the
  diagram below (where, as usual, we abuse objects to denote their
  identity, here: $G$ in place of $\id_G$):
  \begin{equation*}
    \begin{inparaenum}[(1)]
\begin{tikzcd}
        & \perms(\V) × \varrho F \arrow{r}{\ \perms(\V)× r^F}
        \descto[xshift=4mm]{dr}{\text{\item}} &[6mm] \perms(\V) × F \varrho
        F \arrow{r}{\lambda_{\varrho F}}
        \descto[xshift=4mm]{dr}{\text{\item}} & F(\perms(\V) × \varrho
        F)
        \\
        1 \arrow[bend left]{ur}[above left]{(\pi,x)}
        \arrow{r}[above]{(\pi,x)} \arrow[bend right]{dr}[below
        left]{(\pi,x)} & \perms(\V) × S
        \arrow{u}[right]{\perms(\V)×\inj_S} \rar{\perms(\V)×s}
        \descto[xshift=3.1mm,yshift=0.3mm]{dr}{\text{\item}} &[6mm]\perms(\V) × FS
        \arrow{u}[right]{\perms(\V)×F\inj_S} \rar{\lambda_S}
        \descto[xshift=4mm]{dr}{\text{\item}} & F(\perms(\V)× S)
        \arrow{u}[right]{F(\perms(\V)×\inj_S)}
        \\
        & G× S \arrow{u}[right]{\inj_{G}×S} \rar{G×s} &[6mm] G× FS
        \arrow{u}[right]{\inj_{G}×FS} \rar{\lambda_S^{G}} & F(G×S)
        \arrow{u}[right]{F(\inj_{G}×S)}
      \end{tikzcd}
    \end{inparaenum}
  \end{equation*}
  It commutes because all its inner parts do:
  \begin{enumerate}[(1)]
  \item $(S,s)$ is a subcoalgebra of $(\varrho F, r^F)$.
  \item Naturality of $\lambda$ .
  \item Properties of products.
  \item $\lambda_S^{G}$ restricts $\lambda_S$.
  \end{enumerate}
  Hence $(G\times S, \lambda^G_S \cdot (G \times s))$ is a finite
  subcoalgebra of $\perms(\V) \times \varrho F$ containing
  $(\pi, x)$ proving $\perms(\V) \times \varrho F$ to be an lfp
  coalgebra.
  
  Now we obtain a unique coalgebra homomorphism
  $u: \perms(\V)×\varrho F \to \varrho F$. It remains to show that $u$
  is a group action. To this end, we show that $u$ is the restriction
  of the group action on the final $F$-coalgebra to $\varrho F$.

  From \cite[Theorem 3.2.3]{bartels04} and \cite{plotkinturi97} we
  know that the final $F$-coalgebra in the $\perms(\V)$-sets is just
  the final $F$-coalgebra $(\nu F, t)$, with the group action on the
  carrier defined by coinduction, i.e., the group action is the unique
  map $a$ such that the diagram below commutes:
  \begin{equation*}
    \begin{tikzcd}
        \perms(\V)×\nu F
            \rar{\perms(\V)× t}
            \arrow[dashed]{d}[left]{a}
        &[4mm] \perms(\V)×F\nu F
            \rar{\lambda_{\nu F}}
        & F(\perms(\V)× \nu F),
            \arrow{d}[right]{Fa}
        \\
        \nu F \arrow{rr}{t}
        && F\nu F.
    \end{tikzcd}
  \end{equation*}
  Since $F$ preserves monos, the rational fixpoint is a subcoalgebra
  of $(\nu F, t)$, i.e.~the unique coalgebra homomorphism
  $j: (\varrho F,r)\rightarrowtail (\nu F, t)$ is monic. Then
  $\perms(\V)×j$ also is a coalgebra homomorphism:
  \begin{equation*}
    \begin{inparaenum}[(1)]
\begin{tikzcd}
        \perms(\V) × \varrho F \arrow{rr}{\perms(\V)× r}
        \arrow{d}[left]{\perms(\V)×j}
        \descto[xshift=-2mm]{drr}{\text{\item}} && \perms(\V) × F
        \varrho F \arrow{r}{\lambda_{\varrho F}}
        \arrow{d}[left]{\perms(\V)×Fj}
        \descto[xshift=-0mm]{dr}{\text{\item}} & F(\perms(\V) ×
        \varrho F) \arrow{d}[right]{F(\perms(\V)×j)}
        \\
        \perms(\V) × \nu F \arrow{rr}{\perms(\V)× t} && \perms(\V) × F
        \nu F \arrow{r}{\lambda_{\nu F}} & F(\perms(\V) × \nu F)
      \end{tikzcd}
    \end{inparaenum}
  \end{equation*}
  This diagram commutes because
  \begin{inparaenum}[(1)]\item 
$j$ is a coalgebra homomorphism and
\item  $\lambda$ is natural.
  \end{inparaenum}
By finality of $\nu F$,
  $j\cdot u = a\cdot (\id_{\perms(\V)}×j)$.  As $j$ is monic, this
  means that $u$ is the restriction of the group action $a$ to
  $\varrho F$ and hence a group action.
\end{proof}

\begin{definition}\label{coalgiteration}
  For a coalgebra $c: C\to HC$ of a functor $H:\C\to\C$, we denote the
  \emph{iterated} coalgebra structure by $c^{(n)}: C\to H^nC$,
  $n\ge 0$, which is inductively defined by $c^{(0)} = \id_C$ and
    \[
        c^{(n+1)} \equiv\big(
            \begin{tikzcd}
            C \arrow{r}{c^{(n)}}
            & H^n C\arrow{r}{H^nc}
            & H^{n+1} C
            \end{tikzcd}
        \big).
    \]
    Moreover, given another functor $M$ on $\C$ and a natural
    transformation $\varphi: MH\to HM$, we define the \emph{iterated}
    transformation $\varphi^{(n)}: MH^n\to H^nM$ by
    $\varphi^{(0)} = \id$ and
    $\varphi^{(n+1)} = H^n\varphi \cdot \varphi^{(n)}H$.
  \end{definition}
  \noindent It is easy to verify that in the case where $M$ is a monad
  and $\varphi$ a distributive law of $M$ over $H$, the iterated
  transformation $\lambda^{(n)}$ is a distributive law of $M$ over
  $H^n$. These two notions of iteration interact nicely:
\begin{lemma}\label{coalgtransiteration}
  For $\varphi$ and $c$ as in Definition~\ref{coalgiteration},
  $(\varphi_C\cdot Mc)^{(n)} = \varphi^{(n)}_{C}\cdot Mc^{(n)}$.
\end{lemma}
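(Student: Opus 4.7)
The plan is to prove the identity by straightforward induction on $n$, with the naturality of the iterated transformation $\varphi^{(n)}$ providing the crucial exchange law in the inductive step.

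For the base case $n = 0$, both $(\varphi_C\cdot Mc)^{(0)}$ and $\varphi^{(0)}_C\cdot Mc^{(0)}$ unfold to $\id_{MC}$ by the definitions in Definition~\ref{coalgiteration}, so there is nothing to check.

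For the inductive step, assume $(\varphi_C\cdot Mc)^{(n)} = \varphi^{(n)}_C\cdot Mc^{(n)}$. Unfolding the left-hand side using the definition of the iterated coalgebra structure,
\[
(\varphi_C\cdot Mc)^{(n+1)} = H^n(\varphi_C\cdot Mc)\cdot(\varphi_C\cdot Mc)^{(n)} = H^n\varphi_C\cdot H^nMc\cdot \varphi^{(n)}_C\cdot Mc^{(n)},
\]
where the second equality uses the induction hypothesis and functoriality of $H^n$. Unfolding the right-hand side using the definition of the iterated transformation and the iterated coalgebra structure yields
\[
\varphi^{(n+1)}_C\cdot Mc^{(n+1)} = H^n\varphi_C\cdot \varphi^{(n)}_{HC}\cdot MH^nc\cdot Mc^{(n)}.
\]
Comparing these two expressions, the identity reduces to the equation
\[
H^nMc\cdot \varphi^{(n)}_C = \varphi^{(n)}_{HC}\cdot MH^nc,
\]
which is exactly the naturality square for $\varphi^{(n)}\colon MH^n\to H^nM$ applied to the morphism $c\colon C\to HC$. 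Hence both sides are equal, completing the induction.

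There is no real obstacle here: the lemma is essentially a bookkeeping identity. The only point requiring a moment of care is recognising that $\varphi^{(n)}$ is genuinely a natural transformation (which follows inductively from naturality of $\varphi$ and of whiskerings by $H^k$), so that its naturality square at $c\colon C\to HC$ can be invoked.
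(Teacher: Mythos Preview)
Your proof is correct and follows essentially the same approach as the paper: induction on $n$, with the inductive step using the induction hypothesis, the naturality of $\varphi^{(n)}$, and the definition of $\varphi^{(n+1)}$. The paper presents this as a single commutative diagram whereas you write out the equational reasoning, but the logical content is identical.
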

\begin{proof}
    For $n = 0$, the equality reduces to $\id_{MC} = \id_{MC}$. For the
    induction step, we have that
    \begin{equation*}
      \begin{inparaenum}[(1)]
\begin{tikzcd}[column sep=1.2cm]
          MC \arrow{r}{Mc^{(n)}}
          \arrow[shiftarr={yshift=6mm}]{rr}{c^{(n+1)}}
          \arrow{dr}[sloped,below]{(\varphi_C\cdot Mc)^{(n)}}
          \arrow[shiftarrDR={yshift=-6mm}]{drrr}[pos=0.75,below]{(\varphi_C\cdot
            Mc)^{(n+1)}} & MH^{n}C \arrow{r}{MH^nc}
          \arrow{d}{\varphi^{(n)}_C}[left,xshift=-1mm]{\text{\item}}
          \descto{dr}{\text{\item}} & MH^{n+1}C
          \arrow{d}[left]{\varphi^{(n)}_{HC}}[right,xshift=1mm]{\text{\item}}
          \arrow{dr}[sloped,above]{\varphi^{(n+1)}_{C}}
          \\
          & H^{n}MC \arrow{r}[below]{H^nM c} & H^{n}MHC
          \arrow{r}[below]{H^n \varphi_C} & H^{n+1}MC
        \end{tikzcd}
      \end{inparaenum}
  \end{equation*}
  commutes, using \begin{inparaenum}[(1)]\item the induction hypothesis, \item naturality of
  $\varphi^{(n)}$, and \item the definition of $\varphi^{(n+1)}$.
    \end{inparaenum}
\end{proof}
\noindent In Lemma~\ref{truncationinduction} below we will establish a
coinduction principle using iterated coalgebra structures. For its
soundness proof, we use that for a finitary set endofunctor the
terminal coalgebra can be obtained by an iterative construction that
we now recall.
\begin{remark}
  \label{rem:worrell}
  \begin{enumerate}[(1)]
  \item Let $H: \Set \to \Set$ be a finitary endofunctor. The \emph{terminal sequence} of $H$ is the op-chain $(H^n1)_{n< \omega}$ with the connecting maps
  \[
    H^n !: H^{n+1} 1 \to H^n 1 \qquad \text{for every $n < \omega$.}
  \]
  Its limit $H^\omega 1$ does not in general yield the terminal
  coalgebra. However, Worrell~\cite{w05} shows that by continuing the
  terminal sequence for $\omega$ more steps, one does obtain the
  terminal coalgebra. Indeed, denote by
  $\ell_{\omega,n}: H^\omega 1 \to H^n 1$ the limit projections and
  let $H^{\omega +n} 1 = H^n(H^\omega 1)$. We define the connecting
  map $\ell_{\omega+1,\omega}: H^{\omega+1} 1 \to H^\omega 1$ as the
  unique morphism such that
  $\ell_{\omega,n} \cdot \ell_{\omega+1,\omega} = H\ell_{\omega,n}$,
  and by applying $H$ iteratively we obtain
  $\ell_{\omega + n +1, \omega + n} = H^n\ell{\omega+1,\omega}:
  H^{\omega + n + 1}1 \to H^{\omega + n} 1$.
  Worrell proves that the limit of the ensuing op-chain formed by the
  $H^{\omega + n} 1$ is the terminal coalgebra $\nu H$. Moreover, he
  shows that all connecting morphisms
  $\ell_{\omega + n + 1, \omega + n}$ are injective maps; it follows
  that $\nu H$ is actually the intersection of all $H^{\omega + n} 1$.

\item Recall that every coalgebra $c: C \to HC$ induces a
  \emph{canonical cone} $c_n: C \to H^n 1$, $n < \omega + \omega$ on
  the above op-chain defined by (transfinite) induction as follows:
  $c_0: C \to 1$ is uniquely determined, for isolated steps one has
  $c_{n+1} = Hc_n \cdot c$ and for the limit step we define $c_\omega$
  to be the unique map such that
  $\ell_{\omega,n}\cdot c_\omega = c_n$.  Note that the unique
  $H$-coalgebra morphism $c^\dagger: C \to \nu H$ can be obtained as
  the unique map such that
  $\ell_{\omega+\omega, n} \cdot c^\dagger = c_{n}$ for every
  $n < \omega + \omega$, where the maps
  $\ell_{\omega+\omega, n}: \nu H \to H^n 1$ are the limit
  projections.
\end{enumerate}
\end{remark}
\begin{lemma}\label{truncationinduction}
  Let $H: \Set\to \Set$ be a finitary endofunctor. If for
  $H$-coalgebras $(C,c)$ and $(D,d)$ there is an object $X$ with maps
  $p_1: X\to C$ and $p_2: X\to D$ such that
    \begin{equation}
    \label{coindcondition}
    \begin{tikzcd}
        X
            \arrow{r}{p_1}
            \arrow{dr}[below left]{p_2}
        & C \arrow{r}{c^{(n)}}
        & H^n C
            \arrow{dr}{H^n !}
    \\
    {}
    & D \arrow{r}{d^{(n)}}
    & H^n D \arrow{r}{H^n !}
    & H^n 1
    \end{tikzcd}
    \end{equation}
    commutes for all $n < \omega$, then $c^\dagger \cdot p_1 = d^\dagger \cdot p_2$.
\end{lemma}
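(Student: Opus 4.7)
\medskip
\noindent\emph{Proof proposal.} The plan is to recognise that the hypothesis is equivalent to agreement of the canonical cones from Remark~\ref{rem:worrell}(2) at every finite level, promote this agreement first to level $\omega$ via the limit property, and finally to $\nu H$ using Worrell's observation that the post-$\omega$ connecting maps are injective.

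First, I would identify $H^n!\cdot c^{(n)}$ with the canonical cone component $c_n : C \to H^n 1$ of Remark~\ref{rem:worrell}(2). By induction on $n$, using the routine identity $c^{(n+1)} = Hc^{(n)} \cdot c$ for iterated coalgebra structures together with functoriality of $H$, this reduces to the recursion $c_{n+1} = Hc_n \cdot c$ defining $c_n$. Consequently, hypothesis~\eqref{coindcondition} amounts to $c_n \cdot p_1 = d_n \cdot p_2$ for all $n < \omega$, and by the universal property of the limit $H^\omega 1$ one obtains $c_\omega \cdot p_1 = d_\omega \cdot p_2$.

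The second step lifts this equality to $\nu H$. By Worrell's result (Remark~\ref{rem:worrell}(1)), all connecting maps $\ell_{\omega+n+1,\omega+n}$ are injective and $\nu H$ is the intersection of the $H^{\omega+n} 1$; in particular, the transfinite connecting map $\ell_{\omega+\omega,\omega} : \nu H \to H^\omega 1$ is injective. Moreover $\ell_{\omega+\omega,\omega} \cdot c^\dagger = c_\omega$: post-composing with any $\ell_{\omega,n}$ gives $\ell_{\omega+\omega,n} \cdot c^\dagger = c_n$, and $c_\omega$ is uniquely characterised by exactly this cone property; analogously $\ell_{\omega+\omega,\omega} \cdot d^\dagger = d_\omega$. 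Combining these facts yields
\[
  \ell_{\omega+\omega,\omega} \cdot c^\dagger \cdot p_1 \;=\; c_\omega \cdot p_1 \;=\; d_\omega \cdot p_2 \;=\; \ell_{\omega+\omega,\omega} \cdot d^\dagger \cdot p_2,
\]
and injectivity of $\ell_{\omega+\omega,\omega}$ forces $c^\dagger \cdot p_1 = d^\dagger \cdot p_2$, as required.

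The main obstacle is that the hypothesis controls behaviour only up to finite depth, whereas in general points of $\nu H$ are separated by the full $\omega+\omega$-iteration of the terminal sequence, not merely its $\omega$-truncation. Worrell's injectivity of the post-$\omega$ connecting maps is precisely what makes agreement at level $\omega$ already suffice here; without finitarity (and the injectivity input it yields) the implication would fail. Everything else in the argument is either a direct unfolding of the definition of the canonical cone or a use of a single universal property.
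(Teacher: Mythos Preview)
Your proposal is correct and follows essentially the same approach as the paper: identify $H^n!\cdot c^{(n)}$ with the canonical cone component $c_n$, rewrite the hypothesis as $c_n\cdot p_1 = d_n\cdot p_2$, and then invoke Worrell's injectivity of the post-$\omega$ connecting maps to conclude. The paper packages the last two steps slightly more tersely (stating directly that $c^\dagger(x)=d^\dagger(y)$ iff $c_n(x)=d_n(y)$ for all $n<\omega$), while you unfold the passage through level $\omega$ explicitly; the content is the same.
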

\begin{proof}
  First, an easy induction shows that for the canonical cone
  $c_n: C \to H^n 1$ we have $c_n = H^n ! \cdot c^{(n)}$ for every
  $n < \omega$. Now it follows from Remark~\ref{rem:worrell} that
  elements $x \in C$ and $y \in D$ are behaviourally equivalent,
  i.e.~$c^\dagger(x) = d^\dagger(y)$, if and only if
  $c_n(x) = d_n (y)$ for all $n < \omega$. Indeed, necessity is
  obvious since $c_n = \ell_{\omega+\omega,n} \cdot c^\dagger$ (and
  similarly for $d$) and sufficiency follows from the fact that all
  $\ell_{\omega + n, \omega}$ are injective. 

    By hypothesis we have for every $x \in X$ that 
    \[
      c_n (p_1(x)) = H^n ! \cdot c^{(n)} \cdot p_1 (x) = H^n ! \cdot d^{(n)} \cdot p_1 = d_n(p_2(x)),
    \]
    and equivalently, $c^\dagger (p_1(x)) = d^\dagger(p_2(x))$, which completes the proof.     
\end{proof}
\begin{remark}\label{betaRemark} 
  Consider the nominal sets $\bar F^n D (\varrho F)$ for $n < \omega$
  (recalling that $DX$ is $X$ equipped with the trivial nominal
  structure). We denote by
    \[
        \beta_n: \perms(\V)×F^n (\varrho F) \to F^n (\varrho F)
    \]
    the group action on $\bar F^n D(\varrho F)$. Note that for $n=0$,
    the action $\beta_0$ is trivial, i.e.~it is the projection
    \[
      \beta_0 = \outr: \perms(\V) × \varrho F \to \varrho F.
    \]
    By Assumption~\ref{lambdaassumption}, the lifting $\bar F$ is specified
    by a distributive law. Hence, we have
    $\beta_{n+1} = F\beta_n \cdot \lambda_{F^n\varrho}$. An easy
    induction thus shows that $\beta_n$ has the form
    \begin{equation}
        \label{betandef}
        \beta_n = \big(
        \perms(\V) × F^n(\varrho F)
        \xrightarrow{\lambda^{(n)}_{\varrho F}} F^n(\perms(\V)× \varrho F)
        \xrightarrow{F^n\outr} F^n(\varrho F)
        \big).
    \end{equation}
\end{remark}
\begin{lemma}
    For the $\perms(\V)$-set structure from Lemma~\ref{gsetstructure}, $t\in
    \varrho F$ is supported by
    \[
        s(t) = \bigcup_{n\ge 0} \supp(r^{(n)}(t))\quad\text{where $r^{(n)}: \varrho F \to F^n(\varrho F)$ }
    \]
    and where the support of $r^{(n)}(t)$ is taken in $\bar F^n D(\varrho F)$.
\end{lemma}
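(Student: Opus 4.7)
The plan is to show that for any $\pi \in \Fix(s(t))$ we have $\pi \cdot t = t$. By the construction in Lemma~\ref{gsetstructure}, $\pi \cdot t$ equals $u(\pi,t)$, where $u$ is the unique $F$-coalgebra homomorphism from $(\perms(\V)\times\varrho F,\ \lambda_{\varrho F}\cdot(\id\times r))$ to $(\varrho F,r)$. Since $F$ preserves monos and $\varrho F$ sits inside $\nu F$ as a subcoalgebra (Proposition~\ref{prop:image}), the unique map $r^\dagger:\varrho F \to \nu F$ is injective, so it suffices to prove $r^\dagger(u(\pi,t)) = r^\dagger(t)$.

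I will establish this equality by applying the coinduction principle of Lemma~\ref{truncationinduction} with $(C,c) = (D,d) = (\varrho F,r)$, $X = 1$, $p_1(*) = u(\pi,t)$, and $p_2(*) = t$. The hypothesis of that lemma reduces, for each $n < \omega$, to verifying
\[
F^n!\bigl(r^{(n)}(u(\pi,t))\bigr) \;=\; F^n!\bigl(r^{(n)}(t)\bigr).
\]
To unfold the left-hand side, I iterate the coalgebra-morphism equation $r\cdot u = Fu\cdot\lambda_{\varrho F}\cdot(\id\times r)$ and apply Lemma~\ref{coalgtransiteration} (with $M = \perms(\V)\times(-)$ and $\varphi = \lambda$) to obtain
\[
r^{(n)}\cdot u \;=\; F^n u \cdot \lambda^{(n)}_{\varrho F}\cdot (\id\times r^{(n)}).
\]

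Postcomposing with $F^n!$ and observing that $!\cdot u = !\cdot\outr$ (both being the unique map $\perms(\V)\times\varrho F \to 1$), we get $F^n!\cdot F^n u = F^n!\cdot F^n\outr$. Equation~\eqref{betandef} in Remark~\ref{betaRemark} then identifies $F^n\outr\cdot\lambda^{(n)}_{\varrho F}$ with $\beta_n$, so the required equation becomes
\[
F^n!\bigl(\beta_n(\pi,r^{(n)}(t))\bigr) \;=\; F^n!\bigl(r^{(n)}(t)\bigr).
\]
In fact, the stronger equality $\beta_n(\pi,r^{(n)}(t)) = r^{(n)}(t)$ holds: by Remark~\ref{betaRemark}, $\beta_n$ is the nominal action on $\bar F^n D(\varrho F)$, so $\supp(r^{(n)}(t))$ supports $r^{(n)}(t)$ by the very definition of support, and by hypothesis $\pi \in \Fix(s(t)) \subseteq \Fix(\supp(r^{(n)}(t)))$.

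The main obstacle I anticipate is the bookkeeping of the iterated distributive law: I need the abstractly defined coinductive action $u$ on $\varrho F$ to line up, at every finite depth $n$, with the concrete iterated action $\beta_n$ on $\bar F^n D(\varrho F)$ whose support we can control. This is exactly the content of Lemma~\ref{coalgtransiteration} combined with the factorisation \eqref{betandef} in Remark~\ref{betaRemark}; once that bridge is in place, localizability of $\lambda$ is not re-invoked here and the result follows directly from the support property of each nominal set $\bar F^n D(\varrho F)$.
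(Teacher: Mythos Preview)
Your proof is correct and follows essentially the same route as the paper: both arguments reduce to checking the hypothesis of Lemma~\ref{truncationinduction} via Lemma~\ref{coalgtransiteration} and the identification~\eqref{betandef} of $\beta_n$, then use that $\pi\in\Fix(\supp(r^{(n)}(t)))$ gives $\beta_n(\pi,r^{(n)}(t))=r^{(n)}(t)$. The only (cosmetic) difference is that the paper instantiates Lemma~\ref{truncationinduction} with the two coalgebras $(\perms(\V)\times\varrho F,\,p)$ and $(\varrho F,r)$ and points $(\pi,t)$, $t$, whereas you first push $(\pi,t)$ through $u$ and then compare $u(\pi,t)$ and $t$ inside the single coalgebra $(\varrho F,r)$; this trades the paper's direct use of $p^{(n)}$ for an extra (trivial) induction showing $r^{(n)}\cdot u = F^n u\cdot p^{(n)}$.
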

\begin{proof}
  Let $\pi \in \Fix(s(t))$. Abbreviate the coalgebra structure on
  $\perms(\V)×\varrho F$ by
  $p := \lambda_{\varrho F} \cdot (\id_{\perms(\V)}× r)$, and recall
  from Lemma~\ref{gsetstructure} that the induced algebra structure on
  $\varrho F$ is $p^\dagger$. Now consider the following diagram:
    \begin{inparaenum}[(1)]
    \[
    \begin{tikzcd}
        1
        \arrow{r}{(\pi,t)}
        \arrow{dd}[left]{t}
        \descto[xshift=-3mm]{ddr}{\text{\item}}
        &\perms(\V) × \varrho F
        \arrow{d}[left]{\perms(\V) × r^{(n)}}
        \arrow[to path=-| (\tikztotarget) \tikztonodes]{dr}[above right,pos=0.2]{p^{(n)}}
        \descto[yshift=2mm]{dr}{\text{Lemma~\ref{coalgtransiteration}}}
        &
        \\
        {}
        &\perms(\V) × F^n(\varrho F)
         \arrow{r}{\lambda^{(n)}_{\varrho F}}
         \arrow{d}[left]{\outr}
         \arrow{dr}{\beta_n\quad\text{\item}}
        &F^n(\perms(\V) × \varrho F)
        \descto[pos=0.75]{dl}{(*)}
        \arrow{d}{F^n\outr}
        \arrow[to path=-|(\tikztotarget) \tikztonodes]{dr}{F^n!}
        \descto{dr}{\text{\item}}
        \\
        \varrho F
        \arrow{r}{r^{(n)}}
        & F^n(\varrho F)
            \arrow{r}{F^n\id}
        & F^n(\varrho F)
        \arrow{r}{F^n!}
        & F^n 1
    \end{tikzcd}
    \]
    \end{inparaenum}
    Part\begin{inparaenum}[(1)]\item commutes trivially, Part~\item by
      the definition of $\beta_n$, and Part~\item by finality.
    \end{inparaenum}
    For
    \[
        (\pi,r^{(n)}(t)) \in {\perms(\V)× F^n(\varrho F)},
    \]
    Part~$(*)$ commutes as well, because $\pi$ fixes
    ${\supp(r^{(n)}(t))\subseteq s(t)}$, and thus
    $\pi \cdot r^{(n)}(t) = r^{(n)}(t)$ holds in
    $\bar F D(\varrho F)$, i.e. we have
    $\beta_n(\pi,r^{(n)}(t)) = r^{(n)}(t)$.  It follows that the
    outside of the diagram commutes, and we obtain by
    Lemma~\ref{truncationinduction} that $(\pi,t)$ and $t$ are
    identified in $\nu F$ and thus also in its subcoalgebra
    $\varrho F$. In other words, $\pi\cdot t = t$ with respect to the
    algebra structure $p^\dagger: \perms(\V)×\varrho F \to \varrho F$,
    and therefore $s(t)$ supports $t$.
\end{proof}

\begin{lemma}
  For $t\in \varrho F$, $s(t)$ is finite.
\end{lemma}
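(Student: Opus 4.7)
The plan is to bound $s(t)$ by a finite subset of $\V$ derived from a finite \Set-subcoalgebra containing $t$. Since $\varrho F$ is locally finitely presentable as an $F$-coalgebra in \Set (it is, by construction, the colimit of the finite ones), there exists a finite \Set-subcoalgebra $i\colon(C_0, c_0)\hookrightarrow(\varrho F, r)$ with $t\in C_0$. Being a subcoalgebra yields $r^{(n)}\cdot i=F^n i\cdot c_0^{(n)}$ for every $n\ge 0$, hence $r^{(n)}(t)=F^n i(c_0^{(n)}(t))$. The inclusion $i\colon DC_0\to D(\varrho F)$ is trivially equivariant (both sides being discrete), so $\bar F^n i\colon\bar F^n DC_0\to\bar F^n D(\varrho F)$ is equivariant, and consequently $\supp(r^{(n)}(t))\subseteq\supp(c_0^{(n)}(t))$ with the latter support taken in $\bar F^n DC_0$. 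It thus suffices to exhibit one finite $W_0\subseteq\V$ that supports $c_0^{(n)}(y)$ in $\bar F^n DC_0$ for every $y\in C_0$ and every $n\ge 0$.

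The candidate is $W_0:=\bigcup_{y\in C_0}\supp(c_0(y))$, where each support is taken in $\bar F DC_0$; this set is finite because $C_0$ is finite and $\bar F DC_0$ is nominal. Denote the nominal structure on $\bar F^n DC_0$ by $\beta_n^{C_0}=F^n\outr\cdot\lambda_{C_0}^{(n)}$ (by the same derivation as in Remark~\ref{betaRemark}); in particular $\beta_{n+1}^{C_0}=F\beta_n^{C_0}\cdot\lambda_{F^n C_0}$. Proceed by induction on $n$. The base case $n=0$ is immediate because $DC_0$ is discrete. For the inductive step, fix $y\in C_0$ and $\pi\in\Fix(W_0)=\perms(\V\setminus W_0)$; using $c_0^{(n+1)}=Fc_0^{(n)}\cdot c_0$ together with naturality of $\lambda$ in its $X$-argument,
\[
  \beta_{n+1}^{C_0}(\pi, c_0^{(n+1)}(y))
  = F\bigl(\beta_n^{C_0}\cdot(\id\times c_0^{(n)})\bigr)\bigl(\lambda_{C_0}(\pi, c_0(y))\bigr).
\]
Now apply localizability to the subset $\V\setminus W_0\subseteq\V$: this factors $\lambda_{C_0}(\pi, c_0(y))=F(m\times\id)(w)$ through $w:=\lambda^{\V\setminus W_0}_{C_0}(\pi, c_0(y))\in F(\perms(\V\setminus W_0)\times C_0)$, where $m$ is the subgroup inclusion. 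The inductive hypothesis gives $\beta_n^{C_0}(m(\pi'), c_0^{(n)}(y'))=c_0^{(n)}(y')$ for all $\pi'\in\perms(\V\setminus W_0)$ and $y'\in C_0$, so on the restricted domain $\beta_n^{C_0}\cdot(m\times c_0^{(n)})=c_0^{(n)}\cdot\outr$, and the displayed expression collapses to $Fc_0^{(n)}(F\outr(w))$. Unpacking localizability once more, $F\outr(w)=\beta_1^{C_0}(\pi, c_0(y))$, which equals $c_0(y)$ because $W_0$ supports $c_0(y)$ by construction. Hence the whole expression reduces to $Fc_0^{(n)}(c_0(y))=c_0^{(n+1)}(y)$, completing the induction. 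Consequently $s(t)\subseteq W_0$ is finite.

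The main obstacle is precisely this induction step: localizability must be lined up exactly with the iterative formula for $\beta_n^{C_0}$ so that the action of $\pi\in\Fix(W_0)$ is channelled from the full distributive law $\lambda$ through the sub-law $\lambda^{\V\setminus W_0}$, at which point the inductive hypothesis forces the inner composite to collapse to the identity on every iterated value $c_0^{(n)}(y')$.
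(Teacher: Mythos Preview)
Your proof is correct and follows essentially the same strategy as the paper: pick a finite \Set-subcoalgebra containing $t$, take as the bounding set the union of the one-step supports over that subcoalgebra, and then prove by induction on $n$ that this set supports all iterates, with localizability doing the work in the inductive step. The only organizational difference is that you transfer the problem to $\bar F^n DC_0$ via the equivariant map $\bar F^n(Di)$ and run the induction there, whereas the paper keeps everything phrased in $\bar F^n D(\varrho F)$ and tracks the restriction to the subcoalgebra through a large commutative diagram; since $Fi$ is an equivariant mono, the two bounding sets in fact coincide, and the inductions are the same computation in different notation.
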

\begin{proof}
  Since $\varrho F$ is lfp in \Set, we have a finite subcoalgebra
  $j: (C,c) \to (\varrho F,r)$ containing $t$. Then define
    \[
        S = \bigcup_{x\in C} \supp(r\cdot j(x)) \subseteq \V,
    \]
    where the support is taken in $\bar FD(\varrho F)$. Clearly
    finiteness of $C$ implies that $S$ is finite.  We will now show
    that $s(t) \subseteq S$ by proving that $S$ supports
    $r^{(n)}(t) \in \bar F^nD(\varrho F)$ for every $n < \omega$. This
    follows once we show that the diagram
    \begin{equation}
    \label{supportClaim}
    \begin{tikzcd}[column sep=13mm]
        G×C \arrow{r}{G×c{}^{(n)}}
        & G×F^n C
            \arrow[yshift=2]{r}{\beta_n'}
            \arrow[yshift=-2]{r}[below]{\outr_n'}
        & F^n\varrho F
    \end{tikzcd}
    \end{equation}
    commutes, with $G = \perms(\V \setminus S)$,
    $m: G\hookrightarrow \perms(\V)$, $\beta_n'$ the restriction of
    group action $\beta_n$ on $\bar F^n D(\varrho F)$ to $G$ and $C$,
    i.e.~$\beta_n' = \beta_n\cdot (m×F^nj)$, and
    $\outr_n' = \outr \cdot (m×F^nj) = F^n j \cdot \outr$.

    To show commutation of~\eqref{supportClaim}, we proceed by
    induction in $n$. For $n=0$, \eqref{supportClaim} is clear,
    because $\beta_0 = \outr$.  For the induction step consider the
    diagram
    \[
    \begin{inparaenum}[(1)]
    \begin{tikzcd}[row sep=12mm]
        &{} \descto[pos=0.7]{d}{\text{Def.}}
        & G×F^{n+1} C
            \arrow{r}{\lambda_{F^nC}^G}
            \arrow[to path={
                        -- ([yshift=3mm]\tikztostart.north)
                        -| (\tikztotarget) \tikztonodes
                    }]{rrd}[pos=0.25]{\beta_{n+1}'}
            \descto[xshift=-4mm]{d}{\text{Naturality}}
        & F(G×F^nC)
            \arrow[xshift=2mm]{dr}{F\beta_{n}'}
            \descto{d}{F(\text{IH})}
            \descto{r}{\text{\item}}
        &{}
        \\
        G×C \rar{G×c}
            \arrow[bend left=10]{urr}[sloped,above]{G×c^{(n+1)}}
            \arrow{dr}[sloped,below]{G×c^{(n+1)}}
            \descto[xshift=8mm]{dr}{\text{Def.}}
        & G× FC \rar{\lambda_C^G}
            \arrow[bend left=10]{ur}[sloped,above]{G×Fc^{(n)}}
            \arrow{d}{G×Fc^{(n)}}
            \arrow{drr}[sloped,above,pos=0.6]{\beta_1'}
            \arrow[bend right=20]{drr}[sloped,above,pos=0.7]{\outr_1'}
        & F(G×C) \arrow[xshift={-1mm}]{ur}[sloped,above]{F(G×c^{(n)})\ }
                \arrow{r}[above]{F(G×c^{(n)})}
                \arrow{dr}[sloped,above]{F\outr_0'}
                \descto[pos=0.15]{d}{\text{\item}}
                \descto[pos=0.60]{d}{\text{\item}}
        & F(G×F^nC) \arrow{r}{F\outr_n'}
                \descto[pos=0.5]{d}{\text{\item}}
        & F^{n+1}\varrho F
        \\
        &|[yshift=-5mm]|G× F^{n+1}C
            \arrow[to path={-|(\tikztotarget)\tikztonodes}]{urrr}[below,pos=0.3]{\outr_{n+1}'}
        &
        {} \descto[pos=0.0,yshift=-2mm]{u}{\text{\item}}
        & F\varrho F \arrow{ur}[above]{Fr^{(n)}\ \ }
    \end{tikzcd}
    \end{inparaenum}
    \]
    Most parts commute by the definitions of $\beta_n$, $\beta_n'$, and
    $c^{(n+1)}$, respectively. For the remaining parts:
    \begin{enumerate}[(1)]
    \item is the commutative diagram below:
    \[
    \hspace{-6mm}
    \begin{tikzcd}[column sep =0.9cm]
        G×F^{n+1}C
            \arrow{r}[yshift=1.5mm]{m×F^{n+1}C}
            \arrow{d}[left]{\lambda^G_{F^nC}}
            \descto{dr}{\text{Assumption~\ref{lambdaassumption}}}
            \arrow[
                to path={
                        -- ([yshift=3mm]\tikztostart.north)
                        -|(\tikztotarget)\tikztonodes}
                ]{rrrd}[pos=0.25]{\beta_{n+1}'}
        & \perms(\V)×F^{n+1}C
            \arrow{d}{\lambda_{F^nC}}
            \arrow{r}[yshift=1.5mm]{\perms(\V)×F^{n+1}j}
            \descto{dr}{\text{Naturality of }\lambda}
        & \perms(\V) × F^{n+1}\varrho F
            \arrow{d}[left]{\lambda_{\varrho F}}
            \arrow[xshift=3mm,yshift=1mm]{dr}{\beta_{n+1}}
            \descto[xshift=-6mm]{dr}{\text{Remark~\ref{betaRemark}}}
        \\
        F(G×F^nC)
            \arrow{r}[yshift=-1.5mm,below]{F(m×F^nC)}
            \arrow[shiftarr={yshift=-7mm}]{rrr}[below]{F\beta_n'}
        & F(\perms(\V)×F^nC)
            \arrow{r}[yshift=-1.5mm,below]{F(\perms(\V)×F^nj)}
        &F(\perms(\V)×F^n\varrho F)
            \arrow{r}[yshift=-1.5mm,below]{F^n\beta_n}
        & F^{n+1}\varrho F
    \end{tikzcd}
    \]
    \item is just the previous item for $n=0$ using that $\outr_0' = \beta_0'$.
    \item commutes, i.e.~$\beta_1'= \outr_1'$, because $G$ is defined to consist of those
    permutations that fix every element in $S$ and therefore fix every element in
    the image of $r\cdot j$.
    \item commutes because we can remove $F$ and then prove $\outr_n'\cdot (\id_G×c^{(n)}) =
    r^{(n)}\cdot \outr_0'$ by induction. For $n=0$, the desired equation obviously holds. For the induction step 
    consider the commutative diagram below:
    \[
    \begin{tikzcd}[column sep=1.1cm]
        G×C
            \rar{G×c^{(n)}}
            \arrow[shiftarr={yshift=8mm}]{rr}{G×c^{(n+1)}}
            \arrow{d}[left]{\outr}
            \descto{ddr}{\,\text{(IH)}}
            \arrow[shiftarr={xshift=-8mm}]{dd}[left]{\outr_0'}
        & G×F^nC
            \arrow{d}[left]{\outr}
            \descto{dr}{\text{Naturality}}
            \arrow{r}{G×F^nc}
        & G×F^{n+1}C
            \arrow{d}{\outr}
            \arrow[shiftarr={xshift=12mm}]{dd}{\outr_{n+1}'}
        \\
        C
            \arrow{d}[left]{j}
        & F^n C
            \arrow{d}[left]{F^nj}
            \arrow{r}{F^nc}
            \descto{dr}{\,\text{Coalgebra Hom.}}
        & F^{n+1} C
            \arrow{d}{F^{n+1}j}
        \\
        \varrho F
            \arrow{r}{r^{(n)}}
            \arrow[shiftarr={yshift=-5mm}]{rr}[below]{r^{(n+1)}}
        & F^n\varrho F
            \arrow{r}{F^n r}
        & F^{n+1}\varrho F
    \end{tikzcd}
    \]
    \item commutes by a similar induction proof as the previous item, but starting with $n=1$.
    \qedhere
    \end{enumerate}
\end{proof}
\noindent The two previous lemmas combined imply that $\varrho F$ is a
nominal set; by Lemma~\ref{gsetstructure}, $(\varrho F, r)$ is a
$\bar F$-coalgebra, and by Lemma~\ref{setLfp2NomLfp} this coalgebra is
lfp.
\begin{theorem}\label{thm:lift}
    The lifted coalgebra $(\varrho F, r)$ is the rational fixpoint of $\bar F$.
\end{theorem}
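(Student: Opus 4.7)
The plan is to show that $(\varrho F, r)$ satisfies the universal property of the rational fixpoint of $\bar F$, namely that it is the final lfp $\bar F$-coalgebra in $\Nom$. By the preceding two lemmas combined with Lemma~\ref{gsetstructure}, $\varrho F$ carries a nominal structure making $r$ equivariant; moreover, by Lemma~\ref{setLfp2NomLfp}, $(\varrho F, r)$ is lfp in $\Nom$ since $\varrho F$ is lfp in $\Set$ by construction. Thus it only remains to exhibit, for every lfp $\bar F$-coalgebra $(C,c)$ in $\Nom$, a unique equivariant $F$-coalgebra homomorphism $c^\dagger:(C,c)\to(\varrho F,r)$.

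For existence, the corollary following Lemma~\ref{nomLfp2SetLfp} ensures that the underlying $\Set$-coalgebra of $(C,c)$ is lfp; by finality of $\varrho F$ among lfp $F$-coalgebras in $\Set$, there is a unique $\Set$-level $F$-coalgebra morphism $c^\dagger: C\to\varrho F$. The heart of the argument is to verify that this $c^\dagger$ is equivariant, i.e.~that $c^\dagger \cdot \alpha_C = u \cdot (\id_{\perms(\V)}\times c^\dagger)$, where $\alpha_C$ denotes the action on $C$ and $u: \perms(\V)\times\varrho F \to \varrho F$ the action constructed in Lemma~\ref{gsetstructure}.

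The strategy for equivariance is a coinduction in $\Set$. Consider the $\Set$-coalgebra $(\perms(\V) \times C,\, q)$ with $q = \lambda_C \cdot (\id_{\perms(\V)} \times c)$. Both $c^\dagger \cdot \alpha_C$ and $u \cdot (\id\times c^\dagger)$ are $F$-coalgebra homomorphisms from $(\perms(\V)\times C, q)$ to $(\varrho F, r)$: for the first, equivariance of $c$ yields $c\cdot\alpha_C = F\alpha_C\cdot q$, so $\alpha_C$ itself is an $F$-coalgebra morphism and we postcompose with $c^\dagger$; for the second, $\id\times c^\dagger$ is an $F$-coalgebra morphism from $q$ to $p := \lambda_{\varrho F} \cdot (\id\times r)$ by naturality of $\lambda$, and $u = p^\dagger$ by definition (Lemma~\ref{gsetstructure}). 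Provided $(\perms(\V)\times C,q)$ is lfp in $\Set$, finality of $\varrho F$ forces the two maps to coincide. To establish lfp-ness of $(\perms(\V)\times C,q)$ I would argue exactly as in the proof of Lemma~\ref{gsetstructure}: for any $(\pi, x)$, pick a finite subgroup $G = \perms(W) \subseteq \perms(\V)$ with $\supp(\pi)\subseteq W$ and a finite $F$-subcoalgebra $(S,s)$ of $C$ containing $x$; then localizability of $\lambda$ provides $\lambda_S^{G}: G\times FS \to F(G\times S)$, turning $G\times S$ with structure $\lambda_S^{G}\cdot (G\times s)$ into a finite $F$-coalgebra, and naturality together with localizability of $\lambda$ ensures that the inclusion $G\times S \rightarrowtail \perms(\V)\times C$ is an $F$-coalgebra homomorphism.

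Uniqueness in $\Nom$ is immediate from uniqueness in $\Set$: any equivariant coalgebra morphism $(C,c)\to(\varrho F,r)$ is in particular an $F$-coalgebra morphism of sets, hence equals $c^\dagger$ by finality in $\Set$. The main obstacle is the equivariance verification, which depends on $(\perms(\V)\times C,q)$ being lfp in $\Set$; this is precisely the place where the localizability assumption is indispensable, mirroring its role in Lemma~\ref{gsetstructure}. Example~\ref{nonLocalExample} confirms that without localizability the argument cannot be rescued.
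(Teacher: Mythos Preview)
Your argument is correct, and the overall architecture matches the paper's: establish that $(\varrho F,r)$ is an lfp $\bar F$-coalgebra, obtain the unique $\Set$-level homomorphism $c^\dagger$, and then verify its equivariance. The genuine difference lies in that last step. The paper argues equivariance indirectly via the final coalgebra: since $(\nu F,t)$ lifts to the final $\hat F$-coalgebra in $\GSet$, the unique map $h':C\to\nu F$ is automatically equivariant; the inclusion $j:\varrho F\hookrightarrow\nu F$ was already shown (in Lemma~\ref{gsetstructure}) to be equivariant, and $j\cdot c^\dagger=h'$ with $j$ monic forces $c^\dagger$ to be equivariant. Your route instead stays entirely within the rational/lfp world: you equip $\perms(\V)\times C$ with the coalgebra structure $q=\lambda_C\cdot(\id\times c)$, show it is lfp in $\Set$ by the same localizability argument as in Lemma~\ref{gsetstructure}, and then use finality of $\varrho F$ among lfp $F$-coalgebras to identify the two candidate maps.

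Both approaches are sound. The paper's is shorter because the hard work (that the action on $\varrho F$ restricts that on $\nu F$, and that $j$ is equivariant) has already been done inside Lemma~\ref{gsetstructure}. Your approach is more self-contained in that it never mentions $\nu F$, at the cost of essentially replaying the lfp-ness argument of Lemma~\ref{gsetstructure} with $C$ in place of $\varrho F$. One small remark: you phrase the universal property for arbitrary lfp $\bar F$-coalgebras, whereas the paper checks it only on orbit-finite ones; either suffices, but yours relies on the corollary after Lemma~\ref{nomLfp2SetLfp} (lfp in $\Nom$ implies lfp in $\Set$) to guarantee the finite $\Set$-subcoalgebra $(S,s)$ you need, which you correctly invoke.
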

\begin{proof}
  It remains only to show that for every $\bar F$-coalgebra $(C,c)$ in
  $\Nom$ with orbit-finite carrier there exists a unique coalgebra
  homomorphism from $C$ to $\varrho F$. So let $(C,c)$ be an
  orbit-finite $\bar F$ coalgebra. By Lemma~\ref{nomLfp2SetLfp}, $(C,c)$
  is an lfp $F$-coalgebra, and thus induces a unique $F$-coalgebra
  homomorphism $h: (C,c)\to (\varrho F,r)$ (in $\Set$). This
  homomorphism $h: C\to \varrho F$ is equivariant; to see this, recall
  first that the final $F$-coalgebra $(\nu F, t)$ lifts to the final
  $\hat F$-coalgebra, where $\hat F$ is the lifting of $F$ to $\perms(\V)$-sets
  induced by the distributive law. Let $h': (C, c) \to (\nu F, t)$ be the unique
  homomorphism into the final $\hat F$-coalgebra, which is an
  equivariant map. Recall further that $(\varrho F, r)$ is a subcoalgebra
  of $(\nu F, t)$ via $j: \varrho F \to \nu F$, say. Note that the
  group action on $\varrho F$ has been defined as the restriction of that on
  $\nu F$, and so $j$ is equivariant. Then clearly $j \cdot
  h= h'$ by finality of $\nu F$; and since $j$ is equivariant and monic it
  follows that $h$ is equivariant.
  \end{proof}

\begin{example}\label{ex:rats}
  \begin{enumerate}[(1)]
  \item For all canonical liftings $\bar F$ (e.g.~the ones we mentioned in Example~\ref{setFunctorList}), the rational fixpoint $\varrho \bar F$ is the rational fixpoint $\varrho F$ equipped with the discrete action (cf.~Example~\ref{ex:coalgs}). Note that for the cyclic shift functor~$\mathcal Z$ on $\Set$, the final coalgebra consists of all finitely branching trees where the order of the children of any vertex is taken modulo cyclic shifting, and the rational fixpoint is given by all rational such trees; this follows from the results of~\cite{am06}.
  \item Recall that the final coalgebra for $ LX = \V + X \times X + \V \times X$ on $\Set$ is carried by the set of all $\lambda$-trees and the rational fixpoint by the set of all rational $\lambda$-trees. It follows that the rational fixpoint of the (non-canonical but localizable) lifting~$\bar L$ where $\V$ is equipped with the standard action is carried by the same set with the nominal structure given according to Lemma~\ref{gsetstructure}; this action applies the standard action of $\V$ to the labels of the leaves of $\lambda$-trees. 
  \item For the functor $\Bag(-) + \V$ on $\Set$, the final coalgebra is carried by all unordered trees some of whose leaves are labelled in $\V$. The rational fixpoint is then carried by the set of all rational such trees. To obtain the rational fixpoint of the non-canonical lifting $\bar{\Bag}(-) + \V$, one equips this set of trees with the action that applies the standard action of $\V$ on the labels of leaves. This follows once again from Lemma~\ref{gsetstructure} and Theorem~\ref{thm:lift}. A similar description can be given for the functor $\bar{\mathcal Z} + \V$; we obtain rational trees some of whose leaves are labelled in $\V$ and where the order of the children of a vertex is only determined up to cyclic shift. 
  \end{enumerate}
\end{example}

\section{Quotients of \Nom-Functors}\label{sec:quotients}

We next consider quotient functors on $\Nom$. For the rest of this
section we assume a finitary functor $H: \Nom \to \Nom$ that is a
\emph{quotient} of a finitary functor $F: \Nom \to \Nom$, i.e.~we have
a natural transformation $q: F \twoheadrightarrow H$ with surjective components. We
present a sufficient condition on coalgebras for $F$ and $H$ that
ensures that the rational fixpoint $\varrho H$ is a quotient of the
rational fixpoint~$\varrho F$. We then introduce a simple, if ad-hoc,
condition on $F$ that ensures that the mentioned sufficient condition
is satisfied for all quotients $H$ of $F$. Combining this result with
the ones from the previous section, we obtain a description of the
rational fixpoint of endofunctors $H$ on $\Nom$ arising from binding
signatures and exponentiation.

\begin{definition}\label{coalgquotient}
  An $H$-coalgebra $(C,c)$ is a \emph{quotient} of an
  $F$-coalgebra $(A,a)$ if there is a surjective $H$-coalgebra
  homomorphism $h: (A,q_A\cdot a) \to (C,c)$.
\end{definition}
\begin{theorem}
  \label{thm:quot}
  Suppose that every orbit-finite $H$-coalgebra is a quotient of an
  orbit-finite $F$-coalgebra. Then the rational fixpoint of $H$ is a
  quotient of the rational fixpoint of $F$.
\end{theorem}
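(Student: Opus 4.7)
The plan is to construct a surjective $H$-coalgebra homomorphism $\phi\colon (\varrho F, q_{\varrho F}\cdot r^F) \twoheadrightarrow (\varrho H, r^H)$, where $r^F$ and $r^H$ denote the coalgebra structures of the rational fixpoints. Since $\varrho F$ is the colimit in $\Nom$ of the forgetful diagram $\Coalgf F \to \Nom$, the first step is to exhibit a cocone from that diagram to $\varrho H$: for each $(A,a) \in \Coalgf F$, the assignment $a \mapsto q_A \cdot a$ turns $A$ into an orbit-finite $H$-coalgebra, and so there is a unique $H$-coalgebra morphism $(q_A \cdot a)^\dagger\colon A \to \varrho H$. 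Naturality of $q$ implies that any $F$-coalgebra homomorphism $f\colon (A,a) \to (B,b)$ is also an $H$-coalgebra homomorphism from $(A,q_A\cdot a)$ to $(B,q_B\cdot b)$, so uniqueness of $(\argument)^\dagger$ makes these maps into a cocone. The induced arrow $\phi\colon \varrho F \to \varrho H$ is then uniquely characterised by $\phi \cdot \iota_{(A,a)} = (q_A \cdot a)^\dagger$, where $\iota_{(A,a)}\colon A \to \varrho F$ is the colimit injection.

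Next, I would verify that $\phi$ is an $H$-coalgebra homomorphism out of $(\varrho F, q_{\varrho F}\cdot r^F)$ by precomposing the defining equation $r^H\cdot\phi = H\phi \cdot q_{\varrho F}\cdot r^F$ with each $\iota_{(A,a)}$. Using that $\iota_{(A,a)}$ is an $F$-coalgebra homomorphism and the naturality square $q_{\varrho F}\cdot F\iota_{(A,a)} = H\iota_{(A,a)}\cdot q_A$, both sides reduce to $H(q_A\cdot a)^\dagger \cdot q_A \cdot a$. Joint epimorphicity of the family $\{\iota_{(A,a)}\}$ (as colimit injections in $\Nom$) then completes this step.

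For surjectivity, pick any $y \in \varrho H$. Since $\varrho H$ is the colimit over $\Coalgf H$ and colimits in $\Nom$ are computed on underlying sets, there exist an orbit-finite $H$-coalgebra $(C,c)$ and an element $x\in C$ with $y = c^\dagger(x)$. The hypothesis provides an orbit-finite $F$-coalgebra $(A,a)$ together with a surjective $H$-coalgebra homomorphism $h\colon (A, q_A\cdot a) \twoheadrightarrow (C,c)$; choose $x' \in A$ with $h(x') = x$. By uniqueness of coalgebra morphisms into $\varrho H$, we have $c^\dagger\cdot h = (q_A\cdot a)^\dagger = \phi \cdot \iota_{(A,a)}$, whence $y = \phi(\iota_{(A,a)}(x'))$. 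I do not anticipate a serious obstacle beyond bookkeeping: the only point needing some care is the set-theoretic representability of arbitrary elements of $\varrho H$ by elements of coalgebras in $\Coalgf H$, which is guaranteed by the construction of $\varrho H$ as a colimit together with the fact that $U\colon \Nom \to \Set$ preserves colimits.
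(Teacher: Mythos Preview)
Your proof is correct and follows essentially the same strategy as the paper. The paper packages the construction of $\phi$ slightly more economically: it observes that $(A,a)\mapsto(A,q_A\cdot a)$ extends to a finitary functor $\Coalg\,F\to\Coalg\,H$, deduces that $(\varrho F,\,q_{\varrho F}\cdot r^F)$ is an lfp $H$-coalgebra, and then obtains $\phi$ directly from finality of $\varrho H$ among lfp coalgebras---thereby bypassing your explicit cocone construction and the verification that $\phi$ is an $H$-coalgebra homomorphism; the surjectivity argument is the same as yours.
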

\begin{proof}
  Let $(\varrho H, r^H)$ and $(\varrho F, r^F)$ be the rational
  fixpoints of $H$ and $F$, respectively.  First we argue that the
  $H$-coalgebra
  \begin{equation}\label{eq:lfp}
    \varrho F \xrightarrow{r^F} F(\varrho F) \xrightarrow{q_{\varrho F}} H(\varrho F)
  \end{equation}
  is lfp. To this end note that the object assignment that maps an
  $F$-coalgebra $(A,a)$ to the $H$-coalgebra $(A, q_A \cdot a)$
  extends to a finitary functor $\Coalg\, F \to \Coalg\, H$ that
  preserves orbit-finite coalgebras. So since $(\varrho F, r^F)$ is
  the filtered colimit of all orbit-finite $F$-coalgebras $(A, a)$,
  the above $H$-coalgebra~\eqref{eq:lfp} on $\varrho F$ is the
  filtered colimit of all orbit-finite $H$-coalgebras of the form
  $(A, q_A\cdot a)$, whence~\eqref{eq:lfp} is an lfp coalgebra for
  $H$.

  Now we obtain a unique $H$-coalgebra homomorphism
  $p: (\varrho F, q_{\varrho F}\cdot r^F) \to (\varrho H, r^H)$ by the
  finality of the latter coalgebra. It remains to show that $p$ is
  surjective. To this end, let $(C,c)$ be an orbit-finite
  $H$-coalgebra. By assumption, $(C,c)$ is a quotient of some
  orbit-finite $F$-coalgebra $(A, a)$, i.e.\ we have a diagram
  \begin{equation}\label{diag:quotient}
    \begin{tikzcd}
        (A,q_A\cdot a) \arrow{r}{a^\dagger}
            \arrow[->>]{d}[left]{h}
        &
        (\varrho F, q_A\cdot r^F)
            \arrow{d}{p}
        \\
        (C,c) \arrow{r}{c^\dagger}
        &
        (\varrho H,r^H)
    \end{tikzcd}
  \end{equation}
  Now the sink consisting of all these $c^\dagger \cdot h$, where
  $(C,c)$ ranges over all orbit-finite $H$-coalgebras, is jointly
  surjective since the $c^\dagger$ are jointly surjective. Thus, it
  follows that $p$ is surjective by commutativity of the
  diagrams~\eqref{diag:quotient}.
\end{proof}

\begin{remark}
  It follows from the previous theorem that $\varrho H$ is the image
  of ${(\varrho F, q_{\varrho F}\cdot r^F)}$ in the final
  $H$-coalgebra. In fact, $\varrho H$ is a subcoalgebra of $\nu H$ via
  the injective $H$-coalgebra homomorphism $m: \varrho H \to \nu H$,
  say. Then $m \cdot p$ is the image-factorization of the
  unique $H$-coalgebra homormorphism from~\eqref{eq:lfp} to $\nu H$.
\end{remark}
\noindent We next introduce the announced condition on $F$ that
ensures satisfaction of the assumptions of Theorem~\ref{thm:quot}.
\begin{definition}\label{def:sstr}
  For nominal sets $X$ and $Y$, we define the nominal subset
    \[
        X<Y = \{ (x,y) \in X×Y \mid \supp(x) \subseteq \supp(y) \}
    \] 
    of $X\times Y$ (this subset is clearly equivariant, so $X<Y$ is
    indeed a nominal set).  A \emph{sub-strength} of $F$ is a family
    of a equivariant maps
    \[
      s_{X,Y}:  FX < Y \to  F(X<Y),
    \]
  indexed by nominal sets $X,Y$ (but not necessarily natural in
  $X,Y$) such that
  \begin{equation}
    \label{diag:substrength}
    \begin{tikzcd}
       FX < Y \arrow{r}{s_{X,Y}} \arrow{dr}[below left]{ \outl} &
       F(X< Y) \arrow{d}{ F\outl} \\ {} &  FX
    \end{tikzcd}
  \end{equation}
  where $\outl: X < Y \to X$ denotes the obvious projection map.
\end{definition}
\begin{example} \label{ex:noSubStrength}
  Not every functor has a sub-strength. The finitary functor $DU$ is a
  lifting of $\Id_\Set$ (in fact, it is a mono-preserving and
  localizable lifting) but has no sub-strength, because for $X=\V$,
  $Y=1$, the nominal set $DU\V < 1$ is just $DU\V$ whereas
  $DU(\V < 1) = DU\emptyset = \emptyset$. Hence, there is no map
  $DU\V<1\to DU(\V<1)$ at all.
\end{example}
\noindent
For the rest of this section, we assume that $F$ has a
sub-strength $s_X$. Moreover, we fix an orbit-finite coalgebra
$c: C\to HC$. We will show that $c$ is a quotient of an $F$-coalgebra
in the sense of Definition~\ref{coalgquotient}, thus showing that the
rational fixpoint of $H$ is a quotient of that of $ F$, by
Theorem~\ref{thm:quot}.

We put
\begin{equation}
    \label{def:B}
    B = \max_{x\in C}|\supp(x)|
       \ +\  \max_{x\in C}\min_{\substack{y\in  FC\\ q_C(y) = c(x)}}
            |\supp(y)|.
\end{equation}
Intuitively, $B$ is a bound on the total number of free and bound
variables in any element $c(x)$. First observe that $B$ exists because the numbers numbers $|\supp(x)|$ and
$\min_{y\in  FC,q_C(y) = c(x)} |\supp(y)|$ are constant on every
orbit of $C$; for the former simply apply Lemma~\ref{orbitsamesupp}, and for the latter suppose that $x = \pi \cdot x'$ and let $y$ and $y'$, respectively, assume the above minimum. Then 
\[
q(y) = c(x) = c( \pi \cdot x') = \pi \cdot c(x) = \pi \cdot q(y') = q(\pi \cdot y')
\]
using equivariance of $c$ and $q$ and therefore $|\supp(y)| \leq |\supp(\pi \cdot y')| = |\supp(y')|$ by minimality and Lemma~\ref{orbitsamesupp}. Similarly $|\supp(y')|\leq |\supp(y)|$ by starting from $x' = \pi^{-1} \cdot x$. 

Next we define $W\subseteq \V^B$ to be the nominal
set of tuples of $B$ distinct atoms. Thus, for every $w \in W$, $|\supp(w)| = B$. 

Note that $W$ has only one orbit, in particular is orbit-finite.
Hence $C×W$ and thus also its subobject $C<W$ are
orbit-finite.
We will use $C<W$ as the carrier of the orbit-finite $F$-coalgebra we
aim to construct.

\begin{lemma}\label{lem:outl-epi}
    The projection $\outl: C<W \to C$ is an epimorphism.
\end{lemma}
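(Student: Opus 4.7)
The plan is straightforward: since epimorphisms in $\Nom$ are exactly the surjective equivariant maps, and $\outl$ is obviously equivariant as a projection, it suffices to establish surjectivity. So given any $x \in C$, I need to exhibit some $w \in W$ such that $(x,w) \in C < W$, i.e.~$\supp(x) \subseteq \supp(w)$.

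The key observation is the definition of $B$: by construction, $B \geq \max_{x'\in C}|\supp(x')| \geq |\supp(x)|$, so we have enough room in a $B$-tuple of distinct atoms to accommodate all of $\supp(x)$. Concretely, I would enumerate $\supp(x) = \{v_1,\dots,v_k\}$ with $k = |\supp(x)| \leq B$, pick any $B-k$ further atoms $v_{k+1},\dots,v_B \in \V\setminus\supp(x)$ (possible since $\V$ is infinite), and set $w = (v_1,\dots,v_B) \in \V^B$. Since the $v_i$ are pairwise distinct we have $w\in W$, and $\supp(w) = \{v_1,\dots,v_B\} \supseteq \supp(x)$, so $(x,w)\in C<W$. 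Finally $\outl(x,w) = x$, so $x$ lies in the image of $\outl$.

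There is no real obstacle here: the only thing to double-check is that $|\supp(w)| = B$ for every tuple of $B$ distinct atoms, which is immediate since the support of a tuple in $\V^B$ is precisely the set of atoms appearing in it (and the coordinates are distinct by choice of $W$). I do not expect any subtlety beyond this counting argument.
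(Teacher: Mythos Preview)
Your proof is correct and follows essentially the same approach as the paper: both argue surjectivity by noting that $|\supp(x)|\le B$ allows one to choose $w\in W$ with $\supp(x)\subseteq\supp(w)$, whence $(x,w)\in C<W$ and $\outl(x,w)=x$. Your version simply spells out the construction of $w$ in more detail.
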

\begin{proof}
  For $x\in C$, there is $w\in W$ with $\supp(x)\subseteq \supp(w)$,
  because $|\supp(x)|\le B$. So $(x,w)\in C<W$ and $\outl(x,w) = x$.
\end{proof}
\noindent We recall the notion of strong nominal set:
\begin{definition}\cite{Tzevelekos07}
  An element $x$ of a nominal set $X$ is \emph{strongly supported} if
  $\fix(x)\subseteq\Fix(\supp(x))$ (so $\fix(x)=\Fix(\supp(x))$). A
  nominal set is \emph{strong} if all its elements are strongly
  supported.
\end{definition}

\begin{example}
\begin{enumerate}[(1)]
\item The nominal set $W$ is strong: for $a=(a_1,\ldots,a_B)\in W$,
  the equality $\pi \cdot a = a$ implies that $\pi(a_i) = a_i$ for all
  $i$, i.e.\ $\pi\in\Fix(\supp(a))$.

\item The nominal set of unordered pairs of atoms fails to be strong,
  because $(a\ b)\cdot \{a,b\} = \{a,b\}$.
\end{enumerate}
\end{example}
\noindent 
From the first example, the following is immediate.
\begin{lemma}
  The nominal set $C < W$ is strong. 
\end{lemma}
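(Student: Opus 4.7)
The plan is to reduce strongness of $C < W$ to the already-established strongness of $W$, exploiting the key constraint built into the definition of $X < Y$, namely that $\supp(x) \subseteq \supp(w)$ for $(x,w) \in C < W$.

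First, I would observe that for any $(x,w) \in C < W$, the support in the product nominal set satisfies
\[
\supp((x,w)) = \supp(x) \cup \supp(w) = \supp(w),
\]
because $\supp(x) \subseteq \supp(w)$ by definition of $C < W$.

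Next, I would unpack the strongness condition. Let $\pi \in \fix((x,w))$, i.e.\ $\pi \cdot (x,w) = (\pi \cdot x, \pi \cdot w) = (x,w)$. In particular $\pi \cdot w = w$, so $\pi \in \fix(w)$. By the preceding example, $W$ is strong, hence $\fix(w) \subseteq \Fix(\supp(w)) = \Fix(\supp((x,w)))$ using the identification of supports established above. Therefore $\pi \in \Fix(\supp((x,w)))$, which is exactly the strongness condition for the element $(x,w)$. Since $(x,w)$ was arbitrary, $C < W$ is strong.

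There is no real obstacle here: once one notices that the support of a pair in $C < W$ coincides with the support of its second component, strongness of $C < W$ is inherited directly from strongness of $W$, and the role of $C$ is irrelevant to the argument.
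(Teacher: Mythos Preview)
Your proof is correct and is exactly the argument the paper has in mind: the paper simply states that the lemma is ``immediate'' from the strongness of $W$, and your proof is the natural unpacking of that claim via $\supp((x,w)) = \supp(w)$.
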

\noindent Strong nominal sets are of interest due to the following
extension property (mentioned already in~\cite{KurzEA10}):
\begin{proposition}\label{orderedOrbitProperty}
  Let $X$ be a strong nominal set, and let $O$ be a subset of $X$
  containing precisely one element per orbit of $X$. Let $Y$ be a
  nominal set, and let $f_0: O \to Y$ be a map such that
  $\supp(f_0(x)) \subseteq \supp(x)$ for all $x\in O$. Then $f_0$
  extends uniquely to an equivariant map $X\to Y$.
\end{proposition}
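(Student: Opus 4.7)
The plan is to define the extension $f: X \to Y$ by the formula forced by equivariance, and then verify well-definedness using the strongness hypothesis and the support condition on $f_0$; uniqueness will then be immediate.

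First I would argue uniqueness. Since every $x \in X$ lies in the orbit of a unique $x_0 \in O$, we can write $x = \pi \cdot x_0$ for some (not necessarily unique) $\pi \in \perms(\V)$. Any equivariant extension $f$ of $f_0$ must satisfy $f(x) = f(\pi \cdot x_0) = \pi \cdot f_0(x_0)$, so the value of $f$ on $x$ is determined. This pins down $f$ on all of $X$.

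Next I would define $f(x) := \pi \cdot f_0(x_0)$, where $x_0 \in O$ is the representative of the orbit of $x$ and $\pi \in \perms(\V)$ is any permutation with $\pi \cdot x_0 = x$. The key step is well-definedness: if $\pi \cdot x_0 = \sigma \cdot x_0$ for two permutations $\pi, \sigma$, then $\sigma^{-1}\pi \in \fix(x_0)$. Here is where both hypotheses enter. By strongness of $X$, we have $\fix(x_0) = \Fix(\supp(x_0))$, and by the support hypothesis $\supp(f_0(x_0)) \subseteq \supp(x_0)$, so $\Fix(\supp(x_0)) \subseteq \Fix(\supp(f_0(x_0))) \subseteq \fix(f_0(x_0))$. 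Hence $\sigma^{-1}\pi \cdot f_0(x_0) = f_0(x_0)$, which rearranges to $\pi \cdot f_0(x_0) = \sigma \cdot f_0(x_0)$, giving well-definedness. This is the main (and really only) obstacle; once overcome, the rest is formal.

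Finally I would check equivariance: for $\tau \in \perms(\V)$, if $x = \pi \cdot x_0$ with $x_0 \in O$, then $\tau \cdot x = (\tau\pi) \cdot x_0$ is in the same orbit as $x_0$, so
\[
  f(\tau \cdot x) = (\tau\pi) \cdot f_0(x_0) = \tau \cdot (\pi \cdot f_0(x_0)) = \tau \cdot f(x).
\]
And $f$ extends $f_0$ because for $x_0 \in O$ we may take $\pi = \id$ in the definition. This establishes the proposition.
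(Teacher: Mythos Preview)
Your proof is correct and follows essentially the same approach as the paper: define $f(\pi\cdot x_0)=\pi\cdot f_0(x_0)$, verify well-definedness via the chain $\fix(x_0)\subseteq\Fix(\supp(x_0))\subseteq\Fix(\supp(f_0(x_0)))\subseteq\fix(f_0(x_0))$ using strongness and the support hypothesis, and observe that equivariance is immediate. The paper's version is terser but the argument is identical.
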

\begin{proof}
  Uniqueness is clear. To show existence, define $f:X\to Y$ by
  $f(\pi\cdot x) = \pi\cdot f_0(x)$ for $x\in O$. We have to show
  well-definedness, so let $\pi' \cdot x = \pi \cdot x$. Since $X$ is
  strong and $\supp(f_0(x))\subseteq\supp(x)$, we then have
  $\pi^{-1}\pi'\in\fix(x)\subseteq\Fix(\supp(x))\subseteq\Fix(\supp(f_0(x)))\subseteq\fix(f_0(x))$,
  so $\pi'\cdot f_0(x) = \pi\cdot f_0(x)$.  Equivariance of $f$ is
  immediate from the definition.
\end{proof}
\noindent This property is used in the construction of a part of our
target coalgebra. In the construction, it is an essential observation that if an
equivariant map drops certain atoms, then we can rename the atoms without
changing the value:
\begin{lemma} \label{changeOfSupp}
    Consider an equivariant map $e: X\to Y$ and $x\in X$. Then for
    any $S\in \Potf(\V)$ with $\supp(e(x))\subseteq S$ and $|\supp(x)| \le |S|$,
    there is some $\pi\in\perms(\V)$ with $\supp(\pi\cdot x) \subseteq S$ and
    $e(\pi\cdot x) = e(x)$.
\end{lemma}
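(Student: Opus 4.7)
The plan is to exploit equivariance: since $e(\pi\cdot x) = \pi\cdot e(x)$, the required equation $e(\pi\cdot x) = e(x)$ reduces to $\pi\in \fix(e(x))$, which in turn is implied by $\pi$ fixing $\supp(e(x))$ pointwise. So the goal is to produce a finite permutation $\pi$ that simultaneously (i)~fixes $\supp(e(x))$ pointwise and (ii)~satisfies $\pi[\supp(x)]\subseteq S$.

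Set $T := \supp(e(x))$. By Remark~\ref{yetAnotherRemark}(1) applied to $e$, we have $T\subseteq \supp(x)$, and by hypothesis $T\subseteq S$. Moreover $|T|\le |\supp(x)|\le |S|$, so one can choose a set $T'$ with $T\subseteq T'\subseteq S$ and $|T'|=|\supp(x)|$. Because $\supp(x)\setminus T$ and $T'\setminus T$ are finite sets of the same cardinality, there is a bijection $b:\supp(x)\to T'$ that is the identity on $T$, obtained by extending $\id_T$ by any bijection between the two finite sets $\supp(x)\setminus T$ and $T'\setminus T$.

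Next, since $\V$ is infinite, any bijection between two equipotent finite subsets of $\V$ extends to a finite permutation of $\V$ fixing everything outside those two subsets. In particular, $b$ extends to a finite permutation $\pi\in\perms(\V)$; since $b|_T = \id_T$, this extension can be arranged to fix $T$ pointwise.

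Finally, the two conclusions follow quickly. For the support condition, $\supp(\pi\cdot x) = \pi[\supp(x)] = b[\supp(x)] = T'\subseteq S$. For the value condition, $\pi$ fixes $T = \supp(e(x))$ pointwise, so $\pi\in\Fix(\supp(e(x)))\subseteq \fix(e(x))$, giving $\pi\cdot e(x) = e(x)$; by equivariance of $e$, $e(\pi\cdot x) = \pi\cdot e(x) = e(x)$. There is no real obstacle here: the argument is elementary and only relies on $\V$ being infinite (so that finite partial bijections extend) together with the standard fact that equivariant maps do not enlarge supports.
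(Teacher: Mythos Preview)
Your proof is correct and follows essentially the same approach as the paper: both identify that it suffices to build a finite permutation fixing $\supp(e(x))$ pointwise while mapping $\supp(x)$ into $S$, and both exploit the cardinality hypothesis to do so. The only difference is presentational: the paper writes down $\pi$ explicitly as a product of transpositions swapping elements of $\supp(x)\setminus S$ with fresh elements of $S\setminus\supp(x)$, whereas you invoke a general extension principle for finite partial bijections; in particular, your phrase ``can be arranged to fix $T$ pointwise'' is in fact automatic, since any extension $\pi$ with $\pi|_{\supp(x)}=b$ already satisfies $\pi|_T=b|_T=\id_T$.
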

\begin{proof}
    Put $Y = \supp(x) \setminus \supp(e(x))$ and $N=S\setminus\supp(e(x))$. Then
    $|Y| \le |N|$. Pick some injection $\pi': Y\setminus N \rightarrowtail
    N\setminus Y$ and extend it to a finite permutation on $\V$ by
    \[
        \pi(a) = \begin{cases}
            \pi'(a) & \text{if }a\in Y\setminus N \\
            \pi'^{-1}(a) & \text{if }a\in \Im(\pi') \\
            a & \text{otherwise}
        \end{cases}
    \]
    where $\Im(\pi') \subseteq N\setminus Y$ denotes the image of
    $\pi'$. This definition implies $\pi[Y] \subseteq N$ and
    $\pi\cdot e(x) = e(x)$ since $\pi$ fixes every $a \not\in Y$ and therefore $\pi
    \in \Fix(\supp(e(x)))$. Hence,
    \begin{align*}
        \supp(\pi\cdot x)
        &= \pi\cdot \supp(x)
        \subseteq \pi\cdot (Y\cup \supp(e(x)))
        = (\pi\cdot Y) \cup \supp(\pi\cdot e(x))
        \\ &
        = \pi[Y] \cup \supp(e(x))
        \subseteq N \cup \supp(e(x))
        = S.
        \tag*{\qedhere}
    \end{align*}
\end{proof}
\begin{lemma}\label{equivConstruction}
  There is an equivariant map $f: C < W \to FC$ such that
    \[
    \begin{tikzcd}
        C< W
            \arrow{r}{f}
            \arrow[->]{d}[left]{\outl}
        & FC
            \arrow[->]{d}{q_C}
        \\
        C \arrow{r}{c}
        & HC
    \end{tikzcd}
    \]
    commutes.
\end{lemma}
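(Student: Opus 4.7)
The plan is to build $f$ by applying the orbit-extension principle for strong nominal sets (Proposition~\ref{orderedOrbitProperty}). Since $C<W$ is strong, it suffices to specify $f$ on a system $O\subseteq C<W$ of orbit representatives by a map $f_0\colon O\to FC$ satisfying $\supp(f_0(x,w))\subseteq\supp(x,w)$, and then extend uniquely and equivariantly. The commutativity of the square will then follow automatically, since both $q_C\cdot f$ and $c\cdot\outl$ are equivariant and it will be enough to check equality on representatives.

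For a representative $(x,w)\in O$, note that $\supp(x)\subseteq\supp(w)$ by definition of $C<W$, so $\supp(x,w)=\supp(w)$ and $|\supp(x,w)|=B$. Choose $y_0\in FC$ achieving the minimum in $\min\{|\supp(y)|: y\in FC,\ q_C(y)=c(x)\}$; by the definition of $B$ in~\eqref{def:B} we have $|\supp(y_0)|\le B$. Furthermore, by equivariance of $c$ and $q_C$,
\[
\supp(q_C(y_0))=\supp(c(x))\subseteq\supp(x)\subseteq\supp(w).
\]
Thus the hypotheses of Lemma~\ref{changeOfSupp}, applied to $e=q_C$, the element $y_0\in FC$, and $S=\supp(w)$, are satisfied, yielding some $\pi\in\perms(\V)$ with $\supp(\pi\cdot y_0)\subseteq\supp(w)$ and $q_C(\pi\cdot y_0)=q_C(y_0)=c(x)$. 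Set $f_0(x,w):=\pi\cdot y_0$; then $\supp(f_0(x,w))\subseteq\supp(w)=\supp(x,w)$, exactly as required.

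By Proposition~\ref{orderedOrbitProperty}, $f_0$ extends uniquely to an equivariant $f\colon C<W\to FC$. To verify $q_C\cdot f=c\cdot\outl$, observe that on any $(x,w)\in O$,
\[
q_C(f(x,w))=q_C(\pi\cdot y_0)=c(x)=c(\outl(x,w)),
\]
and since both $q_C\cdot f$ and $c\cdot\outl$ are equivariant maps $C<W\to HC$ agreeing on a complete set of orbit representatives, they agree everywhere. The main subtlety lies in securing the support condition $\supp(f_0(x,w))\subseteq\supp(x,w)$: a naive choice of $y_0$ may have ``bound'' atoms outside $\supp(w)$. The point of the bound $B$ is precisely to reserve enough room in $\supp(w)$ for both the free atoms of $x$ and the (minimally many) extra atoms used by some preimage of $c(x)$ under $q_C$, after which Lemma~\ref{changeOfSupp} lets us rename the bound atoms into $\supp(w)$ without altering the image under $q_C$.
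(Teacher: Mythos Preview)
Your proof is correct and follows essentially the same approach as the paper's: both choose orbit representatives of $C<W$, use the bound $B$ together with Lemma~\ref{changeOfSupp} to pick preimages in $FC$ whose support fits inside $\supp(w)$, and then extend via Proposition~\ref{orderedOrbitProperty}. Your version is slightly more explicit in that you actually select $y_0$ achieving the minimum in~\eqref{def:B}, whereas the paper's phrasing ``some $y_i$ \ldots and by~\eqref{def:B}, $|\supp(y_i)|\le B$'' tacitly assumes such a choice.
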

\begin{proof}
  Pick a subset $O=\{(x_1,w_1),\ldots,(x_n,w_n)\} \subseteq C <W$
  containing precisely one element from each of the $n$ orbits of
  $C<W$.  We have for each $i$ some $y_i \in FC$ such that
  $q_C(y_i) = c(x_i)$, and by \eqref{def:B},
  $|\supp(y_i)| \le B = |\supp(w_i)| $; in addition,
  $\supp(q_C(y_i))\subseteq \supp(x_i)\subseteq \supp(w_i)$.  By
  Lemma~\ref{changeOfSupp} applied to $q_C$, $y_i$ and $S = \supp(w_i)$, there is some
  $\sigma_i$ such that $q_C(\sigma_i\cdot y_i) = q_C(y_i) = c(x_i)$
  and
  $\supp(\sigma_i\cdot y_i) \subseteq \supp(w_i) = \supp(x_i,w_i)$.

    Now define $f_0: O\to FC$, $f_0(x_i,w_i) = \sigma_i\cdot
    y_i$.
    Then $\supp(f_0(x_i,w_i)) \subseteq \supp(x_i,w_i)$. By
    Proposition~\ref{orderedOrbitProperty}, $f_0$ extends uniquely to an
    equivariant map $f: C<W\to F C$, and we have
    \begin{equation}
        \label{fTempProperty}
        q_C(f(x_j,w_j))
        = q_C(f_0(x_j,w_j))
        = q_C(\sigma_j\cdot y_j)
        = c(x_j)
        = c\cdot \outl(x_j,w_j)
    \end{equation}
    for all $1\le j\le n$.  This equality extends to all elements of
    $C< W$ by equivariance: any $p \in C< W$ has the form
    $p = \pi\cdot (x_i,w_i)$, and thus multiplying
    \eqref{fTempProperty} by $\pi$ yields
    $q_C(f(p)) = c\cdot \outl(p)$.
\end{proof}
\noindent In combination with the sub-strength, the map $f$ now
induces the required $F$-coalgebra:
\begin{lemma}
  The $H$-coalgebra $(C,c)$ is, via $\outl$, a quotient of the
  orbit-finite $F$-coalgebra
    \[
    \begin{tikzcd}
        C < W
        \arrow{r}{\bar f}
        &
        F C < W
        \arrow{r}{s_{C,W}}
        &
        F (C < W)
    \end{tikzcd}
    \quad\text{where }\bar f(x,w) = (f(x),w).
    \]
\end{lemma}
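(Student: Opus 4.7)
The plan is to verify three things: that the carrier $C<W$ is orbit-finite, that $\bar f$ is a well-defined equivariant map $C<W \to FC<W$, and that $\outl: C<W \to C$ is a surjective $H$-coalgebra homomorphism when $C<W$ is equipped with the induced $H$-coalgebra structure $q_{C<W} \cdot s_{C,W} \cdot \bar f$. Orbit-finiteness is immediate: $W$ has a single orbit, so $C \times W$ is orbit-finite (as $C$ is), and $C<W$ is its nominal subobject, hence orbit-finite as well.

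Before computing, I must confirm that $\bar f$ really lands in $FC<W$ and is equivariant. For any $(x,w) \in C<W$, equivariance of $f$ (Lemma~\ref{equivConstruction}) gives $\supp(f(x,w)) \subseteq \supp(x,w) = \supp(w)$, using $\supp(x) \subseteq \supp(w)$ by definition of $C<W$; hence $(f(x,w),w)$ indeed lies in $FC<W$. Equivariance of $\bar f$ is inherited from equivariance of $f$ together with the obvious equivariance of the identity on the second coordinate.

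The main step is the homomorphism square; surjectivity of $\outl$ is already Lemma~\ref{lem:outl-epi}. I chain together four ingredients: (i) naturality of $q$ gives $H\outl \cdot q_{C<W} = q_C \cdot F\outl$; (ii) instantiating the sub-strength axiom~\eqref{diag:substrength} at $X=C$, $Y=W$ gives $F\outl \cdot s_{C,W} = \outl\colon FC<W \to FC$; (iii) unfolding the definition $\bar f(x,w)=(f(x,w),w)$ gives $\outl \cdot \bar f = f$; and (iv) Lemma~\ref{equivConstruction} gives $q_C \cdot f = c \cdot \outl$. Composing them yields
\[
H\outl \cdot q_{C<W} \cdot s_{C,W} \cdot \bar f
  = q_C \cdot F\outl \cdot s_{C,W} \cdot \bar f
  = q_C \cdot \outl \cdot \bar f
  = q_C \cdot f
  = c \cdot \outl,
\]
which is exactly the coalgebra homomorphism condition for $\outl$, showing via Definition~\ref{coalgquotient} that $(C,c)$ is a quotient of the $F$-coalgebra $(C<W, s_{C,W}\cdot\bar f)$. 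No substantive obstacle remains at this point: all delicate combinatorial work (the renaming argument of Lemma~\ref{changeOfSupp} and the strong-nominal extension underlying Lemma~\ref{equivConstruction}) has already been done, and the role of the sub-strength is precisely to let us transport $f$ through pairing with $W$ without disturbing the quotient $q$.
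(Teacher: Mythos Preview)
Your proof is correct and follows essentially the same approach as the paper: the paper presents the same four ingredients (definition of $\bar f$, the sub-strength axiom~\eqref{diag:substrength}, naturality of $q$, and Lemma~\ref{equivConstruction}) assembled into a single commutative diagram, and appeals to Lemma~\ref{lem:outl-epi} for surjectivity of $\outl$. Your explicit verification of orbit-finiteness of $C<W$ and well-definedness of $\bar f$ makes explicit what the paper handles in one sentence.
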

\begin{proof}
    The map $\bar f$ is equivariant, and $\bar f(x,w) \in FC<W$ because $f$ is
    equivariant. Moreover, the diagram below commutes:
    \[
    \begin{tikzcd}[column sep=1.2cm]
        C < W
        \arrow{r}{\bar f}
        \descto{rdr}{\text{Def. }\bar f}
        \arrow{dd}[left]{\outl}
        \arrow[bend right=15]{drr}[below]{f}
        &
        FC <W
        \arrow{r}{s_{C,W}}
        \arrow{dr}[sloped,below]{\outl}
        &
        F(C<W)
        \arrow{r}{q_{C< W}}
        \arrow{d}[right]{F\outl}
        \descto[pos=.1]{dl}{\text{\eqref{diag:substrength}}}
        & 
        H(C < W)
        \ar{dd}[right]{H\outl}
        \\
        {}
        &{}& FC
        \descto[yshift=3mm,xshift=1mm]{r}{\text{Naturality}}
        \descto{dll}{\text{Lemma~\ref{equivConstruction}}}
        \arrow{dr}{q_C}
        &
        {}
        \\
        C
        \arrow{rrr}{c}
        &&& HC
    \end{tikzcd}
    \]
    Thus, $\outl: C< W\to C$ is an $H$-coalgebra homomorphism,
    and surjective by Lemma~\ref{lem:outl-epi}.
\end{proof}
\noindent
From Theorem~\ref{thm:quot} we now obtain:
\begin{corollary}\label{cor:rho-quotient}
  If $F:\Nom\to\Nom$ is finitary and has a sub-strength, and $H$ is a
  quotient of $F$, then the rational fixpoint $\varrho H$ is a
  quotient of the rational fixpoint~$\varrho F$.
\end{corollary}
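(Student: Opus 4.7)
The plan is to reduce the statement immediately to Theorem~\ref{thm:quot}: it suffices to verify that every orbit-finite $H$-coalgebra $(C,c)$ is a quotient, in the sense of Definition~\ref{coalgquotient}, of some orbit-finite $F$-coalgebra. This is precisely what the chain of lemmas preceding the corollary is set up to deliver, so the proof I would write simply threads those lemmas together.

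First I would fix an arbitrary orbit-finite $H$-coalgebra $(C,c)$ and define the bound $B$ as in \eqref{def:B}; this is well-defined because the relevant quantities $|\supp(x)|$ and $\min_{q_C(y)=c(x)}|\supp(y)|$ are constant on orbits (Lemma~\ref{orbitsamesupp}, together with a short equivariance computation on $c$ and $q$ for the minimum), and $C$ has only finitely many orbits. Then I would introduce $W\subseteq \V^B$, which is orbit-finite and strong, and form the orbit-finite nominal subset $C<W$ of $C\times W$. By Lemma~\ref{lem:outl-epi} the projection $\outl\colon C<W\to C$ is surjective, and strongness of $W$ immediately yields strongness of $C<W$.

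Next I would invoke Lemma~\ref{equivConstruction} to pick an equivariant lifting $f\colon C<W \to FC$ satisfying $q_C\cdot f = c\cdot \outl$. This is the essential step, and in my view the main obstacle: one chooses a representative $(x_i,w_i)$ of each orbit of $C<W$, picks any preimage $y_i\in FC$ of $c(x_i)$ under $q_C$, uses the support bound built into $B$ together with Lemma~\ref{changeOfSupp} to rename $y_i$ into $\sigma_i\cdot y_i$ whose support is contained in $\supp(w_i)$, and finally uses Proposition~\ref{orderedOrbitProperty} (which needs strongness of $C<W$) to extend the assignment $(x_i,w_i)\mapsto \sigma_i\cdot y_i$ to a single equivariant map $f$. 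The entire point of enlarging $C$ to $C<W$ is to create enough ``bound'' atoms in the second component to accommodate the support of the chosen preimages.

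Finally I would assemble the required $F$-coalgebra on $C<W$ as $s_{C,W}\cdot \bar f$, where $\bar f(x,w)=(f(x),w)$ and $s_{C,W}$ is the postulated sub-strength of $F$. Naturality of $q$ together with the sub-strength axiom~\eqref{diag:substrength} and the defining property of $f$ then show that $\outl$ is a surjective $H$-coalgebra homomorphism from $(C<W,\,q_{C<W}\cdot s_{C,W}\cdot \bar f)$ onto $(C,c)$. Thus $(C,c)$ is a quotient of an orbit-finite $F$-coalgebra, and Theorem~\ref{thm:quot} produces the desired surjective homomorphism $\varrho F \epito \varrho H$, completing the argument.
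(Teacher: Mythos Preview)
Your proposal is correct and follows exactly the paper's approach: the corollary is obtained from Theorem~\ref{thm:quot} once one knows that every orbit-finite $H$-coalgebra is a quotient of an orbit-finite $F$-coalgebra, and the chain of lemmas you thread together (the definition of $B$ and $W$, Lemma~\ref{lem:outl-epi}, Lemma~\ref{equivConstruction}, and the final assembly via the sub-strength) is precisely how the paper establishes this. The paper simply leaves the corollary unproved since the immediately preceding lemma already shows $(C,c)$ is a quotient of $(C<W,\,s_{C,W}\cdot\bar f)$, making the application of Theorem~\ref{thm:quot} automatic.
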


\begin{example}
  Having a sub-strength is not a necessary condition for a quotient
  $F\to H$ to satisfy the requirements of Theorem~\ref{thm:quot}. Recall
  from Example~\ref{ex:noSubStrength} that the functor $F=DU$ has no
  sub-strength. Take $q: DU\twoheadrightarrow H$ to be any quotient
  (e.g.~$HX=1$). Since $q_X:DUX\to HX$ is a surjective equivariant map
  and equivariant maps do no increase the support, $HX$ is a discrete
  nominal set for all $X$. It follows that every splitting
  $s_X: HX \rightarrowtail FX$ of $q_X$ is an equivariant map. Thus,
  every $H$-coalgebra $x: X\to HX$ is trivially a quotient of the
  $DU$-coalgebra $s_X\cdot x: X\to DUX$, i.e.\ the quotient $DU\to H$
  satisfies the assumptions of Theorem~\ref{thm:quot}.
\end{example}
\noindent However, a sub-strength does exist in many relevant
examples:
\begin{lemma}\label{lem:substrengths}
  \begin{enumerate}[(1)]
  \item Every constant functor has a sub-strength.
  \item The identity functor has a sub-strength.
  \item The class of functors having a sub-strength is closed under
    finite products, arbitrary coproducts, and functor composition.
  \end{enumerate}
\end{lemma}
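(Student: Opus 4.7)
Items (1) and (2) are essentially immediate from the definition. For a constant functor $FX = A$, observe that $FX{<}Y = A{<}Y$ and $F(X{<}Y) = A$, and that $F\outl$ is the identity on $A$; hence one simply takes $s_{X,Y}$ to be the projection $\outl\colon A{<}Y\to A$, which is equivariant, and the triangle~\eqref{diag:substrength} commutes trivially. For the identity functor we have $FX{<}Y = F(X{<}Y) = X{<}Y$, so $s_{X,Y} = \id_{X{<}Y}$ evidently works.

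For (3) I would treat the three constructions in turn. In each case, equivariance of the resulting map is straightforward from equivariance of the ingredient sub-strengths and the fact that $F$ preserves equivariance; the only thing to verify is the triangle~\eqref{diag:substrength}.

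Finite products: given sub-strengths $s^F$, $s^G$ for $F$, $G$, note that an element $((u,v),y) \in (F\times G)X{<}Y$ satisfies $\supp(u),\supp(v) \subseteq \supp(y)$, so that $(u,y)\in FX{<}Y$ and $(v,y)\in GX{<}Y$. I would set
\[
   s^{F\times G}_{X,Y}((u,v),y) \;=\; \bigl(s^F_{X,Y}(u,y),\ s^G_{X,Y}(v,y)\bigr),
\]
and check the triangle componentwise using those for $s^F$ and $s^G$. The terminal functor is constant, so binary products suffice.

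Coproducts: given sub-strengths $s^{F_i}$ for each $F_i$, an element of $(\coprod_i F_i)X{<}Y$ has the form $(\inj_i(u),y)$ with $\supp(u)\subseteq\supp(y)$, and I would define
\[
   s^{\coprod_i F_i}_{X,Y}(\inj_i(u),y) \;=\; \inj_i\bigl(s^{F_i}_{X,Y}(u,y)\bigr),
\]
with the triangle reducing to that for $s^{F_i}$ in the $i$-th summand.

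Composition: for sub-strengths on $F$ and $G$, the natural candidate is
\[
   s^{FG}_{X,Y} \;=\; \bigl(F(GX){<}Y \xrightarrow{\ s^F_{GX,Y}\ } F(GX{<}Y) \xrightarrow{\ F s^G_{X,Y}\ } FG(X{<}Y)\bigr).
\]
The triangle for $FG$ then follows by a two-step chase: applying $F$ to the triangle for $s^G_{X,Y}$ gives $FG\outl \circ F s^G_{X,Y} = F\outl$ (with $\outl\colon GX{<}Y\to GX$), and then composing with $s^F_{GX,Y}$ and using the triangle for $s^F_{GX,Y}$ reduces the left-hand composite to $\outl\colon FGX{<}Y\to FGX$, as required. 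Getting the types and the two instances of $\outl$ straight is the only point requiring care; this is the mild obstacle in the argument, but no new ideas are needed beyond the definitions.
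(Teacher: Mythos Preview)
Your proposal is correct and follows essentially the same approach as the paper: the constant and identity cases are handled identically, and for products, coproducts, and composition you give the same constructions (the paper writes the composite as $GF$ rather than $FG$, but the definition is the same up to this relabelling). The paper omits the verification of~\eqref{diag:substrength}, declaring it obvious throughout, whereas you spell out the two-step chase for composition; otherwise the arguments coincide.
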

\begin{proof}
  We give definitions of the sub-strength in all cases; commutation of \eqref{diag:substrength} is obvious throughout.
  \begin{enumerate}[(1)]
  \item If $K$ is constant, then we have $s_{X,Y} = \outl: KX < Y \to K(X<Y)$.
  \item Trivial. 
  \item
    \begin{enumerate}\item 
      For $G$ and $H$ having sub-strengths $s^G_{X,Y}$ and
      $s^H_{X,Y}$, respectively, we define
      \[
        f: (GX×HX) < Y \to (GX < Y) × (HX < Y)
      \]
      by
      $f(x,y,w) = \big((x,w),(y,w)\big)$, which is well-typed because
      $\supp(x) \subseteq \supp(x,y) \subseteq w$ (and analogously for
      $y$). We then obtain a sub-strength $s_{X,Y}$ for $G\times H$ as
      $s_{X,Y}=s^G_{X,Y}\times s^H_{X,Y}\circ f$.
    \item For each $G_i$ having a sub-strength $s^i_{X,Y}$, we define
      \[ 
        f: \big(\coprod_{i\in I} G_iX\big) < Y \to \coprod_{i\in I}(G_iX < Y) 
      \]
      by $f(\inj_i x, w) = \inj_i (x,w)$, again noting that
      $\supp(x) = \supp(\inj_i x) \subseteq \supp(w)$.  We then obtain
      a sub-strength $s_{X,Y}$ for $\coprod G_i$ as
      $s_{X,Y}=(\coprod s^i_{X,Y})\circ f$.
    \item Given sub-strengths $s^F_{X,Y}: FX < Y \to F(X < Y)$ and
      $s^G_{X,Y}: GX < Y \to G(X <Y)$, the desired sub-strength for the
      composite $GF$ is
      \[
        GFX < Y \xrightarrow{s^G_{FX,Y}} G(FX < Y) \xrightarrow{Gs^F_{X,Y}} GF(X < Y).  \qedhere
      \]
    \end{enumerate}
    \end{enumerate}
\end{proof}

\begin{notation}\label{not:neq}
  We denote by $X^{n\neq}$ the subset of $X^n$ consisting of all
  $n$-tuples with pairwise distinct components.
\end{notation}

\begin{proposition} \label{orbitFiniteEpiPreservation}
    If a functor $F:\Nom\to\Nom$ has a sub-strength, then it preserves
    epimorphisms with orbit-finite codomain.
\end{proposition}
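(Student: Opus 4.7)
The plan is to show that $Fe$ is surjective for any epimorphism $e : A \epito B$ with $B$ orbit-finite, by explicitly constructing, for an arbitrary $y \in FB$, a preimage $x \in FA$ with $Fe(x) = y$. The overall strategy is to use the sub-strength to transport $y$ from $FB$ into $F(B < W)$ for a suitable strong auxiliary nominal set $W$, and then to push it along $Fg$ for a cleverly chosen equivariant map $g : B < W \to A$ satisfying $e \circ g = \outl$.

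Concretely, I would first fix orbit representatives $b_1, \dots, b_n$ of $B$ and, using surjectivity of $e$, choose lifts $a_i \in A$ with $e(a_i) = b_i$. Setting $N = \max\bigl(|\supp(y)|,\, \max_i |\supp(a_i)|\bigr)$ and $W = \V^{N\neq}$ yields a single-orbit strong nominal set in which every element has support of size exactly $N$; this maximum is finite precisely because $B$ is orbit-finite. Picking $w \in W$ with $\supp(y) \subseteq \supp(w)$ gives $(y, w) \in FB < W$, and applying the sub-strength produces $s_{B,W}(y, w) \in F(B < W)$.

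The core step is the construction of $g : B < W \to A$. As in the proof of Lemma~\ref{equivConstruction}, the nominal set $B < W$ is strong: an element $(b, w')$ has support $\supp(w')$, and any permutation fixing this support fixes $w'$ (since $W$ is strong) and hence the whole pair. Now pick orbit representatives $(b'_j, w'_j)$ of $B < W$; each $b'_j$ lies in the orbit of some $b_{i(j)}$, so $b'_j = \sigma_j \cdot b_{i(j)}$ for some $\sigma_j$, and $\sigma_j \cdot a_{i(j)}$ is a preimage of $b'_j$ with $|\supp(\sigma_j \cdot a_{i(j)})| = |\supp(a_{i(j)})| \leq N = |\supp(w'_j)|$. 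Lemma~\ref{changeOfSupp} then supplies $\pi_j$ with $e(\pi_j \sigma_j \cdot a_{i(j)}) = b'_j$ and $\supp(\pi_j \sigma_j \cdot a_{i(j)}) \subseteq \supp(w'_j) = \supp(b'_j, w'_j)$, so setting $g_0(b'_j, w'_j) := \pi_j \sigma_j \cdot a_{i(j)}$ satisfies the support hypothesis of Proposition~\ref{orderedOrbitProperty}; its unique equivariant extension gives the desired $g$, and equivariance propagates the identity $e \circ g_0 = \outl$ from the representatives to all of $B < W$.

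Finally, setting $x := Fg(s_{B,W}(y, w)) \in FA$, functoriality together with axiom~\eqref{diag:substrength} yields $Fe(x) = F(e\circ g)(s_{B,W}(y,w)) = F\outl(s_{B,W}(y, w)) = \outl(y, w) = y$. The main obstacle is calibrating $W$ correctly: it must be wide enough both to accommodate $y$ and to bound the supports of preimages of all orbit representatives of $B$ simultaneously, which is exactly where orbit-finiteness of $B$ is used. Everything else is then driven by the two key technical lemmas of the preceding section, namely the support-adjustment Lemma~\ref{changeOfSupp} and the strong-orbit extension Proposition~\ref{orderedOrbitProperty}.
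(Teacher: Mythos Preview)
Your proof is correct and follows essentially the same strategy as the paper: both use the sub-strength axiom~\eqref{diag:substrength} together with Lemma~\ref{changeOfSupp} and Proposition~\ref{orderedOrbitProperty} to build an equivariant section-like map over a strong auxiliary nominal set. The only difference is in packaging. The paper chooses one global auxiliary object $Z=\coprod_{k\ge m}\V^{k\neq}$ (not orbit-finite), constructs a splitting $c:Y<Z\to X<Z$ of $e<\id_Z$, and then argues via the resulting split epi $F(e<\id_Z)$ and the diagram chase that $Fe$ is epi; you instead fix a single element $y\in FB$ up front, tailor a single-orbit $W=\V^{N\neq}$ to that $y$, and produce the preimage directly via a map $g:B<W\to A$ with $e\circ g=\outl$. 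Your local, element-by-element version avoids the infinite coproduct and is marginally more elementary; the paper's version has the advantage that the section is constructed once and for all, independent of~$y$.
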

\begin{proof}
    Take $e: X\twoheadrightarrow Y$ and suppose that $Y$ is orbit-finite.
    Define
    \[
        m = \max_{y\in Y} \min_{\substack{x\in X\\ e(x) = y}} |\supp(x)|
        \quad\text{and}\quad
        Z = \coprod_{k \ge m} \V^{k\neq}.
    \]
    The maximum $m$ exists, because $Y$ is orbit-finite and because
    for any two elements $y$, $y'$ of the same orbit,
    $|\supp(y)| = |\supp(y')|$ by Lemma~\ref{orbitsamesupp}. Pick a
    subset $O \subseteq Y < Z$ containing precisely one representative
    $(x_i,z_i)$ of each orbit of $Y < Z$. We have for every $i$ some
    $x_i$ with $e(x_i) = y_i$ and
    $|\supp(x_i)| \le m \le |\supp(z_i)|$. By Lemma~\ref{changeOfSupp}
    applied to $e$, $y_i$ and $S=\supp(z_i)$ we can assume
    w.l.o.g.~that $\supp(x_i) \subseteq \supp(z_i)$.

    Now define $c_0: O \to X < Z$ by $c_0(y_i,z_i) = (x_i, z_i)$. By Proposition~\ref{orderedOrbitProperty} we obtain a unique equivariant extension $c: Y < Z  \to X < Z$, and for every $\pi \in \perms(\V)$ we have 
    \[
        (e<\id_Z)(c(\pi\cdot y_i,\pi\cdot z_i))
        = \pi\cdot (e<\id_Z)(x_i,z_i)
        = \pi\cdot (y_i,z_i).
    \]
    This implies that $(e<\id_Z): X<Z \to Y<Z$ is a split epimorphism and thus
    preserved by $F$. 

    Next consider the commuting diagram
    \[
    \begin{tikzcd}
        F(X<Z)
        \descto[anchor=center,xshift=-3mm]{dr}{\text{\parbox{4cm}{\centering\scriptsize Naturality\\ of $\outl$}}}
        \arrow[->>]{r}{F(e<Z)}
        \arrow{d}[left]{F\outl}
        & F(Y<Z)
        \arrow{d}[left]{F\outl}
        & FY < Z
        \arrow{l}[above]{s_{Y,Z}}
        \arrow[xshift=1mm,yshift=-1mm,bend left=10]{dl}[sloped,below]{\outl}[above
        left]{\text{\eqref{diag:substrength}}}
        \\
        FX
        \arrow{r}[below]{Fe}
        & FY
    \end{tikzcd}
    \]
    For any $t \in FY$, there is some $z\in Z$ with
    $\supp(t) \subseteq \supp(z)$ since for every subset $S$ of $\V$
    of cardinality of least $m$ there exist elements in $Z$ whose
    support is $S$. Since $\outl: FY < Z\to FY$ is epimorphic, so is
    $F\outl: F(Y < Z) \to FY$. Hence, $F\outl\cdot F(e<Z)$ is
    epimorphic, thus so is $Fe$.
\end{proof}
\noindent For epimorphisms with non-orbit-finite codomain,
preservation by functors having a sub-strength may fail:
\begin{example}
    The functor $FX=X^\omega$ of finitely supported sequences has a sub-strength
    \[
        s_{X,Y}\big( (a_k)_{k\in \N}, y \big) = \big((a_k,y)\big)_{k\in \N}
    \]
    because
    $\supp(a_k)\subseteq \supp\big((a_k)_{k\in\N})\big) \subseteq
    \supp(y)$.

    However, $F$ does not preserve all epimorphisms. To see this consider the
    discrete nominal set $\N$ of natural numbers and the equivariant surjection 
    \[
        e: \Potf \V \twoheadrightarrow \N,
        \quad e(W) = |W|.
    \]
    The image of $Fe$ in $\N^\omega$ contains only bounded sequences:
    for any finitely supported sequence $s\in \Potf \V^\omega$, the
    sequence $Fe(s)$ is bounded by $|\supp(s)|$. Since $\N$ is
    discrete, every sequence in $\N^\omega$ has finite (namely, empty)
    support; this shows that $Fe$ is not surjective.
\end{example}
\noindent 
Note that $FX = X^\omega$ is not finitary (see
Proposition~\ref{orbitFiniteNecessary}). In fact, for finitary functors we
have the following
\begin{proposition}
    Finitary \Nom-functors with a sub-strength preserve epimorphisms.
\end{proposition}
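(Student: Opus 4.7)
The plan is to reduce the claim to Proposition~\ref{orbitFiniteEpiPreservation} via the finitariness of $F$. Let $e:X\twoheadrightarrow Y$ be an epimorphism in $\Nom$ and let $y\in FY$; we must produce $x\in FX$ with $Fe(x)=y$.

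First I would use that every nominal set is the directed union of its orbit-finite subobjects: for any finite $\{y_1,\dots,y_n\}\subseteq Y$, the closure $\langle y_1,\dots,y_n\rangle$ under the $\perms(\V)$-action is an orbit-finite subobject of $Y$, and $Y$ is the directed union of these. Since $F$ is finitary, $FY$ is the corresponding filtered colimit of the $FY_0$ along the maps $F$-applied to inclusions, where $Y_0$ ranges over orbit-finite subobjects of $Y$. Hence there exist an inclusion $\iota:Y_0\hookrightarrow Y$ with $Y_0$ orbit-finite and an element $y_0\in FY_0$ with $F\iota(y_0)=y$. Next, I would form the pullback of $e$ along $\iota$ in $\Nom$, namely $X_0=\{x\in X\mid e(x)\in Y_0\}$, equipped with the restriction $e':X_0\to Y_0$ and the pullback leg $\iota':X_0\hookrightarrow X$. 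Because $\Nom$ is a (Grothendieck) topos, epimorphisms are stable under pullback, so $e':X_0\twoheadrightarrow Y_0$ is again an epimorphism, now with orbit-finite codomain.

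Applying Proposition~\ref{orbitFiniteEpiPreservation} to $e'$ yields some $x_0\in FX_0$ with $Fe'(x_0)=y_0$. Setting $x=F\iota'(x_0)\in FX$ and using $e\cdot\iota'=\iota\cdot e'$, functoriality gives $Fe(x)=F(e\cdot\iota')(x_0)=F(\iota\cdot e')(x_0)=F\iota(y_0)=y$, as required. I expect no real obstacle: the technical heavy lifting (use of the sub-strength, strongly supported tuples $\V^{k\neq}$, and Lemma~\ref{changeOfSupp}) has already been carried out in the preceding proposition, and finitariness together with the topos-theoretic stability of epis under pullback suffice to complete the reduction. One subtlety worth flagging is that this argument does \emph{not} require $F$ to preserve monomorphisms: although $\iota'$ is mono in $\Nom$, the map $F\iota'$ need not be, yet mere functoriality suffices to push $x_0$ forward to $FX$.
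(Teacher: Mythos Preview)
Your argument is correct. Both your proof and the paper's reduce to Proposition~\ref{orbitFiniteEpiPreservation}, but they do so differently. The paper appeals to the general appendix result (Proposition~\ref{finitaryEpiPreservation}) for finitary functors between lfp categories: write the \emph{domain} $X$ as a directed union of finitely generated subobjects, take image factorizations of the composites $e\cdot c_i$ to exhibit $Y$ as a directed union of finitely generated $A_i$, and then use that $F$ preserves both the directed colimit and the small epis $X_i\twoheadrightarrow A_i$. You instead work on the \emph{codomain} side: write $Y$ as a directed union of orbit-finite subobjects, locate $y$ in some $FY_0$ by finitarity, and pull $e$ back along $Y_0\hookrightarrow Y$, invoking the topos-specific fact that epimorphisms in $\Nom$ are pullback-stable. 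Your route is shorter and avoids image factorizations entirely, at the cost of using a property (pullback-stability of epis) that is available in $\Nom$ but not in a general lfp category; the paper's route, conversely, yields a reusable statement valid in any lfp setting. Your closing remark that mono-preservation of $F$ is not needed is apt and applies equally to the paper's argument.
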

\noindent This easily follows from
Proposition~\ref{orbitFiniteEpiPreservation} since every epimorphism in $\Nom$
is the filtered colimit of epimorphisms with orbit-finite domain and
codomain (see Proposition~\ref{finitaryEpiPreservation} in the appendix).

Preservation of epimorphisms is convenient because quotients of
epimorphism-preserving functors are closed under composition:
\begin{lemma}\label{lem:quot}
  Let $q: F \epito H$ and $q': F' \epito H'$ be quotients of functors
  on $\Nom$. If $F$ preserves epis then $HH'$ is a quotient of $FF'$
  via
  \[
    FF' \xrightarrow{Fq'} FH' \xrightarrow{qH'} HH',
  \]
\end{lemma}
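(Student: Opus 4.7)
The plan is to verify that the given composite $qH' \cdot Fq': FF' \to HH'$ is a natural transformation with surjective components, which is exactly what it means for $HH'$ to be a quotient of $FF'$ in our sense.

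First I would observe that $Fq': FF' \to FH'$ and $qH': FH' \to HH'$ are both natural transformations, being obtained by whiskering the given natural transformations $q'$ and $q$ on one side by a functor. Their composite is therefore a natural transformation $FF' \to HH'$. This part is routine and uses only the functoriality of $F$ and the fact that $q, q'$ are natural transformations.

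The surjectivity of each component is the content of the lemma. For a fixed nominal set $X$, the component at $X$ is the composite
\[
    FF'X \xrightarrow{F(q'_X)} FH'X \xrightarrow{q_{H'X}} HH'X.
\]
Here $q'_X: F'X \twoheadrightarrow H'X$ is an epimorphism in $\Nom$ since $q'$ has surjective components. By the hypothesis that $F$ preserves epimorphisms, $F(q'_X): FF'X \twoheadrightarrow FH'X$ is again an epimorphism. Similarly, $q_{H'X}: FH'X \twoheadrightarrow HH'X$ is epic by the surjectivity of $q$'s components. Since epimorphisms in $\Nom$ are just surjective equivariant maps, and surjective maps compose, the composite $q_{H'X} \cdot F(q'_X)$ is surjective, hence epic.

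There is no real obstacle: the only nontrivial ingredient is the hypothesis that $F$ preserves epimorphisms, which is exactly what guarantees that the first arrow in the composite remains surjective after applying $F$. Without this assumption, $F(q'_X)$ might fail to be epic and the argument would collapse. The naturality and the closure of epimorphisms under composition are standard properties of $\Nom$.
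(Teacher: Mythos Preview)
Your proof is correct and follows the same approach as the paper, which merely notes that quotients are defined as natural transformations that are pointwise epi; you have simply spelled out the details that the paper leaves implicit.
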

\begin{proof}
  Recall that we have defined quotients as natural transformations
  that are pointwise epi.
\end{proof}
\noindent This extends the closure properties of the class of functors
with a sub-strength (Lemma~\ref{lem:substrengths}) to the class of
quotients of finitary functors having a sub-strength:
\begin{corollary}
  The class of quotients of finitary \Nom-functors that have a
  sub-strength is closed under coproducts, finite products,
  composition, and quotients.
\end{corollary}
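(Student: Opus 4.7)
The plan is to verify each closure property by exhibiting, for the functor built from the operation, a finitary parent with a sub-strength that surjects onto it. Let $\mathcal{Q}$ denote the class in question, and for each $H\in\mathcal{Q}$ fix a witness $q^H: F^H \epito H$ with $F^H$ finitary and equipped with a sub-strength. The essential input from the preceding material is Lemma~\ref{lem:substrengths}(3) (closure of sub-strengths under coproducts, finite products, composition) together with Lemma~\ref{lem:quot} and Proposition on epi-preservation; the remaining finitariness bookkeeping is standard in a locally finitely presentable category.

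For arbitrary coproducts $\coprod_i H_i$, I would take $\coprod_i q^{H_i}: \coprod_i F^{H_i}\epito \coprod_i H_i$; this is pointwise surjective because coproducts in $\Nom$ are computed as in $\Set$ and injections preserve surjections, and $\coprod_i F^{H_i}$ is both finitary (arbitrary coproducts of finitary $\Nom$-functors are finitary) and carries a sub-strength by Lemma~\ref{lem:substrengths}(3). Finite products are analogous: $q^{H_1}\times q^{H_2}$ is pointwise surjective since products of surjections in $\Set$ are surjective, $F^{H_1}\times F^{H_2}$ is finitary, and its sub-strength is again given by Lemma~\ref{lem:substrengths}(3). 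Closure under quotients is immediate: if $q': H\epito H'$ and $q^H: F^H\epito H$ with $F^H\in\mathcal{Q}$'s parent class, then $q'\circ q^H: F^H\epito H'$ is pointwise surjective and exhibits $H'\in\mathcal{Q}$.

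The main case, and the one where the pieces of the section come together, is composition. Given $H,H'\in\mathcal{Q}$ with witnesses $q^H: F^H\epito H$ and $q^{H'}: F^{H'}\epito H'$, Lemma~\ref{lem:quot} produces the candidate
\[
F^H F^{H'} \xrightarrow{F^H q^{H'}} F^H H' \xrightarrow{q^H H'} H H',
\]
but for this to be pointwise epi we must know that $F^H$ preserves epis, so that $F^H q^{H'}$ is again pointwise surjective. This is exactly the conclusion of Proposition on epi-preservation for finitary $\Nom$-functors with a sub-strength. Since composites of finitary functors are finitary and $F^H F^{H'}$ has a sub-strength by Lemma~\ref{lem:substrengths}(3), the composite witness is in the desired form.

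The only step I expect to require any care is composition, precisely because it is the place where Lemma~\ref{lem:quot} forces us to invoke epi-preservation rather than using Lemma~\ref{lem:substrengths} alone; all other cases reduce to direct inspection of the natural transformations involved and an appeal to Lemma~\ref{lem:substrengths}(3).
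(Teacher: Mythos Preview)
Your proposal is correct and follows essentially the same approach as the paper: the paper dismisses coproducts, finite products, and quotients as trivial, and handles composition exactly as you do, by combining Lemma~\ref{lem:quot} with the fact that finitary \Nom-functors having a sub-strength preserve epimorphisms. Your write-up simply spells out in more detail what the paper leaves implicit.
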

\begin{proof}
  Closedness under coproducts, finite products, and quotients is
  trivial. For composition, recall that finitary functors that have a
  sub-strength preserve epimorphisms, and apply Lemma~\ref{lem:quot}.
\end{proof}

\section{Applications}

\subsection{Binding Signatures}\label{sec:binding}

One can describe various flavours of (possibly) infinite terms with variable binding operators (such as infinite $\lambda$-terms or process terms of the $\pi$-calculus) as the inhabitants of final coalgebras of so
called \emph{binding signatures}, see \cite[Definition
5.8]{nomcoalgdata}. We refrain from defining the corresponding
\emph{binding signatures} explicitly here, focusing instead on the
functor representation. The latter is given by a generalization of the
class of polynomial functors:
\begin{definition}
  The class of \emph{binding functors} is the smallest class of
  functors on $\Nom$ that contains the identity functor and all
  constant functors and is closed under all coproducts, binary
  products, and left composition with the abstraction functor
  $[\V](-)$\,. The \emph{raw functor} of a binding functor is
  the polynomial functor obtained by replacing all occurrences of
  $[\V](-)$\, with $\V\times(-)$\,. (Strictly speaking
  this requires an explicit distinction between a syntax and a
  semantics for binding functors; we refrain from elaborating this
  distinction to avoid overformalization.)
\end{definition}
\begin{lemma}
  Every binding functor is a quotient of its raw functor.
\end{lemma}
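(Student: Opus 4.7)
I would prove this by structural induction on the definition of a binding functor $F$, showing simultaneously that the raw functor $F^{\text{raw}}$ is well-defined and that there is a pointwise surjective natural transformation $q_F:F^{\text{raw}} \twoheadrightarrow F$.

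\textbf{Base cases.} If $F = \Id$ or $F$ is constant, then $F^{\text{raw}} = F$ and $q_F = \id$.

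\textbf{Inductive cases.} The key lemma I need, in each inductive step, is that pointwise surjective natural transformations between endofunctors on $\Nom$ are closed under coproducts, binary products, and left composition with $[\V](-)$. For coproducts this is immediate since colimits are computed pointwise in $\Set$ via $U$. For binary products, I use that $\Nom$ is a topos, so epimorphisms are stable under pullback; hence if $q:G\epito H$ and $q':G'\epito H'$ are pointwise surjective, then so is $q\times q':G\times G'\to H\times H'$ (the product of surjections factors as $(q\times \id)\circ(\id\times q')$, each factor being a pullback of a surjection along a product projection). Given these, the inductive steps for $F=G+H$ and $F=G\times H$ use $q_F = q_G + q_H$ and $q_F = q_G \times q_H$ respectively, noting that raw commutes with $+$ and $\times$ by definition.

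\textbf{The abstraction step.} For $F = [\V]G$, the raw functor is $F^{\text{raw}} = \V\times G^{\text{raw}}$. I would define $q_F$ as the composite
\[
\V\times G^{\text{raw}} \xrightarrow{\;\V\times q_G\;} \V\times G \xrightarrow{\;\kappa_G\;} [\V]G,
\]
where $\kappa_{G,X}:\V\times GX \to [\V]GX$ sends $(v,x)\mapsto \abstr{v}x$. The map $\kappa_{G,X}$ is surjective by the very definition of $[\V](-)$ as the quotient $(\V\times(-))/\mathord{\sim_\alpha}$, and it is natural in $G$ and equivariant. The factor $\V\times q_G$ is pointwise surjective by the binary-product closure property above (with the identity on $\V$). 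So $q_F$ is pointwise surjective, as required.

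\textbf{Main obstacle.} The only conceptually non-trivial point is ensuring that surjectivity is preserved by taking products of equivariant maps; this fails in generic categories but holds in $\Nom$ because it is a topos (equivalently, because $U:\Nom\to\Set$ creates finite limits and colimits and surjections in $\Set$ are stable under products). All the remaining steps are formal once one matches the syntactic construction of a binding functor with the construction of $q_F$ by the same recursion, so the whole argument fits on half a page.
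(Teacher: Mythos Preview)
Your proof is correct. The paper actually states this lemma without proof, treating it as routine; your structural induction, matching the inductive definition of binding functors and using the canonical quotient $\V\times(-)\twoheadrightarrow[\V](-)$ at the abstraction step, is exactly the expected argument and fills in the omitted details.
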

\noindent By Lemma~\ref{lem:substrengths} and
Corollary~\ref{cor:rho-quotient}, we have in particular that for every
quotient~$H$ of a polynomial functor $F$, the rational fixpoint
$\varrho H$ is a quotient of~$\varrho F$. By the previous lemma, this
applies in particular in the situation where $H$ is a binding functor
and $F$ is its raw functor. One concrete instance is the main result of \cite{mw15}:
\begin{example}
  For $FX = \V  + \V×X + X×X$ and $HX = \V + [\V]X + X×X$ we already saw that the
  rational fixpoint $\varrho F$ is formed by all rational $\lambda$-trees (see
  Example~\ref{ex:rats}(2)). Furthermore, we know that $\varrho H$ is a subcoalgebra
  of
  the final $H$-coalgebra, and the latter consists of the
  $\alpha$-equivalence classes of $\lambda$-trees with finitely many
  free variables~\cite{nomcoalgdata}. But now we also know that
  $\varrho H$ is a quotient of $\varrho F$, therefore~$\varrho H$
  consists of those $\alpha$-equivalence classes of $\lambda$-trees
  that contain a rational $\lambda$-tree.
\end{example}
\noindent Similarly, for a binding functor $H$ arising from a binding
signature one takes its raw functor $F$. Then the rational fixpoint of
$F$ consists of all rational trees for the given binding signature, and
it follows that the rational fixpoint of $H$ consists of all rational
trees modulo $\alpha$-equivalence, i.e.~it contains precisely those
$\alpha$-equivalence classes of trees for the binding signature that have
finitely many free variables and contain a rational tree.

\subsection{Exponentiation by Orbit-Finite Strong Nominal Sets}
As in \Set, the core ingredient of functors that model various
flavours of nominal automata as coalgebras is exponentiation by the
input alphabet.  Denote by \( X^P \)
the internal hom-object witnessing the cartesian closedness of $\Nom$
(i.e.\ $(-)^P$ is right-adjoint to $(-) × P$); this nominal set
contains those maps $f: P\to X$ that are finitely supported w.r.t.~the
group action given by
\[
    (\pi\star f)(y) = \pi\cdot f(\pi^{-1}\cdot y).
\]
We will see that for $P$ orbit-finite and strong, the functor $(-)^P$
is a quotient of a polynomial functor, so that
Corollary~\ref{cor:rho-quotient} applies to $(-)^P$. In fact, we are
going to prove that
\begin{equation}\label{eq:iff}
\begin{array}{p{2.3cm}} \centering
$P$ is strong \newline
and orbit-finite
\end{array}
\iff
\begin{array}{p{3.7cm}} \centering
$(-)^P$ is a quotient of a \newline
polynomial \Nom-functor.
\end{array}
\end{equation}
It is not difficult to see that orbit-finiteness of $P$ is necessary:
\begin{proposition}
    \label{orbitFiniteNecessary}
    If $(-)^P$ is finitary, then $P$ is orbit-finite.
\end{proposition}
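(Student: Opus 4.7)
The strategy is to evaluate the functor $(-)^P$ on a filtered colimit presentation of $P$ itself and to exhibit the identity $\id_P \in P^P$ as an obstruction unless $P$ is already orbit-finite.

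First, recall that $\Nom$ is locally finitely presentable with the orbit-finite nominal sets as its finitely presentable objects, so every nominal set is a directed colimit of its orbit-finite subobjects. Concretely, I would write $P = \colim_{i\in I} P_i$, where $(P_i)_{i\in I}$ is the directed family of all orbit-finite subobjects of $P$, ordered by inclusion; the colimit is computed as a directed union of subsets of $P$, since the forgetful $U:\Nom\to\Set$ creates filtered colimits.

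Next, I would apply the hypothesis that $(-)^P$ is finitary to obtain $P^P \cong \colim_{i\in I} P_i^P$. Since each inclusion $P_i\hookrightarrow P$ is monic and $(-)^P$ preserves monos (being a right adjoint, it preserves all limits, including equalizers), each induced map $P_i^P \to P^P$ is injective, as are the connecting maps $P_i^P \to P_j^P$ for $P_i\subseteq P_j$. Hence the isomorphism above identifies $P^P$ with the directed union $\bigcup_i P_i^P$ inside $P^P$, where $P_i^P$ is identified with the set of finitely supported functions $P\to P$ whose image lies in $P_i$.

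Now I would exploit the element $\id_P\in P^P$: as an equivariant map, $\id_P$ is fixed by the action $(\pi\star f)(y)=\pi\cdot f(\pi^{-1}\cdot y)$ and therefore has empty support, so it genuinely lies in $P^P$. By the previous paragraph, $\id_P$ must belong to some $P_i^P$, which forces $\operatorname{image}(\id_P)=P\subseteq P_i$, whence $P=P_i$. Since $P_i$ is orbit-finite by construction, so is $P$.

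The argument is essentially formal once the lfp structure of $\Nom$ is in hand; the only point that requires care is the identification of the colimit $\colim P_i^P$ with the union $\bigcup_i P_i^P$ inside $P^P$, which relies on the directedness of $I$ and the injectivity of the connecting and cocone maps. No serious obstacle is anticipated.
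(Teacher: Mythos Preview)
Your argument is correct and essentially identical to the paper's: both write $P$ as a filtered colimit of its orbit-finite subobjects, apply finitarity of $(-)^P$, and observe that the identity on $P$ (the paper phrases this as the curried projection $\overline{\outr}:1\to P^P$) must factor through some stage $P_i^P$, forcing $P=P_i$. The paper's presentation is slightly more categorical (using that $1$ is finitely presentable so $1\to\colim_i P_i^P$ factors through a finite stage), while you spell out the colimit as a union inside $P^P$; the content is the same.
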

\begin{proof}
  Take the projection $\outr: 1×P \to P$ and consider its curried
  version $\overline \outr: 1 \to P^P$. Since $(-)^P$ is finitary and $1$
  is orbit-finite (i.e.\ finitely presentable), $\overline \outr$ factors
  through an orbit-finite subobject $j:A\hookrightarrow P$:
    \[
    \begin{tikzcd}
        1 \arrow{r}{\overline \outr}
          \arrow{dr}[below left]{\bar f}
        & P^P
        \\
        & A^P
            \arrow{u}[right]{j^P}
    \end{tikzcd}
    \]
    In other words, $\outr = j\cdot f$, where $f$ is the uncurrying of
    $\bar f$. Since $\outr$ is surjective, $j$ is surjective; hence,
    $P$ is orbit-finite because orbit-finite sets are closed under
    epimorphisms.
\end{proof}
\noindent
Secondly, we show that it is necessary that $P$ is strong. For the
sake of readability, we show the contraposition of~\eqref{eq:iff} for
a concrete example and then indicate how the construction generalizes.
\begin{proposition}\label{prop:nec}
  Let $B = \{ \{a,b\} \mid a,b \in \V, a\neq b\}$ be the (non-strong)
  nominal set of unordered pairs of distinct elements of $X$. Then the
  functor $(-)^B$ is not a natural quotient of any \Nom-functor with a
  sub-strength.
\end{proposition}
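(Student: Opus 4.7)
The plan is to derive a contradiction from the assumption that some $\Nom$-functor $F$ with a sub-strength admits a natural quotient $q\colon F \twoheadrightarrow (-)^B$. The strategy exploits the fact, established in Proposition~\ref{orbitFiniteEpiPreservation}, that $F$ must preserve epimorphisms with orbit-finite codomain; via naturality this preservation property will transfer to $(-)^B$, which however I will show to fail for a concrete epimorphism into $B$.

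The key epimorphism is $e\colon \V^{2\neq} \twoheadrightarrow B$ defined by $(a,b)\mapsto\{a,b\}$. This is clearly equivariant and surjective, and $B$ is orbit-finite (single orbit). Hence by Proposition~\ref{orbitFiniteEpiPreservation}, $Fe$ is an epimorphism. Considering the naturality square
\begin{equation*}
\begin{tikzcd}
F\V^{2\neq} \arrow[->>]{d}[left]{q_{\V^{2\neq}}} \arrow[->>]{r}{Fe} & FB \arrow[->>]{d}{q_B} \\
(\V^{2\neq})^B \arrow{r}{e^B} & B^B
\end{tikzcd}
\end{equation*}
together with pointwise surjectivity of $q$, a routine diagram chase (take any $h\in B^B$, lift to $FB$ via $q_B$, then to $F\V^{2\neq}$ via $Fe$, and push down via $q_{\V^{2\neq}}$) shows that $e^B$ must be surjective.

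The main step, and where the non-strongness of $B$ enters essentially, is to show directly that $e^B$ is \emph{not} surjective. I verify this for the equivariant element $\id_B \in B^B$ (whose support is empty). Any preimage would be a finitely supported $g\colon B\to\V^{2\neq}$ with $e\circ g=\id_B$, i.e.\ $g(\{a,b\})\in\{(a,b),(b,a)\}$ for every unordered pair. If $S$ were a finite support of $g$, choose distinct $a,b \in \V\setminus S$; then $(a\,b)\in\Fix(S)$ fixes $\{a,b\}$ but swaps $(a,b)$ and $(b,a)$, so $(a\,b)\star g$ disagrees with $g$ at $\{a,b\}$, contradicting that $S$ supports $g$. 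Hence $\id_B$ has no finitely supported preimage, contradicting the surjectivity of $e^B$ derived above.

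I do not expect any serious obstacle: the conceptual heavy lifting is already contained in Proposition~\ref{orbitFiniteEpiPreservation}, and the failure of $\id_B$ to lift along $e$ is exactly the manifestation of the non-strongness of $B$ (the transposition $(a\,b)$ fixes $\{a,b\}$ without being in $\Fix(\supp\{a,b\})$). The only mild subtlety is the choice of the test-epimorphism $e$; $\V^{2\neq}\twoheadrightarrow B$ is the natural candidate because $\V^{2\neq}$ is the strong, orbit-finite ``cover'' of $B$ by ordered pairs, and it is precisely the symmetry forgetting this order that obstructs lifting.
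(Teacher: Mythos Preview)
Your proof is correct but follows a genuinely different route from the paper's. The paper argues directly with the sub-strength: it lifts $\id_B$ along $q_B$ to some $x\in FB$, applies the sub-strength with $Y=\V^n$ where $n=|\supp(x)|$ to obtain an element of $F(B<Y)$, pushes this through $q_{B<Y}$ to get a finitely supported $g\colon B\to B<Y$ with $\outl\circ g=\id_B$, and then derives the same transposition contradiction you found, using that the $Y$-component of $g(\{a,b\})$ must record both $a$ and $b$.

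Your argument instead treats Proposition~\ref{orbitFiniteEpiPreservation} as a black box: sub-strength implies preservation of epis with orbit-finite codomain, and naturality of $q$ then forces $(-)^B$ to preserve the epi $e\colon\V^{2\neq}\twoheadrightarrow B$, which you show it does not (no finitely supported section of $e$ exists). This is cleaner and more conceptual --- it isolates the obstruction as a failure of epi-preservation rather than an ad-hoc element chase, and it makes the role of Proposition~\ref{orbitFiniteEpiPreservation} explicit. The paper's direct argument, on the other hand, is self-contained and does not depend on that proposition; it is essentially the specific instance of the epi-preservation argument unwound by hand, and the explicit construction there is likely what motivated the general Proposition~\ref{orbitFiniteEpiPreservation} in the first place.
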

\begin{proof}
    Assume that we have a natural quotient $q_X: FX\twoheadrightarrow X^B$ and a sub-strength
    $s_{X,Y}: FX < Y \to F(X<Y)$. Then for any $Y$, the following diagram
    commutes:
    \[
    \begin{tikzcd}
        FB < Y
        \arrow{dr}[below left]{\outl}
        \arrow{r}{s_{B,Y}}
        & F(B<Y) \arrow{d}[near end]{F\outl}
        \arrow[->>]{r}{q_{B<Y}}
        \descto[pos=.1]{dl}{\text{\eqref{diag:substrength}}}
        \descto[yshift=2mm]{dr}{\text{Naturality}}
        & (B<Y)^B
        \arrow{d}{\outl^B}
        \\
        {} &
        FB
        \arrow[->>]{r}{q_{B}}
        & B^B
    \end{tikzcd}
    \]
    The identity $\id_B$ is equivariant, hence finitely supported,
    i.e.~$\id_B \in B^B$. Since $q_B$ is surjective, we have $x\in FB$
    such that $q_B(x) = \id_B$. Let $n = |\supp(x)|$ and 
    $Y= \V^n$. Then there exist $v_1,\ldots,v_n \in \V$ such that
    $(x,(v_1,\ldots,v_n)) \in FB <Y$. Put
    $g= q_{B<Y}( s_{B,Y}(x,v_1,\ldots,v_n)): B \to (B < Y)$.
    This is a finitely supported map, so we can pick distinct
    $a,b\in \V$ that are fresh for~$g$, so that $(a\,b)\star g= g$. By
    commutativity of the above diagram,
    \begin{align*}
      \outl\circ g &=\outl^B(g)=\outl^B(q_{B<Y}(
      s_{B,Y}(x,v_1,\ldots,v_n)))
      \\
      &=q_B(\outl(x,v_1,\ldots,v_n))=q_B(x)=\id_B.
    \intertext{In particular, $g(\{a, b\})$ has the form
    $g(\{a, b\})=(\{a,b\},u_1,\ldots,u_n)$ with $u_1,\dots,u_n\in\V$. Since
    $\supp(\{a,b\})\subseteq \supp(u_1,\ldots,u_n)$, we have
    $1\le i\le n$ such that $u_i = a$. Therefore,
    }
        g(\{a,b\})
        &= \big((a\,b)\star g\big)(\{a,b\})
        = (a\,b)\cdot g\big((a\,b)^{-1}\cdot \{a,b\}\big)
        = (a\,b)\cdot g\big(\{a,b\}\big)
        \\ &
        = (\{a,b\},(a\,b)\cdot u_1,\ldots,(a\,b)\cdot u_n)
        \neq g(\{a,b\}),
    \end{align*}
    in contradiction to $(a\,b) \cdot u_i = b \neq a = u_i$.
\end{proof}
\noindent The counterexample in Proposition~\ref{prop:nec} can be generalized to an arbitrary
non-strong nominal set $B$ by using, in lieu of $\{a,b\}$ and $(a\, b)$ in the
above proof, an element $z\in B$ that fails to be strongly supported
but is fresh for $g$ (i.e.\ $\supp(z)\cap\supp(g)=\emptyset$) and 
$\pi\in (\fix(g)\cap \fix(z))\setminus \Fix(\supp(z))$, respectively.
\begin{remark}
  Proposition~\ref{prop:nec} has two consequences for an
  orbit-finite non-strong nominal set $B$:
    \begin{itemize}
    \item The exponentiation functor $E = (-)^B$ is not the quotient of any
      polynomial \Nom-functor (in the sense of
      Definition~\ref{def:polynomial}), i.e.~``$\Leftarrow$'' in~\eqref{eq:iff} holds.  

    \item The exponentiation functor $E$ has no sub-strength.

    \end{itemize}
\end{remark}

\noindent A basic example of a strong nominal set is the set $P = \V$ of
all atoms. We will now show that $(-)^\V$ is a quotient of a
polynomial functor. Later, we extend this to $P = \V^n$, and then
conclude the desired result for arbitrary orbit-finite strong nominal
sets $P$.

\begin{notation}
In the following we shall write $\vec x$ for a tuple $(x_1,\ldots,x_n)\in X^n$ for any set $X$, and for a map $f: X\to Y$ we write $f(\vec x)$ for the tuple $(f(x_1), \ldots, f(x_n))$. 
\end{notation}
\noindent 
Consider the functor
\begin{equation}\label{eq:F}
  F X = \V×X×\coprod_{n\in \N} \V^n×X^n.
\end{equation}
In order to identify $(-)^\V$ as a quotient of this functor, we
define a map $\bar q_X: FX × \V \to X$:
\[
    \bar q_X(a,d,\vec v, \vec x,b) =
    \begin{cases}
        x_i & \text{where }i\text{ is minimal s.t.~}v_i = b\\
        (a\ b)\cdot d    & \text{if no such $i$ exists.}
    \end{cases}
\]
This definition of $\bar q_X$ exploits the fact that a finitely
supported map $f: \V\to X$ is equivariant w.r.t.~permutations that fix
elements in $\supp(f)$, i.e.~whenever $\pi \in \Fix(\supp(f))$ then
$f(\pi \cdot x) = \pi \cdot f(x) $ for every $x$.  (In particular, the
finitely supported maps with empty support are precisely the
equivariant maps.) Therefore, in order to represent $f$, we fix a name
$a\in\V\setminus\supp(f)$ and its image $d=f(a)$; these data then determine
the action of $f$ on all names of the form $\pi(a)$ for
$\pi\in\Fix(\supp(f))$. These are all names except those in
$\supp(f)$; we therefore enumerate the names in $\supp(f)$ as a
tuple~$\vec v$, and their images as a tuple~$\vec x$, arriving at a
representation of~$f$ as a quadruple $(a,d,\vec v,\vec x)\in FX$.
\begin{lemma}
    The map $\bar q_X: FX×\V\to X$ is equivariant and natural in $X$.
\end{lemma}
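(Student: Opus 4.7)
The plan is to verify both properties by a straightforward case analysis on the definition of $\bar q_X$, where the two cases correspond to whether some $v_i$ equals $b$. The only mild subtlety is the transposition in the second case, which requires the standard conjugation identity $\pi\cdot(a\ b)\cdot\pi^{-1}=(\pi(a)\ \pi(b))$.

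\smallskip\noindent\textbf{Equivariance.} Fix $\pi\in\perms(\V)$. I would first observe that acting componentwise gives
$\pi\cdot(a,d,\vec v,\vec x,b)=(\pi(a),\pi\cdot d,\pi(\vec v),\pi\cdot\vec x,\pi(b))$,
and that since $\pi$ is a bijection, the minimal index $i$ with $v_i=b$ exists if and only if the minimal $i$ with $\pi(v_i)=\pi(b)$ exists, and these indices agree. Hence the case split is preserved by the action. In the first case both sides reduce to $\pi\cdot x_i$. In the second case, the right-hand side is $\pi\cdot((a\ b)\cdot d)$ and the left-hand side is $(\pi(a)\ \pi(b))\cdot(\pi\cdot d)$. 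These agree by the identity
\[
(\pi(a)\ \pi(b))\cdot\pi \;=\; \pi\cdot(a\ b),
\]
which follows from $\pi\cdot(a\ b)\cdot\pi^{-1}=(\pi(a)\ \pi(b))$.

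\smallskip\noindent\textbf{Naturality.} For an equivariant map $f: X\to Y$, unfolding $Ff$ gives $(Ff\times\id_\V)(a,d,\vec v,\vec x,b)=(a,f(d),\vec v,f(\vec x),b)$, so the case split on whether some $v_i=b$ yields the same index in both $\bar q_X$ and $\bar q_Y\circ(Ff\times\id_\V)$. In the first case both sides equal $f(x_i)$. In the second case we need $(a\ b)\cdot f(d)=f((a\ b)\cdot d)$, which is immediate from equivariance of $f$.

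\smallskip\noindent Neither step poses a genuine obstacle; the only point worth spelling out carefully is the conjugation identity for transpositions in the equivariance argument, which is what makes the choice of $(a\ b)\cdot d$ (rather than just $d$) work. The whole proof should fit in half a page.
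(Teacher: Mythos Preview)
Your proof is correct and follows essentially the same approach as the paper. The paper's naturality argument is virtually identical to yours; for equivariance the paper is terser, simply noting that all operations in the definition of $\bar q$ are equivariant (in particular the first-occurrence selector and the map $(a,b,d)\mapsto (a\,b)\cdot d$), whereas you spell out the case analysis and the conjugation identity explicitly---but this is the same argument unpacked.
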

\begin{proof}
  \qquad\emph{Equivariance:} All operations used in the definition of
  $\bar q$ are equivariant, in particular the operation of picking the
  first occurrence of given name, if any, from a list of names, as
  well as the map $(a,b,d)\mapsto (a\,b)\cdot d$.

\emph{Naturality:} Let 
$f: X\to Y$ be equivariant. Then
\begin{align*}
    f\big(\bar q_X(a,d,\vec v,\vec x,b)\big)
    &=
    \begin{cases}
        f(x_i) & \text{where }i\text{ is minimal s.t.~}v_i = b\\
        f\big((a\ b)\cdot d\big)
                & \text{if no such $i$ exists.}
    \end{cases}
    \\ &=
    \begin{cases}
        f(x_i) & \text{where }i\text{ is minimal s.t.~}v_i = b\\
        (a\ b)\cdot f(d)
                & \text{if no such $i$ exists.}
    \end{cases}
    \\ &
    = \bar q_X(a,f(d),\vec v, f(\vec x), b).
    \tag*{\qedhere}
\end{align*}
\end{proof}
\noindent By currying, $\bar q$ induces a natural transformation
\begin{equation*}
q: F\to (-)^\V.
\end{equation*}
\begin{lemma}\label{lem:qSurjective}
  The natural transformation $q: F\to (-)^\V$ is component-wise
  surjective. More specifically, given $f\in X^\V$, let
  $\{v_1,\ldots,v_n\} = \supp(f)$ and $a\in \V\setminus\supp(f)$; then
  we have
    \[
         q_X(a,f(a),\vec v, f(\vec v)) = f.
    \]
\end{lemma}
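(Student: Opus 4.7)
The plan is to unwrap the curried form and verify the stated equation pointwise. Since $q_X$ arises from $\bar q_X$ by currying in the $\V$-argument, the asserted equality $q_X(a,f(a),\vec v, f(\vec v)) = f$ is equivalent to
\[
\bar q_X(a,f(a),\vec v, f(\vec v), b) = f(b) \qquad \text{for every } b\in\V,
\]
so it suffices to establish this identity by case distinction on whether or not $b$ lies in $\supp(f) = \{v_1,\ldots,v_n\}$. Surjectivity of $q_X$ then follows because every $f\in X^\V$ is finitely supported, so suitable $\vec v$ and $a$ always exist.

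The first case, $b = v_i$ with $i$ minimal, is immediate from the defining first clause of $\bar q_X$: one reads off
$\bar q_X(a,f(a),\vec v, f(\vec v), b) = f(\vec v)_i = f(v_i) = f(b)$.

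The second case, $b\notin\supp(f)$, is the substantive one, and the plan is to exploit the description of the $\perms(\V)$-action on $X^\V$ recalled just above the statement. Since both $a$ and $b$ lie outside $\supp(f)$, the transposition $(a\ b)$ belongs to $\Fix(\supp(f))\subseteq \fix(f)$, hence $(a\ b)\star f = f$. Evaluating this at $b$ and using the formula $(\pi\star f)(y) = \pi\cdot f(\pi^{-1}\cdot y)$ together with $(a\ b)^{-1} = (a\ b)$ and $(a\ b)\cdot b = a$ yields
\[
f(b) = \big((a\ b)\star f\big)(b) = (a\ b)\cdot f\big((a\ b)\cdot b\big) = (a\ b)\cdot f(a),
\]
which is exactly the value produced by the second clause of $\bar q_X$ under the hypothesis $b\notin\{v_1,\ldots,v_n\}$.

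The only step that requires care is the second case, and the only obstacle there is keeping track of the unusual action on the exponential $X^\V$; once one writes it out, the transposition trick does the job.
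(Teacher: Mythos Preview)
Your proof is correct and follows essentially the same route as the paper: uncurry to reduce to a pointwise equality, dispatch the case $b\in\supp(f)$ directly from the first clause of $\bar q_X$, and in the case $b\notin\supp(f)$ use that the transposition $(a\ b)$ fixes $f$ (since both $a$ and $b$ are fresh for $f$) together with the formula for the action $\star$ on $X^\V$. Your chain in the second case is in fact slightly more direct than the paper's, which inserts an $(a\ b)^{-1}\cdot(a\ b)$ to reach the same endpoint.
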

\begin{proof}
  We just have to formalize the argument given in the informal
  explanation of the definition of $\bar q$: Let $b\in\V$, and put
  $\vec v = (v_1, \ldots, v_n)$,
  $g = q_X\big(a,f(a),\vec v, f(\vec v)\big): \V \to X$. We have to
  show $g(b)=f(b)$.
    \begin{itemize}
    \item If $b \in \supp(f)$, then $b = v_i$ for some $i$, so that
      $g(v_i) = f(v_i)$ by definition.
    \item If $b\in \V\setminus\supp(f)$, then $v_i \neq b$ for all
      $1\le i \le n$, so $g(b)=(a\,b)\cdot f(a)$. Moreover,
      $a,b\notin\supp(f)$ implies $(a\ b)\star f = f$. Therefore,
    \[
        g(b) = (a\ b)\cdot f(a)
             = (a\ b)\cdot f((a\ b)^{-1}\cdot (a\ b)\cdot a)
             = (a\ b) \star f((a\ b) \cdot a)
             = f((a\ b)\cdot a)
             = f(b).
    \qedhere
    \]

    \end{itemize}
\end{proof}
\noindent 
Up to now, we have seen that exponentiation by $\V$ is a quotient of a
polynomial functor, $F$~\eqref{eq:F}. To extend this to exponentiation
by $\V^n$, $n \geq 0$, recall from Lemma~\ref{lem:quot} that quotients
of polynomial $\Set$-functors compose.  Now observe that by the usual
exponentiation laws, $(-)^{\V^n}$ is just the $n$-fold composite of
$(-)^\V$ with itself. Being a polynomial functor on a cartesian closed
category, $F$ preserves epis; so $(-)^{\V^n}$ is a quotient of $F^n$
(i.e.\ of the $n$-fold composite $F\circ\dots\circ F$) by
Lemma~\ref{lem:quot}, applied inductively with trivial base case
$n=0$.

\begin{definition}
  Recall from Notation~\ref{not:neq} that $X^{n\neq}\subseteq X^n$ denotes
  the set of tuples of $n$ distinct elements, and let
  $m: X^{n\neq}\rightarrowtail X^n$ be the inclusion map. Define
    \[
    \op{uniq}: X^n \twoheadrightarrow \coprod_{1\le k\le n} X^{k\neq}
    \]
    to be the map that removes all duplicates:
    \[
        \op{uniq}(\vec x) =
        \big(
            v_i \mid 1\le i\le n, \forall j < i: v_j \neq v_i
        \big).
    \]
\end{definition}
\noindent Note that $\op{uniq}$ is equivariant (although not natural),
since $v_j\neq v_i$ iff $\pi\cdot v_j \neq \pi\cdot v_i$; moreover, we
have
    \begin{equation}
        \begin{tikzcd}
        X^{n\neq}
        \arrow[>->]{r}{m}
        \arrow[>->,shiftarr={yshift=7mm}]{rr}{\inj_n}
        &
        X^n
        \arrow[->>]{r}{\op{uniq}}
        &
        \displaystyle\coprod_{1\le k\le n} X^{k\neq}
        \end{tikzcd}
        \label{eq:uniq}
    \end{equation}
\begin{definition}
    For $n\ge 1$, define a map
    \[
        \op{fill}: X^n × X^{2n\neq} \to X^{n\neq}
    \]
    where $\op{fill}(\vec v, \vec w)$ removes duplicates from $\vec v$
    and fills the gap with components of $\vec w$ to obtain $n$
    distinct elements. Formally, we define $\op{fill}(\vec v, \vec w)$
    as the length-$n$ prefix of $\op{uniq}(\vec v) \vec w'$ where
    $\vec w' = (w_i\mid 1\le i\le 2n, w_i \not\in \vec v)$, noting
    that $\vec w'$ has at least $n$ elements. The map $\op{fill}$ is
    equivariant because $w_i\not \in \vec v$ iff
    $\pi\cdot w_i \not\in \pi\cdot \vec v$.  The diagram
    \begin{equation}
        \begin{tikzcd}[column sep = 14mm]
        X^{n\neq} × X^{2n\neq}
        \arrow[>->]{r}{m × X^{2n\neq}}
        \arrow[shiftarr={yshift=6mm}]{rr}{\outl}
        &
        X^n × X^{2n\neq}
        \arrow[->]{r}{\op{fill}}
        &
        X^{n\neq}.
        \end{tikzcd}
        \label{eq:fill}
    \end{equation}
    commutes.
  \end{definition}
  
\begin{lemma}\label{exponentUniq}
  The restriction map $r_X: X^{\V^n} \to X^{\V^{n\neq}}$ (i.e.\
  $r_X(g) = g\cdot m$) is equivariant, surjective, and natural in $X$.
\end{lemma}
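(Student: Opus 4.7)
My plan is to dispatch the three assertions in order, spending most of the effort on surjectivity, since the other two parts are essentially formalities. For equivariance of $r_X$, I would unfold the action on the internal hom, $(\pi\star g)(\vec v) = \pi\cdot g(\pi^{-1}\cdot\vec v)$: because $m\colon \V^{n\neq}\rightarrowtail \V^n$ is equivariant, $m$ commutes with $\pi^{-1}\cdot(-)$, and the short computation $r_X(\pi\star g)(\vec v) = \pi\cdot g(m(\pi^{-1}\cdot \vec v)) = (\pi\star r_X(g))(\vec v)$ finishes the argument. Naturality in $X$ is equally immediate: for equivariant $f\colon X\to Y$ the functor $(-)^{\V^n}$ acts on morphisms by postcomposition, so both $r_Y\circ f^{\V^n}$ and $f^{\V^{n\neq}}\circ r_X$ send $g$ to $f\circ g\circ m$.

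The substantive claim is surjectivity. Given a finitely supported $h\in X^{\V^{n\neq}}$, the idea is to extend $h$ to $\V^n$ using $\op{fill}$, which turns an arbitrary $n$-tuple into one with distinct components. Concretely, I would fix once and for all a tuple $\vec w_0\in \V^{2n\neq}$ with $\supp(\vec w_0)\cap\supp(h) = \emptyset$ (possible because $\V$ is infinite), and define $g\colon \V^n\to X$ by $g(\vec v) := h\bigl(\op{fill}(\vec v,\vec w_0)\bigr)$. Instantiating the commutative triangle \eqref{eq:fill} at the nominal set $\V$ yields $\op{fill}(m(\vec v),\vec w_0) = \vec v$ for every $\vec v\in \V^{n\neq}$, whence $g\circ m = h$, i.e.\ $r_X(g) = h$.

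The one delicate step, which I expect to be the main obstacle, is verifying that $g$ actually lies in the internal hom $X^{\V^n}$, i.e.\ is finitely supported. I would show that $\supp(h)\cup\supp(\vec w_0)$ supports $g$: for any $\pi$ fixing this set pointwise we have $\pi\star h = h$ and $\pi\cdot \vec w_0 = \vec w_0$, and equivariance of $\op{fill}$ (established in the paper) gives
\[
    (\pi\star g)(\vec v) = \pi\cdot h\bigl(\op{fill}(\pi^{-1}\cdot\vec v,\vec w_0)\bigr)
    = \pi\cdot h\bigl(\pi^{-1}\cdot \op{fill}(\vec v,\vec w_0)\bigr)
    = h\bigl(\op{fill}(\vec v,\vec w_0)\bigr)
    = g(\vec v),
\]
where in the penultimate step we use $\pi\star h=h$. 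Hence $\pi\star g = g$, so $g\in X^{\V^n}$, and $r_X$ is surjective.
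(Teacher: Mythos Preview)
Your proof is correct and follows essentially the same route as the paper: extend a given $h\in X^{\V^{n\neq}}$ to all of $\V^n$ by precomposing with $\op{fill}(-,\vec w_0)$ and then invoke~\eqref{eq:fill} to see that the extension restricts back to $h$. Two minor remarks: first, your freshness condition $\supp(\vec w_0)\cap\supp(h)=\emptyset$ is never used and can be dropped (the paper picks an arbitrary $\vec w\in\V^{2n\neq}$); second, the paper packages the finite-support verification more slickly by defining a single \emph{equivariant} map $\bar g\colon X^{\V^{n\neq}}\times\V^{2n\neq}\times\V^n\to X$, $(f,\vec w,\vec v)\mapsto f(\op{fill}(\vec v,\vec w))$, and then currying --- the curried map $X^{\V^{n\neq}}\times\V^{2n\neq}\to X^{\V^n}$ is automatically equivariant, so its values are finitely supported without any further calculation. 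Your hand verification unwinds exactly this.
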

\begin{proof}
  Equivariance and naturality are by standard properties of cartesian
  closed categories. We show surjectivity.
  We write $\op{eval}_{X}$ for the evaluation map
  $X^{\V^{n\neq}} × \V^{n\neq} \to X$. We then have an equivariant map
    \[
        \bar g: X^{\V^{n\neq}} × \V^{2n\neq} × \V^n\to X,
        \quad
        \bar g(f,\vec w, \vec v) =
            \op{eval}_X(f, \op{fill}(\vec v,\vec w))
    \]
    whose curried version
    $g: X^{\V^{n\neq}} \!\!\!× \V^{2n\neq} \to X^{\V^n}\!\!$ provides us
    with the desired preimage of a given $f\in X^{\V^{n\neq}}$. Indeed,
    pick any $\vec w\in \V^{2n \neq}$. Then
    $r_X(g(f,\vec w))=f$: for $\vec u\in\V^{n\neq}$, we have
    \begin{align*}
        r_X(g(f,\vec w))(\vec u)
        &
        = g(f,\vec w)(m(\vec u))
        = \bar g(f,\vec w,m(\vec u))
        = \op{eval}_X(f,\op{fill}(m(\vec u),\vec w))
        \\ &
        \overset{\mathclap{\eqref{eq:fill}}}{=}
        \op{eval}_X(f,\outl(\vec u,\vec w))
        = \op{eval}_X(f,\vec u)
        = f(\vec u).
        \tag*{\qedhere}
    \end{align*}
\end{proof}
\noindent 
This result allows us to describe the exponentiation by a nominal set
from a slightly larger class of nominal sets.
\begin{lemma}
    \label{singleOrbitExponent}
    \begin{enumerate}[(1)]
    \item\label{item:single-orbit-strong} Every single-orbit strong nominal set $P$ is isomorphic to $\V^{n\neq}$
      where $n = |\supp(p)|$, $p\in P$.
    \item \label{item:strong} Every strong nominal set is
      isomorphic to a coproduct of nominal sets of the form~$\V^{n\neq}$.
    \end{enumerate}
\end{lemma}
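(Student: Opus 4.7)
The plan is to prove (1) directly and then derive (2) by orbit decomposition.

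For (1), let $P$ be a single-orbit strong nominal set, fix $p\in P$, and enumerate $\supp(p)=\{v_1,\ldots,v_n\}$ in some chosen order. Define $\phi\colon P\to\V^{n\neq}$ by $\phi(\pi\cdot p) = (\pi(v_1),\ldots,\pi(v_n))$ for $\pi\in\perms(\V)$; this is well-typed since the $v_i$ are distinct. The main obstacle is well-definedness, and this is exactly where strongness is used: if $\pi\cdot p=\sigma\cdot p$ then $\sigma^{-1}\pi\in\fix(p)=\Fix(\supp(p))$ by strongness, so $\sigma^{-1}\pi$ fixes every $v_i$, giving $\pi(v_i)=\sigma(v_i)$. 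Equivariance of $\phi$ is immediate from the definition. Injectivity runs the same strongness argument backwards: if $\phi(\pi\cdot p)=\phi(\sigma\cdot p)$ then $\sigma^{-1}\pi\in\Fix(\supp(p))\subseteq\fix(p)$, so $\pi\cdot p=\sigma\cdot p$. For surjectivity, given $(a_1,\ldots,a_n)\in\V^{n\neq}$, pick any $\pi\in\perms(\V)$ with $\pi(v_i)=a_i$ (such a $\pi$ exists as a finite permutation since both sides are finite tuples of distinct atoms); then $\phi(\pi\cdot p)=(a_1,\ldots,a_n)$. Thus $\phi$ is an equivariant bijection, hence an iso in \Nom. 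Finally note that any $p'\in P$ satisfies $|\supp(p')|=|\supp(p)|=n$ by Lemma~\ref{orbitsamesupp}, so $n$ is intrinsic to $P$.

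For (2), let $X$ be strong. Then $X$ decomposes as the coproduct $\coprod_{i\in I}O_i$ of its orbits (each $O_i$ carries the induced nominal structure and the inclusions are equivariant). Each $O_i$ is itself strong, since strongness is a pointwise property inherited by subobjects that are closed under the action. Moreover, $O_i$ is single-orbit by construction, so by part (1) there exists $n_i$ with $O_i\cong\V^{n_i\neq}$. Taking the coproduct of these isomorphisms yields $X\cong\coprod_{i\in I}\V^{n_i\neq}$, which is of the claimed form.

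The only genuinely delicate point is the use of strongness to make $\phi$ well-defined; once that is in place, injectivity, equivariance, and surjectivity are routine, and (2) is a purely formal orbit decomposition.
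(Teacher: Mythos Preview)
Your proof is correct and follows essentially the same route as the paper. The paper's version is terser because it invokes Proposition~\ref{orderedOrbitProperty} (the extension property for strong nominal sets): since $\supp(p)=\supp(v_1,\dots,v_n)$, the bijection $\{p\}\cong\{(v_1,\dots,v_n)\}$ extends to an equivariant map in each direction, and uniqueness forces the composites to be identities. Your explicit definition $\phi(\pi\cdot p)=(\pi(v_1),\dots,\pi(v_n))$ and well-definedness check are exactly the content of that proposition specialized to this case, and your direct injectivity/surjectivity arguments replace the paper's two-sided extension plus uniqueness; part~(2) is handled identically.
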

\begin{samepage}
\begin{proof}
\begin{enumerate}[(1)]
  \item 
  Pick some $p\in P$ and choose
  some order $\{v_1,\ldots,v_n\} = \supp(p)$. Since
  $\supp(p)=\supp(v_1,\dots,v_n)$, the isomorphism
  $\{p\}\cong\{(v_1,\dots,v_n)\}$ induces an isomorphism
  $P\cong\V^{n\neq}$ by Proposition~\ref{orderedOrbitProperty}.

  \item 
  Immediate
  from~\ref{item:single-orbit-strong}, noting that every nominal set
  is the coproduct of its orbits and orbits of strong nominal sets are
  strong.
  \qedhere
\end{enumerate}
\end{proof}
\end{samepage}
\noindent 
This combines nicely with the usual power law for coproducts:
\begin{lemma}
    \label{coproductExponent}
    Given quotients $FX \twoheadrightarrow (-)^P$, $GX
    \twoheadrightarrow (-)^Q$,
    exponentiation by $P+Q$ is a quotient of $F×G$.
\end{lemma}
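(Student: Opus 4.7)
The plan is to use the standard power law for coproducts in a cartesian closed category, namely the natural isomorphism
\[
  \varphi_X : X^{P+Q} \xrightarrow{\ \cong\ } X^P \times X^Q,
\]
which exists in $\Nom$ since $\Nom$ is cartesian closed; concretely, $\varphi_X(f) = (f \circ \inj_1,\, f \circ \inj_2)$, and this is equivariant and natural in $X$. Since isomorphisms are (in particular) quotients, it then suffices to produce a quotient $F \times G \twoheadrightarrow (-)^P \times (-)^Q$ and compose with $\varphi^{-1}$.

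To produce this quotient, I would take the given component-wise surjective natural transformations $q : F \twoheadrightarrow (-)^P$ and $q' : G \twoheadrightarrow (-)^Q$ and form their product
\[
  q \times q' : F \times G \longrightarrow (-)^P \times (-)^Q.
\]
This is clearly natural, so the only thing to check is that $q_X \times q'_X$ is surjective for every nominal set $X$. This is a routine verification: in $\Nom$ surjections are just the surjective equivariant maps, and the underlying-set functor $U : \Nom \to \Set$ preserves finite products, so the set-theoretic surjectivity of $q_X$ and $q'_X$ immediately yields surjectivity of $q_X \times q'_X$ on underlying sets, hence in $\Nom$.

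Combining these steps, the required quotient is the composite
\[
  F \times G \xrightarrow{\ q \times q'\ } (-)^P \times (-)^Q \xrightarrow{\ \varphi^{-1}\ } (-)^{P+Q}.
\]
There is no real obstacle here; the only thing one might pause over is to remark that $\varphi$ is indeed equivariant and natural, but this is just the statement that $(-)^{(-)}$ is a bifunctor on a cartesian closed category and that coproducts go to products under $(-)^{(-)}$, both of which hold in $\Nom$ as a Grothendieck topos.
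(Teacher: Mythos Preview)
Your proof is correct and follows exactly the paper's approach: the paper's proof is the one-liner ``Epimorphisms are stable under products in $\Nom$, and $(-)^P\times (-)^Q\cong(-)^{P+Q}$,'' which is precisely what you have spelled out in detail.
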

\begin{proof}
  Epimorphisms are stable under products in \Nom, and
  $(-)^P\times (-)^Q\cong(-)^{P+Q}$.
\end{proof}
\noindent 
In combination, these observations prove `$\implies$' in
\eqref{eq:iff}:
\begin{corollary}
    \label{mainExponentiation}
    For any orbit-finite strong nominal set $P$, the functor $(-)^P$
    is the quotient of a polynomial functor.
\end{corollary}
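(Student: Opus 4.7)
The plan is to assemble the corollary directly from the structural lemmas established just above, without any new categorical machinery. First I would invoke Lemma~\ref{singleOrbitExponent}\ref{item:strong} to decompose $P$ as a coproduct of nominal sets of the form $\V^{n\neq}$; since $P$ is orbit-finite, this is a \emph{finite} coproduct, say $P\cong\coprod_{i=1}^{k}\V^{n_i\neq}$. By iterating Lemma~\ref{coproductExponent} (trivial base case $k=0$, using that $(-)^{\emptyset}$ is constantly~$1$), the functor $(-)^{P}$ is then a quotient of a finite product of the functors $(-)^{\V^{n_i\neq}}$. So it suffices to prove that each $(-)^{\V^{n\neq}}$ is a quotient of a polynomial functor, because polynomial functors are closed under finite products (Definition~\ref{def:polynomial}) and because, by Lemma~\ref{lem:quot}, a finite product of quotients of polynomial functors is a quotient of the corresponding product of polynomial functors (the products being the ones that take place in the functor category, computed pointwise).

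For the remaining claim, I would chain together the three quotients already available. Lemma~\ref{lem:qSurjective} gives a componentwise surjective natural transformation $q\colon F\epito (-)^{\V}$ for the polynomial functor $F$ in~\eqref{eq:F}. Iterating this using Lemma~\ref{lem:quot} and the fact (noted in the paragraph following Lemma~\ref{lem:qSurjective}) that the polynomial functor $F$ preserves epimorphisms yields a componentwise surjective natural transformation $F^{n}\epito (-)^{\V^{n}}$, since $(-)^{\V^{n}}\cong((-)^{\V})^{\cdots^{\V}}$ is the $n$-fold composite by the usual exponential laws. Composing with the componentwise surjective natural transformation $r\colon (-)^{\V^{n}}\epito (-)^{\V^{n\neq}}$ of Lemma~\ref{exponentUniq} shows that $(-)^{\V^{n\neq}}$ is a quotient of the polynomial functor $F^{n}$, as required.

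Stringing these steps together produces the desired polynomial functor for $P$, namely a finite product of iterated composites $F^{n_1}\times\cdots\times F^{n_k}$, equipped with the surjective natural transformation obtained by combining the three quotients above. I do not expect a hard step here: all the substantive work has already been done in Lemmas~\ref{lem:qSurjective}, \ref{exponentUniq}, \ref{singleOrbitExponent}, \ref{coproductExponent}, and \ref{lem:quot}. If anything needs care, it is only the bookkeeping that the composition of quotients really does land on $(-)^{P}$, which is purely a matter of applying the natural isomorphisms $(-)^{A+B}\cong(-)^{A}\times(-)^{B}$ and $(-)^{A\times B}\cong((-)^{B})^{A}$ consistently.
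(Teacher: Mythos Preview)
Your plan is correct and mirrors the paper's proof: decompose $P$ via Lemma~\ref{singleOrbitExponent}, reduce to $(-)^{\V^{n\neq}}$ via Lemma~\ref{coproductExponent}, and obtain the latter as a quotient of the polynomial functor $F^n$ by composing the iterated quotient $F^n\epito(-)^{\V^n}$ with Lemma~\ref{exponentUniq}. One small slip: Lemma~\ref{lem:quot} concerns \emph{composition} of quotients, not products; the fact you need for products (pointwise products of epis are epi in $\Nom$) is exactly what the proof of Lemma~\ref{coproductExponent} already uses, so you can simply iterate that lemma directly rather than detouring through a product-of-quotients step.
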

\begin{proof}
  We have observed that $(-)^{\V^n}$ is a quotient of a polynomial
  functor for every~$n$. By Lemma~\ref{exponentUniq}, it follows that
  $(-)^{\V^{n\neq}}$ is a quotient of a polynomial functor. By
  Lemma~\ref{singleOrbitExponent}.\ref{item:single-orbit-strong}, this
  property extends to $(-)^P$ for every single-orbit strong nominal
  set $P$, and by Lemma~\ref{singleOrbitExponent}E.\ref{item:strong} and
  Lemma~\ref{coproductExponent} to every orbit-finite strong nominal
  set~$P$.
\end{proof}
\noindent
Putting all the previous examples together, we can sum up:
\begin{corollary}
  The class of quotients of $\Nom$-liftings contains the constant
  functors, the identity functor, $\Potf$, the abstraction functor
  $[\V]$, and the functor $(-)^P$ for any orbit-finite strong nominal
  set $P$, and is closed under coproducts, finite products,
  composition, and quotients.
\end{corollary}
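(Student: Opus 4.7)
The plan is to verify each of the stated membership claims and each of the four closure properties in turn, by assembling ingredients that have already been established.

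For membership, I would observe that constant functors, the identity, and $\Potf$ are themselves $\Nom$-liftings (trivially for the first two, and via the canonical lifting of Example~\ref{setFunctorList}(2) for $\Potf$), hence are quotients of liftings via the identity natural transformation. The abstraction functor $[\V]$ is, by its very definition (Definition~\ref{def:abstr}), a quotient of $\V\times(-)$, and $\V\times(-)$ is a polynomial functor and hence a lifting (Example~\ref{setFunctorList}(1)). Finally, $(-)^P$ for $P$ an orbit-finite strong nominal set is a quotient of a polynomial functor by Corollary~\ref{mainExponentiation}.

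For closure under coproducts, finite products, and quotients, I would proceed directly. Given quotients $q_i: F_i \twoheadrightarrow H_i$ with each $F_i$ a lifting, Lemma~\ref{lem:localizable-closure} guarantees that $\coprod_i F_i$ and $F_1\times F_2$ are again liftings, and the componentwise natural transformations $\coprod q_i$ and $q_1\times q_2$ remain surjective since coproducts and products of epimorphisms in $\Nom$ are epimorphisms (being computed as in \Set). Closure under quotients is immediate: if $H$ is a quotient of a lifting $F$ via $q$, and $H\twoheadrightarrow K$ is any further quotient, then the composite $F\twoheadrightarrow H\twoheadrightarrow K$ is again epi by composition of surjections.

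The only case requiring genuine care is closure under composition, which I expect to be the main (though still modest) obstacle. Given $q_i: F_i\twoheadrightarrow H_i$ with $F_1, F_2$ liftings, $F_1F_2$ is a lifting by Lemma~\ref{lem:localizable-closure}, and by Lemma~\ref{lem:quot} it suffices to show that $F_1$ preserves epimorphisms. This is where the lifting property is used essentially: every $\Nom$-lifting $F$ of a \Set-functor $F_0$ preserves epimorphisms, since epimorphisms in $\Nom$ are precisely the surjective equivariant maps, the underlying maps $U Ff = F_0 Uf$ are surjective because \Set-functors preserve surjections (via the axiom of choice), and equivariance is preserved by functoriality. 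Hence $(q_1 H_2)\cdot (F_1 q_2): F_1F_2\twoheadrightarrow F_1H_2\twoheadrightarrow H_1H_2$ is a composite of epimorphisms, exhibiting $H_1H_2$ as a quotient of the lifting $F_1F_2$, which completes the argument.
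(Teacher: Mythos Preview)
Your proof is correct and takes essentially the approach the paper intends; the corollary is stated there as a summary without proof, and your argument supplies the natural assembly of the preceding results.

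One minor imprecision: you invoke Lemma~\ref{lem:localizable-closure} for the fact that liftings are closed under coproducts, finite products, and composition, but that lemma is stated for \emph{localizable} liftings specifically. The closure of general $\Nom$-liftings under these operations is of course true and follows from the same observation (creation of finite limits and colimits by $U$, plus $U\bar F\bar G = FU\bar G = FGU$ for composition), but it is not literally what that lemma asserts. You could simply argue this directly in one line from $U\bar F = FU$ rather than citing the lemma. Everything else---the membership checks, the epi-preservation argument for liftings via the axiom of choice in $\Set$, and the use of Lemma~\ref{lem:quot} for composition---is exactly right.
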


\section{Conclusions and Future Work}\label{sec:conc}

We have identified a sufficient criterion for the rational fixpoint
$\varrho\bar F$ of a functor $\bar F$ on $\Nom$ that lifts a functor
$F$ on $\Set$ to arise as a lifting of the rational fixpoint
$\varrho F$ of $F$. Moreover, we have given a sufficient condition
that guarantees that rational fixpoints survive quotienting of
functors on $\Nom$, that is, for the rational fixpoint $\varrho H$ of
a quotient $H$ of a $\Nom$-functor $G$ to be a quotient of the
rational fixpoint $\varrho G$ of $G$. In combination, these results
yield a description of the rational fixpoint for quotients of liftings
of \Set-functors to $\Nom$. This applies in particular to functors
arising from combinations of binding signatures and exponentiation by
orbit-finite strong nominal sets. This includes type functors arising
in the study of nominal automata, which typically contain
exponentiation as in the functor~$2 \times X^\V \times [\V] X$
defining deterministic nominal automata~\cite{KozenEA15}.

It remains to explore the scope of these results, and possibly extend
them. Specifically, it is not currently clear how restrictive our
sufficient condition on rational fixpoints of liftings actually is; we
do give an example of a lifting that violates the condition, and for
which indeed the fixpoint of the underlying functor does not lift, but
that example is somewhat contrived and moreover can be dealt with by
moving to an isomorphic functor. Our condition on quotients of
functors in Theorem~\ref{thm:quot} makes explicit reference to coalgebras
of the quotient; the presence of a sub-strength then is a condition
that refers only to the structure of the quotiented functor as such,
without mentioning its coalgebras. We leave a closer analysis of these
conditions to future work, e.g.~the question whether there are weaker
conditions implying the condition on quotients in
Theorem~\ref{thm:quot}. 
\bibliographystyle{myabbrv}      
\bibliography{refs}   

\begin{appendix}

\normalsize
\section*{Appendix: Finitary Functors and Preservation of Strong Epimorphisms}
\renewcommand{\thesection}{A}
We prove that for finitary functors between locally finitely
presentable categories, preservations of strong epimorphisms may be
tested on strong epimorphisms with finitely generated domain and
codomain.

Let $\C$ be a locally finitely presentable category. Recall that an
object $C$ of $\C$ is \emph{finitely generated} (fg) if its covariant
hom-functor $\C(X,-)$ preserves directed unions. Further recall that
every object of $\C$ is the directed union of all its fg subobjects
and that $\C$ has (strong epi, mono) factorizations (see
\cite[Proposition~1.61 and Theorem~1.70]{ar94}).

Note that in general the classes of finitely presentable and finitely
generated objects do not coincide. However, in the category \Nom of
nominal sets, the finitely generated objects are precisely the
orbit-finite nominal sets and the strong epimorphisms are the
surjective equivariant maps (i.e.~all epis are strong).

\begin{lemma} \label{colimitepi}
    For any directed diagram $D: (I,\le) \to \C$ of subobjects $m_i: C_i\rightarrowtail
    C$ of $C$, the colimit $(d_i: C_i\to \colim D)_{i\in I}$ is obtained by
    taking the (strong epi,mono)-factorization of $\coprod C_i
    \xrightarrow{[m_i]} C$.
\end{lemma}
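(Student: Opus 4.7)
The plan is to show that the map $[m_i]$ factors as $\coprod_i C_i \epito \colim D \rightarrowtail C$ via two canonical morphisms: the comparison map $q \colon \coprod_i C_i \to \colim D$ characterized by $q \cdot \iota_i = d_i$ (where $\iota_i \colon C_i \to \coprod_i C_i$ are the coproduct injections), and the mediating map $u \colon \colim D \to C$ induced by the cocone $(m_i)_{i \in I}$, characterized by $u \cdot d_i = m_i$. The identity $u \cdot q = [m_i]$ is immediate from these defining properties. It therefore suffices to verify that $q$ is a strong epimorphism and $u$ is a monomorphism; the claim then follows from the essential uniqueness of (strong epi, mono) factorizations.

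Strong epicity of $q$ is the easier ingredient: every colimit of a small diagram in $\C$ may be realized as the coequalizer of a standard parallel pair $\coprod_{i \le j} C_i \rightrightarrows \coprod_i C_i$, whose one leg uses coproduct injections and whose other leg uses the connecting morphisms $D(i \le j)$. The coequalizer arrow of this pair is precisely $q$, so $q$ is regular, hence strong, epimorphic.

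The main obstacle is proving that $u$ is monic. Here I would invoke the standard fact that in any locally finitely presentable category the directed colimit of a diagram of monomorphisms is a monomorphism; the connecting morphisms $D(i \le j)$ are automatically monic since $m_j \cdot D(i \le j) = m_i$ with $m_i$ mono, so the colimit cocone $(d_i)$ is a directed colimit of monos in the arrow category. More explicitly, to show $u \cdot f = u \cdot g$ implies $f = g$ for $f, g \colon X \to \colim D$, one reduces to the case that $X$ is finitely presentable, by writing $X$ as the directed colimit of its fp subobjects. For such $X$, both $f$ and $g$ factor through some $d_i$, and by directedness of $I$ we may pick a common index, giving $f = d_i \cdot f'$ and $g = d_i \cdot g'$. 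Then
\[
    m_i \cdot f' = u \cdot d_i \cdot f' = u \cdot f = u \cdot g = u \cdot d_i \cdot g' = m_i \cdot g',
\]
and monicity of $m_i$ forces $f' = g'$, whence $f = g$. Combining the two verifications with $u \cdot q = [m_i]$ exhibits the required factorization, and uniqueness of (strong epi, mono) factorizations identifies it with the one in the statement.
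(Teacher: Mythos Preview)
Your argument is correct and follows essentially the same approach as the paper: factor $[m_i]$ as the comparison map $q=[d_i]\colon\coprod_i C_i\to\colim D$ followed by the mediating map $u\colon\colim D\to C$, then show $q$ is strong epi and $u$ is mono. The paper obtains monicity of $u$ by citing \cite[Proposition~1.62(ii)]{ar94} (a cocone of monomorphisms on a directed diagram has monic factorizing morphism from the colimit), whereas you spell out a direct proof of precisely that statement via finitely presentable test objects; your argument for strong epicity of $q$ as a coequalizer is the standard justification of the fact the paper states without proof.
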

\begin{proof}
  First note that the $(m_i)_{i\in \D}$ form a cocone, so we have a unique
  $m: \colim D \to C$ with ${m\cdot d_i = m_i}$, and $d_i$ is
  monic. As $\C$ is lfp and both $d_i$ and $m_i$ are monic,
  \cite[Proposition~1.62(ii)]{ar94} implies that $m$ is monic,
  too. Recall that, in general, the copairing of colimit injections
  yields a strong epimorphism $[d_i]: \coprod C_i\to \colim D$. Therefore
  we have the factorization:
    \[
    \begin{tikzcd}
        \coprod C_i \arrow{rr}{[m_i]}
        \arrow[->>]{dr}[below left]{[d_i]}
        & & C
        \\
        & 
        \colim D
        \arrow[>->]{ur}[below right]{m}
    \end{tikzcd}
        \tag*{\qedhere}
    \]
\end{proof}

\begin{lemma}\label{unionsofimages}
  Strong quotients of directed colimits are directed colimits of images. More
  precisely, for a diagram $D: \D\to \C$, given a colimit cocone
  $(c_i: Di \to C)_{i\in \D}$ and a strong epimorphism
  $e: C\twoheadrightarrow B$, define $A_i$ by factorizing $e\cdot c_i$
  into a strong epi and a mono. Then $B$ is the directed colimit of
  the $A_i$ together with the induced monomorphisms.
\end{lemma}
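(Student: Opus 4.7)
The plan is to build the connecting maps $a_{ij}$ via diagonal fill-in, show they are monic, and then apply Lemma~\ref{colimitepi} to identify the colimit with the image of the copairing $[m_i]: \coprod_i A_i \to B$; finally I will verify this copairing is itself a strong epi, so that its image is all of $B$ and the colimit coincides with $B$ with injections $m_i$.

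For $i\le j$ in $\D$ with connecting morphism $D_{ij}: Di\to Dj$ we have $c_j\cdot D_{ij}=c_i$, hence
\[
m_j\cdot e_j\cdot D_{ij} \;=\; e\cdot c_j\cdot D_{ij} \;=\; e\cdot c_i \;=\; m_i\cdot e_i.
\]
Since $e_i$ is strong epi and $m_j$ is mono, diagonal fill-in yields a unique $a_{ij}: A_i\to A_j$ with $a_{ij}\cdot e_i=e_j\cdot D_{ij}$ and $m_j\cdot a_{ij}=m_i$. The equation $m_j\cdot a_{ij}=m_i$ together with monicity of $m_i$ forces $a_{ij}$ to be monic, and uniqueness of diagonal fill-ins immediately gives functoriality. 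Thus $(A_i,a_{ij})$ is a directed diagram of subobjects of $B$ via the cocone $(m_i)$.

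By Lemma~\ref{colimitepi}, the colimit of this diagram is obtained as the (strong epi, mono)-factorization of the copairing $[m_i]: \coprod_i A_i\to B$; it therefore suffices to show that $[m_i]$ is itself strong epi. The key identity is
\[
[m_i]\circ \bigl(\coprod_i e_i\bigr) \;=\; [m_i\cdot e_i] \;=\; [e\cdot c_i] \;=\; e\cdot [c_i].
\]
Here $\coprod_i e_i$ is strong epi as a coproduct of strong epis, $e$ is strong epi by hypothesis, and $[c_i]: \coprod_i Di\to C$ is strong epi because the copairing of any colimit cocone is so (factor $[c_i]=m'\cdot e'$ with $m'$ mono; then $e'\cdot \iota_j$ is a cocone on $D$, hence factors through $C$ by a map that splits $m'$, making $m'$ iso). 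The composite $e\cdot [c_i]$ is therefore strong epi. Strong epis satisfy the cancellation property that if $f\cdot g$ is strong epi then $f$ is strong epi (any mono-factorization of $f$ induces one of $f\cdot g$, whose mono part must be iso), so $[m_i]$ is strong epi. Its image factorization thus has an invertible mono part, which identifies the colimit of $(A_i,a_{ij})$ with $B$ and the colimit injections with $m_i$.

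The argument is essentially a routine chase through the factorization system, and no step presents a genuine obstacle; the mildly non-obvious ingredient is the standard fact that the copairing of a colimit cocone is a strong epimorphism, which is what turns the calculation $[m_i]\circ\coprod_i e_i = e\cdot [c_i]$ into a proof that $[m_i]$ is strong epi.
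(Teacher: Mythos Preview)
Your proof is correct and follows essentially the same approach as the paper: construct the connecting monos by diagonal fill-in, establish the identity $[m_i]\cdot\coprod e_i = e\cdot[c_i]$, deduce that $[m_i]$ is strong epi, and invoke Lemma~\ref{colimitepi}. You supply a bit more detail (the argument that the copairing of a colimit cocone is strong epi, and the cancellation property for strong epis) where the paper simply asserts these facts, but the structure is identical.
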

\begin{proof}
    For each $i\in \D$, take the (strong epi,mono)-factorization
    \[
    \begin{tikzcd}
        Di
            \arrow[shiftarr={yshift=5mm}]{rr}{e\cdot c_i}
            \arrow[->>]{r}{e_i}
        & A_i
            \arrow[>->]{r}{m_i}
        & B.
    \end{tikzcd}
    \]
    For any morphism $g: Di\to Dj$ we get a morphism $\bar g: A_i\rightarrowtail A_j$ by diagonalization:
    \[
        \begin{tikzcd}
            Di\arrow[->>]{r}{e_i}
            \arrow{d}[left]{g}
            & A_i \arrow[>->]{dr}[above right]{m_i}
            \arrow[dashed,>->]{d}[left]{\bar g}
            \\
            Dj\arrow[->>]{r}{e_j}
            &
            A_j \arrow[>->]{r}[below]{m_j}
            &
            B
        \end{tikzcd}
    \]
    Since $d_j\cdot\bar g = d_i$, we see that $\bar g$ is monic. It is
    easy to see that the $A_i$ form a directed diagram of monos in
    $\C$.  To see that $B$ is indeed its colimit, consider the square
    \[
        \begin{tikzcd}
            \coprod_{i} Di
            \arrow[->>]{r}{[c_i]}
            \arrow[->>]{d}[left]{\coprod e_i}
            &
            C
            \arrow[->>]{d}{e}
            \\
            \coprod A_i
            \arrow{r}[below]{[m_i]}
            &
            B
        \end{tikzcd}
    \]
    which commutes by the definition of $e_i$ and $m_i$. The copairing
    of the colimit injections $[c_i]$ is a strong epi, hence so is
    $e\cdot [c_i]$. Since $\coprod e_i$ is a strong epi as well, we
    see that $[m_i]$ is a strong epi. By Lemma~\ref{colimitepi}, it
    follows that $B$ is the colimit of the $A_i$ as desired.
\end{proof}

\begin{proposition}
    \label{finitaryEpiPreservation}
    Let $\C$ and $\D$ be locally finitely presentable categories, and let 
    $F: \C\to\D$ be a finitary functor preserving strong epimorphisms
    with finitely generated domain and codomain. Then $F$
    preserves all epimorphisms.
\end{proposition}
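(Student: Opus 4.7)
The plan is to reduce the case of an arbitrary strong epimorphism $e\colon A \twoheadrightarrow B$ to the hypothesis about strong epis between finitely generated objects, by exploiting that $\C$ is lfp and $F$ is finitary.

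First, I would write $A$ as the directed colimit of its finitely generated subobjects $\iota_i\colon A_i \hookrightarrow A$. Applying Lemma~\ref{unionsofimages} to this diagram and the strong epi $e$ yields for each $i$ a (strong epi, mono) factorization $e \cdot \iota_i = m_i \cdot e_i$ with $e_i\colon A_i \twoheadrightarrow B_i$ and $m_i\colon B_i \hookrightarrow B$, and exhibits $B$ as the directed colimit of the $B_i$ via the $m_i$. Each $B_i$ is finitely generated, being a strong quotient of the finitely generated $A_i$, so the hypothesis applies and each $Fe_i$ is a strong epi.

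Next, I would use finitarity of $F$, giving $FA = \colim FA_i$ and $FB = \colim FB_i$ with structure maps $F\iota_i$ and $Fm_i$. Passing to coproducts, the copairings $[F\iota_i]\colon \coprod FA_i \to FA$ and $[Fm_i]\colon \coprod FB_i \to FB$ are copairings of colimit cocones and hence strong (in fact regular) epis. I would invoke the standard fact that a coproduct of strong epis is again a strong epi: given any factorization $\coprod Fe_i = m\cdot g$ with $m$ monic, each $Fe_i$ factors through $m$ by diagonalization, and copairing these diagonals provides a splitting of $m$, forcing the monic $m$ to be an isomorphism.

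Finally, the commuting square
\begin{equation*}
\begin{tikzcd}
\coprod FA_i \arrow{r}{\coprod Fe_i} \arrow{d}[left]{[F\iota_i]} & \coprod FB_i \arrow{d}{[Fm_i]} \\
FA \arrow{r}{Fe} & FB
\end{tikzcd}
\end{equation*}
shows that $Fe \cdot [F\iota_i] = [Fm_i] \cdot \coprod Fe_i$ is a composite of two strong epis, hence itself a strong epi. The cancellation property (if $g\cdot f$ is strong epi, so is $g$) then gives that $Fe$ is a strong epi; the statement for all epimorphisms follows in the intended application since in $\Nom$ every epi is strong. The main obstacle is the clean bookkeeping of the two auxiliary folklore facts used above — that coproducts of strong epis and copairings of colimit cocones are strong epis in any cocomplete category — which both require explicit diagonalization arguments.
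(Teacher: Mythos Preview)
Your proof is correct and follows essentially the same approach as the paper's: write the domain as a directed colimit of finitely generated subobjects, factor each composite through its image to obtain strong epis between finitely generated objects, invoke Lemma~\ref{unionsofimages} to realize the codomain as the directed colimit of these images, and then use finitarity of $F$ together with the resulting commuting square to conclude. The only cosmetic difference is that the paper phrases the final step in terms of jointly strongly epic families $(Fm_i)$ and $(Fc_i)$, whereas you pass to the equivalent formulation via copairings $[Fm_i]$ and $[F\iota_i]$ from the coproduct; these are the same argument.
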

\begin{proof}
  Let $e: X\to Y$ be a strong epimorphism.  Write $X$ as the colimit
  of the directed diagram of all its finitely generated subobjects
  $c_i: X_i\to X$. Take the (strong epi, mono)-factorizations of all
  $e_i \dot c$:
    \[
    \begin{tikzcd}
        X_i
            \arrow[shiftarr={yshift=5mm}]{rr}{e\cdot c_i}
            \arrow[->>]{r}{e_i}
        & A_i
            \arrow[>->]{r}{m_i}
        & Y.
    \end{tikzcd}
    \]
    Note that each $A_i$ is finitely generated, being a strong
    quotient of the finitely generated object~$X_i$. By
    Lemma~\ref{unionsofimages}, $Y$ is the directed colimit of the $A_i$
    with colimit injections $m_i$. This directed colimit is preserved
    by the finitary functor $F$, resulting in a colimit cocone
    $(Fm_i: FA_i \to FY)$. The family of colimit injections $F m_i$ is
    jointly strongly epic. By assumption, each of the strong
    epimorphisms $e_i$ is preserved by $F$. Hence the
    $F m_i\cdot Fe_i$ form a jointly strongly epic family. Since the colimit
    injections $Fc_i$ form jointly strongly epic family and 
    \[
        Fe\cdot Fc_i = Fm_i\cdot Fe_i,
    \]
    we conclude that $Fe$ is a strong epimorphism.
\end{proof}
\end{appendix}

\end{document}